\newcommand{\be}{\begin{equation}}
\newcommand{\ee}{\end{equation}}
\newcommand{\bC}{\mathbb{C}}
\newcommand{\bN}{\mathbb{N}}
\newcommand{\bR}{\mathbb{R}}
\newcommand{\bZ}{\mathbb{Z}}
\newcommand{\cA}{\mathcal{A}}
\newcommand{\cB}{\mathcal{B}}
\newcommand{\cF}{\mathcal{F}}
\newcommand{\cH}{\mathcal{H}}
\newcommand{\cM}{\mathcal{M}}
\newcommand{\cO}{\mathcal{O}}
\newcommand{\cP}{\mathcal{P}}
\newcommand{\cQ}{\mathcal{Q}}
\newcommand{\cR}{\mathcal{R}}
\newcommand{\cV}{\mathcal{V}}
\newcommand{\mfa}{\mathfrak{a}}
\newcommand{\mfc}{\mathfrak{c}}
\newcommand{\mfg}{\mathfrak{g}}
\newcommand{\mfgl}{\mathfrak{gl}}
\newcommand{\mfk}{\mathfrak{k}}
\newcommand{\mfm}{\mathfrak{m}}
\newcommand{\mfp}{\mathfrak{p}}
\newcommand{\mfL}{\mathfrak{L}}
\newcommand{\mfu}{\mathfrak{u}}
\newcommand{\mfX}{\mathfrak{X}}
\newcommand{\bsalpha}{\boldsymbol{\alpha}}
\newcommand{\bsbeta}{\boldsymbol{\beta}}
\newcommand{\bsa}{\boldsymbol{a}}
\newcommand{\bsb}{\boldsymbol{b}}
\newcommand{\bsc}{\boldsymbol{c}}
\newcommand{\bsd}{\boldsymbol{d}}
\newcommand{\bsu}{\boldsymbol{u}}
\newcommand{\bsone}{\boldsymbol{1}}
\newcommand{\bsLambda}{\boldsymbol{\Lambda}}
\newcommand{\bsA}{\boldsymbol{A}}
\newcommand{\bsC}{\boldsymbol{C}}
\newcommand{\bsX}{\boldsymbol{X}}
\newcommand{\tcA}{\tilde{\cA}}
\newcommand{\tbsa}{\tilde{\bsa}}
\newcommand{\tbsb}{\tilde{\bsb}}
\newcommand{\tbsc}{\tilde{\bsc}}
\newcommand{\tbsd}{\tilde{\bsd}}
\newcommand{\tbsu}{\tilde{\bsu}}
\newcommand{\hcA}{\hat{\cA}}
\newcommand{\hcB}{\hat{\cB}}
\newcommand{\hbsa}{\hat{\bsa}}
\newcommand{\hbsb}{\hat{\bsb}}
\newcommand{\hbsc}{\hat{\bsc}}
\newcommand{\hbsd}{\hat{\bsd}}
\newcommand{\hbsu}{\hat{\bsu}}
\newcommand{\ccA}{\check{\cA}}
\newcommand{\ri}{\mathrm{i}}
\newcommand{\rr}{\mathrm{r}}
\newcommand{\dd}{\mathrm{d}}
\newcommand{\diag}{\mathrm{diag}}
\newcommand{\reg}{\mathrm{reg}}
\newcommand{\tr}{\mathrm{tr}}
\newcommand{\ext}{\mathrm{ext}}
\newcommand{\red}{\mathrm{red}}
\newcommand{\Id}{\mathrm{Id}}
\newcommand{\ad}{\mathrm{ad}}
\newcommand{\wad}{\widetilde{\mathrm{ad}}}
\newcommand{\Ad}{\mathrm{Ad}}
\newcommand{\ran}{\mathrm{ran}}
\newcommand{\Real}{\mathrm{Re}}
\newcommand{\Imag}{\mathrm{Im}}
\newcommand{\eps}{\epsilon}
\newcommand{\veps}{\varepsilon}
\newcommand{\half}{\frac{1}{2}}
\newcommand{\sq}{/\!\!/}
\newcommand{\acts}{\cdot}
\newcommand{\PD}[2]{\frac{\partial #1}{\partial #2}}
\theoremstyle{plain}
\newtheorem{THEOREM}{Theorem}
\newtheorem{LEMMA}[THEOREM]{Lemma}
\newtheorem{PROPOSITION}[THEOREM]{Proposition}
\begin{document}

\begin{center}
    \Large{\textbf{On the classical $r$-matrix structure of the rational 
    $BC_n$ Ruijsenaars--Schneider--van Diejen system}}
\end{center}

\bigskip
\begin{center}
    B.G.~Pusztai \\
    Bolyai Institute, University of Szeged, \\
    Aradi v\'ertan\'uk tere 1, H-6720 Szeged, Hungary, \emph{and} \\
    MTA Lend\"ulet Holographic QFT Group, Wigner RCP, \\
    H-1525 Budapest 114, P.O.B. 49, Hungary \\
    e-mail: \texttt{gpusztai@math.u-szeged.hu}
\end{center}

\bigskip
\begin{abstract}
In this paper, we construct a quadratic $r$-matrix structure for the classical
rational $BC_n$ Ruijsenaars--Schneider--van Diejen system with the maximal
number of three independent coupling parameters. As a byproduct, we provide 
a Lax representation of the dynamics as well.

\bigskip
\noindent
\textbf{Keywords:} 
\emph{Integrable systems; Ruijsenaars--Schneider--van Diejen models; 
Dynamical $r$-matrices}

\smallskip
\noindent
\textbf{MSC (2010):} 70E40, 70G65, 70H06

\smallskip
\noindent
\textbf{PACS number:} 02.30.Ik
\end{abstract}
\newpage

\section{Introduction}
\label{SECTION_Introduction}
\setcounter{equation}{0}
The Ruijsenaars--Schneider--van Diejen (RSvD) models \cite{Ruij_Schneider,
van_Diejen_TMP1994} are among the most intensively studied integrable many 
particle systems, having numerous relationships with different branches of 
theoretical physics and pure mathematics. They had found applications first 
in the theory of the soliton equations \cite{Ruij_Schneider, Ruij_CMP1988, 
Babelon_Bernard, Ruij_RIMS_2, Ruij_RIMS_3}, but soon they appeared in the 
Yang--Mills and the Seiberg--Witten theories as well (see e.g. 
\cite{Gorsky_Nekrasov_1995, Nekrasov_1999, Braden_et_al_NPB1999, 
Nekrasov_Shatasvili_2009, Gadde_et_al_JHEP2014}). Besides these well-known 
links, the RSvD systems and their non-relativistic limits, the 
Calogero--Moser--Sutherland (CMS) systems \cite{Calogero, Sutherland, 
Moser_1975}, have appeared in the context of random matrix theory, too. 
Making use of the action-angle duality between the different variants of 
the CMS and the RSvD systems, new classes of random matrix ensembles emerged 
in the literature \cite{Bogomolny_et_al_PRL2009, 
Bogomolny_et_al_Nonlinearity2011, Fyodorov_Giraud_2015}, exhibiting 
spectacular statistical properties. Under the name of classical/quantum 
duality, it has also been observed that the Lax matrices of the CMS and 
the RSvD models encode the spectra of certain quantum spin chains, thereby 
the purely classical models provide an alternative way to analyze the 
quantum systems, without any reference to the celebrated Bethe Ansatz 
techniques (for details see e.g. \cite{Mukhin_et_al_2011, 
Alexandrov_et_al_NPB2014, Gorsky_et_al_JHEP2014, Tsuboi_et_al_JHEP2015}). 
It is also worth mentioning that in the recent papers 
\cite{Aminov_et_al_2014, Levin_et_al_JHEP2014} the authors have constructed 
new integrable tops, closely related to the CMS and the RSvD particle 
systems. Besides the Lax representation of the dynamics, in their studies 
the associated $r$-matrix structures also turn out to be indispensable. 

The characteristic feature the above exciting new developments all share in 
common is the prominent role played by the Lax matrices of the CMS and the 
RSvD models. However, all these investigations are based on the translational 
invariant models associated with the $A_n$ root system, exclusively. Apart 
from the technical difficulties, the probable explanation of this state of 
affair is the very limited knowledge about the Lax representation of the RSvD 
models in association with the non-$A_n$-type root systems. Of course, one 
can easily construct Lax representations for both the $C_n$-type and the 
$BC_n$-type RSvD models by the $\bZ_2$-folding of the $A_{2 n - 1}$ and the 
$A_{2 n}$ root systems, respectively \cite{Chen_et_al_JMP2000}. However, this 
trivial approach is only of very limited use, since the resulting models 
contain only a single coupling parameter. Nevertheless, working in a symplectic 
reduction framework, in our papers \cite{Pusztai_NPB2011, Pusztai_NPB2012} 
we succeeded in constructing Lax matrices for the rational $C_n$ and the
rational $BC_n$ RSvD systems with the maximal number of independent coupling
constants. Motivated by the plethora of potential applications outlined above, 
in this paper we work out the underlying classical $r$-matrix structures and 
also provide a Lax representation of the dynamics for the rational $BC_n$ RSvD 
model with three independent coupling parameters. 

Let us recall that the configuration space of the rational $BC_n$ RSvD system 
is the open subset
\be
    \mfc 
    = \{ \lambda = (\lambda_1, \ldots, \lambda_n) \in \bR^n
        \, | \,
        \lambda_1 > \ldots > \lambda_n > 0 \}
    \subseteq \bR^n,
\label{mfc}
\ee
that can be seen as an appropriate model for the standard open Weyl chamber
of type $BC_n$. The cotangent bundle $T^* \mfc$ is trivial, whence the 
phase space of the RSvD system can be identified with the product manifold
\be
    \cP^R 
    = \mfc \times \bR^n
    = \{ (\lambda, \theta) 
        \, | \, 
        \lambda \in \mfc \, , \theta \in \bR^n \},
\label{cP_R}
\ee
that we endow with the symplectic form
\be
    \omega^R = 2 \sum_{c = 1}^n \dd \theta_c \wedge \dd \lambda_c.
\label{omega_R}
\ee
We mention in passing that the unusual numerical factor in $\omega^R$ is 
inserted purely for consistency with our earlier works 
\cite{Pusztai_NPB2011, Pusztai_NPB2012}. As for the dynamics, it is 
governed by the Hamiltonian
\be
\begin{split}
    H^R 
    = & \sum_{c = 1}^{n} \cosh(2 \theta_c)
        \left( 
            1 + \frac{\nu^2}{\lambda_c^2} 
        \right)^\frac{1}{2}
        \left( 
            1 + \frac{\kappa^2}{\lambda_c^2} 
        \right)^\frac{1}{2}
        \prod_{\substack{d = 1 \\ (d \neq c)}}^{n}
        \left( 
            1 + \frac{4 \mu^2}{(\lambda_c - \lambda_d)^2} 
        \right)^\frac{1}{2}
        \left( 
            1 + \frac{4 \mu^2}{(\lambda_c + \lambda_d)^2} 
        \right)^\frac{1}{2}
    \\
    & + \frac{\nu \kappa}{4 \mu^2} 
        \prod_{c = 1}^n 
            \left(
                1 + \frac{4 \mu^2}{\lambda_c^2}
            \right)
        - \frac{\nu \kappa}{4 \mu^2},
\label{H_R}
\end{split}
\ee
where $\mu$, $\nu$ and $\kappa$ are arbitrary real parameters satisfying 
$\mu < 0 < \nu$. Also, on these so-called coupling constants in this paper 
we impose the condition $\nu \kappa \geq 0$. As can be seen in 
\cite{Pusztai_NPB2013}, this additional requirement ensures that the 
particle system possesses only scattering trajectories. 

Having defined the models of our interest, now we wish to outline the 
content of the rest of the paper. To keep our present work essentially 
self-contained, in Section \ref{SECTION_Preliminaries} we briefly skim 
through the necessary Lie theoretic machinery and the symplectic reduction
background, that provide the building blocks of the latter developments. 
Also, this section allows us to fix the notations. Starting with Section 
\ref{SECTION_C_r_matrix} we present our new results. Section 
\ref{SECTION_C_r_matrix} is the longest and the most technical part 
of our paper, in which we study of the $r$-matrix structure of the rational 
$C_n$ RSvD model corresponding to the special choice $\kappa = 0$. Sticking 
to the Marsden--Weinstein reduction approach, in Subsection 
\ref{SUBSECTION_local_extension} we construct local extensions for the Lax 
matrix of the rational $C_n$ RSvD model. Making use of these local sections, 
in Subsection \ref{SUBSECTION_computing_r} a series of short Propositions 
and Lemmas allows us to construct a classical $r$-matrix structure for the 
$C_n$-type model. In this respect our main result is Theorem 
\ref{THEOREM_C_n_r_matrix}, in which we formulate the $r$-matrix structure 
in a convenient quadratic form. The resulting quadratic $r$-matrices turn out 
to be fully dynamical, depending on all variables of the phase space $\cP^R$. 
Subsequently, by switching to a purely algebraic approach, in Section 
\ref{SECTION_BC_r_matrix} we generalize Theorem \ref{THEOREM_C_n_r_matrix} 
to the rational $BC_n$ RSvD system with three independent coupling constants. 
The quadratic $r$-matrix structure of the $BC_n$-type system is summarized 
in Theorem \ref{THEOREM_BC_n_quadratic_PB}. To make this important result 
more transparent, in Theorem \ref{THEOREM_BC_n_transformed_r_matrices} we 
describe the $r$-matrix structure in a more convenient choice of gauge. 
In this gauge we also provide a Lax representation of the dynamics, as 
formulated in Theorem \ref{THEOREM_Lax_pair}. Finally, in Section 
\ref{SECTION_Discussion} we offer a short discussion on our results and also 
point out some open problems related to the RSvD systems.

\section{Preliminaries}
\label{SECTION_Preliminaries}
\setcounter{equation}{0}
In this section we overview those Lie theoretic notions and results that 
underlie the geometric construction of the classical $r$-matrix structure 
for the rational $C_n$ RSvD system. Our approach is based on the symplectic 
reduction derivation of the RSvD models, that we also briefly outline. In 
Subsection \ref{SUBSECTION_Lie_theoretic_background} we closely follow the 
conventions of the standard reference \cite{Knapp}, whereas in Subsection 
\ref{SUBSECTION_C_RSvD} we employ the notations introduced in our earlier 
work \cite{Pusztai_NPB2011} on the RSvD systems.

\subsection{Lie theoretic background}
\label{SUBSECTION_Lie_theoretic_background}
Take a positive integer $n \in \bN$ and keep it fixed. Let $N = 2 n$ and 
introduce the sets
\be 
    \bN_n = \{ 1, \ldots, n \}
    \quad \text{and} \quad
    \bN_N = \{ 1, \ldots, N \}.
\label{bN_n} 
\ee
With the aid of the $N \times N$ matrix
\be
    \bsC 
    = \begin{bmatrix}
    	0_n & \bsone_n \\
    	\bsone_n & 0_n
    \end{bmatrix}
\label{bsC} 
\ee
we define the non-compact real reductive matrix Lie group
\be
	G = U(n, n) = \{ y \in GL(N, \bC) \, | \, y^* \bsC y = \bsC \},
\label{G}
\ee
that we equip with the Cartan involution
\be
    \varTheta \colon G \rightarrow G,
    \quad
    y \mapsto (y^{-1})^*.
\label{varTheta}
\ee
Its fixed-point set
\be
    K = \{ y \in G \, | \, \varTheta(y) = y \}
\label{K}
\ee
is a maximal compact subgroup of $G$, having the identification
$K \cong U(n) \times U(n)$.

On the Lie algebra
\be
	\mfg 
	= \mfu(u, n) 
	= \{ Y \in \mfgl(N, \bC) \, | \, Y^* \bsC + \bsC Y = 0 \}
\label{mfg}
\ee
the corresponding involution
\be
    \vartheta
    \colon \mfg \rightarrow \mfg,
    \quad
    Y \mapsto - Y^*
\label{vartheta}
\ee
naturally induces the Cartan decomposition
\be
    \mfg = \mfk \oplus \mfp
\label{gradation}
\ee
with the Lie subalgebra and the complementary subspace
\be
    \mfk = \ker(\vartheta - \Id_\mfg) 
    \quad \text{and} \quad
    \mfp = \ker(\vartheta + \Id_\mfg),
\label{mfkp}
\ee
respectively. That is, each element $Y \in \mfg$ can be decomposed as
\be
    Y = Y_+ + Y_-
\label{Y+-}
\ee
with unique components $Y_+ \in \mfk$ and $Y_- \in \mfp$. Notice that the 
$\bZ_2$-gradation (\ref{gradation}) of $\mfg$ is actually orthogonal with 
respect to the non-degenerate $\Ad$-invariant symmetric bilinear form
\be
    \langle \, , \rangle \colon 
    \mfg \times \mfg \rightarrow \bR,
    \quad
    (Y_1, Y_2) \mapsto \tr(Y_1 Y_2).
\label{bilinear_form}
\ee

To make our presentation simpler, for all $k, l \in \bN_N$ we introduce the
standard elementary matrix $e_{k, l} \in \mfgl(N, \bC)$ with entries
\be
    (e_{k, l})_{k', l'} = \delta_{k, k'} \delta_{l, l'}
    \qquad
    (k', l' \in \bN_N).
\label{e_kl}
\ee
Also, with each $\lambda = (\lambda_1, \ldots, \lambda_n) \in \bR^n$ we 
associate the $N \times N$ diagonal matrix
\be
    \bsLambda(\lambda) 
    = \diag(\lambda_1, \ldots, \lambda_n, -\lambda_1, \ldots, -\lambda_n)
    \in \mfp.
\label{bsLambda}
\ee
The set of diagonal matrices
\be
    \mfa = \{ \bsLambda(\lambda) \, | \, \lambda \in \bR^n \} 
\label{mfa}
\ee
forms a maximal Abelian subspace in $\mfp$. Note that in $\mfa$ the family of 
matrices
\be
    D_c^- = \frac{1}{\sqrt{2}} (e_{c, c} - e_{n + c, n + c})
    \qquad
    (c \in \bN_n)
\label{D_-_c}
\ee
forms an orthonormal basis, i.e. 
$\langle D^-_c, D^-_d \rangle = \delta_{c, d}$ for all $c, d \in \bN_n$.

The centralizer of the Lie algebra $\mfa$ inside $K$ is the Abelian Lie group
\be
    M = Z_K(\mfa)
    = \{ \diag(e^{\ri \chi_1}, \ldots, e^{\ri \chi_n}, 
                e^{\ri \chi_1}, \ldots, e^{\ri \chi_n}) 
        \, | \,
        \chi_1, \ldots, \chi_n \in \bR \}
\label{M}
\ee
with Lie algebra
\be
    \mfm = \{ \diag(\ri \chi_1, \ldots, \ri \chi_n, 
                    \ri \chi_1, \ldots, \ri \chi_n) 
            \, | \,
            \chi_1, \ldots, \chi_n \in \bR \}.
\label{mfm}
\ee
In this Abelian Lie algebra the set of matrices
\be
    D_c^+ = \frac{\ri}{\sqrt{2}} (e_{c, c} + e_{n + c, n + c})
    \qquad
    (c \in \bN_n)
\label{D_+_c}
\ee
forms a basis obeying the orthogonality relations
$\langle D^+_c, D^+_d \rangle = - \delta_{c, d}$ $(c, d \in \bN_n)$.

Let $\mfm^\perp$ and $\mfa^\perp$ denote the sets of the off-diagonal elements 
of $\mfk$ and $\mfp$, respectively. With these subspaces can write the refined 
orthogonal decomposition
\be
    \mfg = \mfm \oplus \mfm^\perp \oplus \mfa \oplus \mfa^\perp.
\label{refined_decomposition}
\ee
In other words, each element $Y \in \mfg$ can be uniquely decomposed as
\be
    Y = Y_\mfm + Y_{\mfm^\perp} + Y_\mfa + Y_{\mfa^\perp},
\label{Y_decomp}
\ee
where each component belongs to the subspace indicated by the subscript. In 
order to provide convenient bases in the subspaces $\mfm^\perp$ and 
$\mfa^\perp$, for each $c \in \bN_n$ we introduce the linear functional
\be
	\veps_c \colon \bR^n \rightarrow \bR,
	\quad
	\lambda = (\lambda_1, \ldots, \lambda_n) \mapsto \lambda_c.
\label{varepsilon_c}
\ee
Let us observe that the set of functionals
\be
	\cR_+ 
	= \{ \veps_a \pm \veps_b \, | \, 1 \leq a < b \leq n \}
	\cup
	\{2 \veps_c \, | \, c \in \bN_n \}	
\label{cR_+}
\ee
can be seen as a realization of a set of \emph{positive} roots of type $C_n$. 
Now, associated with the positive root $2 \veps_c$ $(c \in \bN_n)$, we define 
the matrices
\be
	X_{2 \veps_c}^{\pm, \ri} 
	= -\frac{\ri}{\sqrt{2}} (e_{c, n + c} \pm e_{n + c, c}).
\label{X_+_i_2}
\ee
In association with the other positive roots, for all $1 \leq a < b \leq n$ 
we define the following matrices with purely real entries:
\be
\begin{split}
	X_{\veps_a - \veps_b}^{\pm, \rr} &
	= \half (e_{a, b} \mp e_{b, a} 
				\pm e_{n + a, n + b} - e_{n + b, n + a}), \\
	\quad
	X_{\veps_a + \veps_b}^{\pm, \rr} &
	= - \half (e_{a, n + b} - e_{b, n + a} 
				\pm e_{n + a, b} \mp e_{n + b, a}), 
\end{split}
\label{X_pm_rrri_pm}
\ee
together with the following ones with purely imaginary entries:
\be
\begin{split}
	X_{\veps_a - \veps_b}^{\pm, \ri} &
	= \frac{\ri}{2} (e_{a, b} \pm e_{b, a} 
					\pm e_{n + a, n + b} + e_{n + b, n + a}), \\
	\quad
	X_{\veps_a + \veps_b}^{\pm, \ri} &
	= - \frac{\ri}{2} (e_{a, n + b} + e_{b, n + a} 
					\pm e_{n + a, b} \pm e_{n + b, a}). 
\end{split}
\ee
The point is that the set of vectors $\{ X_\alpha^{+, \eps} \}$ forms a basis 
in the subspace $\mfm^\perp$, whereas the family $\{ X_\alpha^{-, \eps} \}$ 
provides a basis in $\mfa^\perp$. Moreover, they obey the orthogonality 
relations
\be
	\langle X^{+, \eps}_\alpha, X^{+, \eps'}_{\alpha'} \rangle
	= - \delta_{\alpha, \alpha'} \delta_{\eps, \eps'}
	\quad \text{and} \quad 
	\langle X^{-, \eps}_\alpha, X^{-, \eps'}_{\alpha'} \rangle
	= \delta_{\alpha, \alpha'} \delta_{\eps, \eps'}.
\label{X_scalar_product}
\ee
Note that the family of vectors
\be
    \{ v_I \} \equiv \{ D^\pm_c \} \cup \{ X^{\pm, \eps}_\alpha \}
\label{v_I}
\ee
forms a basis in the real Lie algebra $\mfu(n, n)$. We mention in passing 
that it is a basis in the complexification 
$\mfgl(N, \bC) \cong \mfu(n, n)^\bC$, too.

Next we turn to the linear operator
\be
    \ad_{\bsLambda(\lambda)} \colon \mfg \rightarrow \mfg,
    \quad
    Y \mapsto [\bsLambda(\lambda), Y],
\label{ad_bsLambda}
\ee
defined for each $\lambda \in \bR^n$. The real convenience of the basis 
(\ref{v_I}) stems from the commutation relations
\be
    \ad_{\bsLambda(\lambda)}(D^\pm_c) = 0
    \quad \text{and} \quad
    \ad_{\bsLambda(\lambda)}(X^{\pm, \eps}_\alpha)
    = \alpha(\lambda) X^{\mp, \eps}_\alpha,
\label{commut_rel}
\ee
where $c \in \bN_n$, $\alpha \in \cR_+$ and $\eps \in \{ \rr, \ri \}$. Notice 
that the subspace $\mfm^\perp \oplus \mfa^\perp$ is invariant under the linear 
operator $\ad_{\bsLambda(\lambda)}$, whence the restriction
\be
    \wad_{\bsLambda(\lambda)} 
    = \ad_{\bsLambda(\lambda)} |_{\mfm^\perp \oplus \mfa^\perp}
    \in \mfgl(\mfm^\perp \oplus \mfa^\perp)
\label{wad}
\ee
is well-defined for all $\lambda \in \bR^n$, with spectrum
\be
    \text{Spec}(\wad_{\bsLambda(\lambda)})
    = \{ \pm \alpha(\lambda) \, | \, \alpha \in \cR_+ \}.
\label{tilde_ad_spectrum}
\ee
The regular part of $\mfa$ is defined by the subset
\be
    \mfa_\reg 
    = \{ \bsLambda(\lambda) 
        \, | \, 
        \lambda \in \bR^n 
        \text{ and } 
        \wad_{\bsLambda(\lambda)} \text{ is invertible} \},
\label{mfa_reg}
\ee
in which the standard Weyl chamber 
$\{ \bsLambda(\lambda) \, | \, \lambda \in \mfc \}$
is an appropriate connected component. Note that this Weyl chamber can be
naturally identified with the configuration space $\mfc$ (\ref{mfc}) of the 
rational $BC_n$ RSvD system.

Having set up the algebraic stage, now we turn to some geometric results that 
are specific to the symplectic reduction derivation of the rational RSvD 
models. First, recall that the regular part of $\mfp$ (\ref{mfkp}) defined by
\be
    \mfp_\reg 
    = \{ k \bsLambda(\lambda) k^{-1} 
        \, | \,
        \lambda \in \mfc \text{ and } k \in K \}
\label{mfp_reg}
\ee
is a dense and open subset of $\mfp$. It is an important fact that with the
smooth free right $M$-action
\be
    M \times (\mfc \times K) \ni 
        (m, (\lambda, k)) \mapsto (\lambda, k m)
    \in \mfc \times K 
\label{M_action}
\ee
the map
\be
    \pi \colon \mfc \times K \twoheadrightarrow \mfp_\reg,
    \quad
    (\lambda, k) \mapsto k \bsLambda(\lambda) k^{-1}
\label{pi}
\ee
is a smooth principal $M$-bundle, providing the identification
\be
    \mfp_\reg \cong (\mfc \times K) / M \cong \mfc \times (K / M).
\label{mfp_reg_identification}
\ee
In the geometric construction of the dynamical $r$-matrix for the rational 
$C_n$ RSvD model we shall utilize certain local sections of $\pi$ with
the characteristic properties below.

\begin{PROPOSITION}
\label{PROPOSITION_e_and_sigma}
Take an arbitrary point $\lambda^{(0)} \in \mfc$ and let 
$\bsLambda^{(0)} = \bsLambda(\lambda^{(0)})$. Then there is a smooth
local section
\be
    \check{\mfp}_\reg \ni 
        Y \mapsto (e(Y), \sigma(Y))
    \in \mfc \times K
\label{e_and_sigma}
\ee
of $\pi$ (\ref{pi}), defined on some open subset 
$\check{\mfp}_\reg \subseteq \mfp_\reg$, such that
\be
    \bsLambda^{(0)} \in \check{\mfp}_\reg,
    \quad
    (e(\bsLambda^{(0)}), \sigma(\bsLambda^{(0)})) = (\lambda^{(0)}, \bsone),
    \quad
    \ran(\sigma_{* \bsLambda^{(0)}}) \subseteq \mfm^\perp.
\label{e_and_sigma_cond}
\ee
Moreover, under these conditions, at the point $\bsLambda^{(0)}$ the action 
of the derivatives of $e$ and $\sigma$ on the tangent vector 
$\delta Y \in \mfp \cong T_{\bsLambda^{(0)}} \check{\mfp}_\reg$ takes the form
\begin{align}
    & e_{* \bsLambda^{(0)}}(\delta Y)
    = (\delta Y_{1, 1}, \ldots, \delta Y_{n, n}) 
        \in \bR^n \cong T_{\lambda^{(0)}} \mfc, 
    \label{e_derivative} \\
    & \sigma_{* \bsLambda^{(0)}}(\delta Y)
    = - (\wad_{\bsLambda^{(0)}})^{-1} ((\delta Y)_{\mfa^\perp})
        \in \mfm^\perp \subseteq \mfk \cong T_{\bsone} K.
    \label{sigma_derivative}
\end{align}
\end{PROPOSITION}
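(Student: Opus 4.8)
The plan is to reduce the existence of the section to an application of the inverse function theorem, after eliminating the gauge freedom in the fiber direction by restricting the group variable to $\mfm^\perp$. Concretely, I would introduce the smooth map
\be
    \Phi \colon \mfc \times \mfm^\perp \rightarrow \mfp_\reg,
    \quad
    (\lambda, \eta) \mapsto \exp(\eta) \, \bsLambda(\lambda) \, \exp(-\eta)
    = \Ad_{\exp(\eta)}(\bsLambda(\lambda)),
\ee
which is well-defined: since $\mfm^\perp \subseteq \mfk \cong T_\bsone K$, we have $\exp(\eta) \in K$, and for $\lambda$ close to $\lambda^{(0)} \in \mfc$ the matrix $\bsLambda(\lambda)$ is regular, so $\Phi(\lambda, \eta) \in \mfp_\reg$. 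Restricting $\eta$ to $\mfm^\perp$ instead of all of $\mfk$ is precisely the device that will enforce the third requirement in (\ref{e_and_sigma_cond}), since $\mfm$ is exactly the Lie algebra of the structure group $M$ of the bundle $\pi$.

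The key step, and the heart of the whole argument, is to verify that the differential $\Phi_{* (\lambda^{(0)}, 0)}$ is a linear isomorphism. A direct computation gives, for $(\delta\lambda, \delta\eta) \in \bR^n \times \mfm^\perp$,
\be
    \Phi_{* (\lambda^{(0)}, 0)}(\delta\lambda, \delta\eta)
    = \bsLambda(\delta\lambda) + [\delta\eta, \bsLambda^{(0)}]
    = \bsLambda(\delta\lambda) - \wad_{\bsLambda^{(0)}}(\delta\eta).
\ee
By the commutation relations (\ref{commut_rel}), the term $\bsLambda(\delta\lambda)$ lies in $\mfa$ while $\wad_{\bsLambda^{(0)}}(\delta\eta)$ lies in $\mfa^\perp$, so the two summands occupy complementary subspaces of $\mfp = \mfa \oplus \mfa^\perp$. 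Since $\lambda \mapsto \bsLambda(\lambda)$ is a linear isomorphism onto $\mfa$, and the regularity of $\bsLambda^{(0)}$ built into the definition (\ref{mfa_reg}) guarantees that $\wad_{\bsLambda^{(0)}} \colon \mfm^\perp \rightarrow \mfa^\perp$ is invertible, the differential is a bijection between spaces of equal dimension. By the inverse function theorem, $\Phi$ restricts to a diffeomorphism from an open neighborhood of $(\lambda^{(0)}, 0)$ onto an open neighborhood $\check{\mfp}_\reg$ of $\bsLambda^{(0)}$ in $\mfp_\reg$; everything beyond this point is bookkeeping.

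Writing $\Phi^{-1} = (e, \eta)$ and setting $\sigma = \exp \circ \, \eta$, I would then read off the assertions in order. The identity $\Phi \circ \Phi^{-1} = \Id$ unfolds to $\sigma(Y) \bsLambda(e(Y)) \sigma(Y)^{-1} = Y$, which is exactly the statement that $(e, \sigma)$ is a local section of $\pi$. Evaluating at $\bsLambda^{(0)}$ yields $e(\bsLambda^{(0)}) = \lambda^{(0)}$ and $\eta(\bsLambda^{(0)}) = 0$, hence $\sigma(\bsLambda^{(0)}) = \bsone$. Because the differential of $\exp$ at the origin is the identity on $\mfk$ and $\eta$ takes values in $\mfm^\perp$, we obtain $\ran(\sigma_{* \bsLambda^{(0)}}) = \ran(\eta_{* \bsLambda^{(0)}}) \subseteq \mfm^\perp$, which completes (\ref{e_and_sigma_cond}). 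Finally, the explicit derivative formulas follow by inverting the expression for $\Phi_{* (\lambda^{(0)}, 0)}$ displayed above: decomposing $\delta Y = (\delta Y)_\mfa + (\delta Y)_{\mfa^\perp}$, the $\mfa$-part forces $e_{* \bsLambda^{(0)}}(\delta Y) = (\delta Y_{1, 1}, \ldots, \delta Y_{n, n})$, as in (\ref{e_derivative}), while the $\mfa^\perp$-part gives $\eta_{* \bsLambda^{(0)}}(\delta Y) = - (\wad_{\bsLambda^{(0)}})^{-1}((\delta Y)_{\mfa^\perp})$, and this coincides with $\sigma_{* \bsLambda^{(0)}}(\delta Y)$ since $\exp_{* 0} = \Id$, recovering (\ref{sigma_derivative}). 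The only place demanding genuine care is the isomorphism property of $\Phi_{* (\lambda^{(0)}, 0)}$, for which the orthogonal refinement (\ref{refined_decomposition}) together with the spectral information (\ref{tilde_ad_spectrum}) is indispensable.
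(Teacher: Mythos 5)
Your proposal is correct and follows essentially the same route as the paper: both arguments rest on the computation $\pi_{*(\lambda^{(0)},\bsone)}(\delta\lambda\oplus\delta k)=\bsLambda(\delta\lambda)-\wad_{\bsLambda^{(0)}}((\delta k)_{\mfm^\perp})$ and on the invertibility of $\wad_{\bsLambda^{(0)}}\colon\mfm^\perp\rightarrow\mfa^\perp$, and both obtain the derivative formulas by projecting the inverted differential onto $\mfa$ and $\mfa^\perp$. The only difference is cosmetic: you realize the section explicitly as the inverse of $(\lambda,\eta)\mapsto\exp(\eta)\bsLambda(\lambda)\exp(-\eta)$ via the inverse function theorem, whereas the paper simply invokes the existence of a section whose derivative range is the chosen complement $\bR^n\oplus\mfm^\perp$ of $\ker(\pi_{*(\lambda^{(0)},\bsone)})=\{0\}\oplus\mfm$.
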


\begin{proof}
Notice that the point $(\lambda^{(0)}, \bsone) \in \mfc \times K$ projects 
onto $\bsLambda^{(0)}$, that is, 
\be
    \pi(\lambda^{(0)}, \bsone) = \bsLambda^{(0)}. 
\label{P1_projection}
\ee    
Differentiating $\pi$ (\ref{pi}) at $(\lambda^{(0)}, \bsone)$, let us observe 
that for each tangent vector
\be
    \delta \lambda \oplus \delta k \in 
        \bR^n \oplus \mfk 
        \cong
        T_{\lambda^{(0)}} \mfc \oplus T_{\bsone} K
        \cong
        T_{(\lambda^{(0)}, \bsone)} (\mfc \times K)
\label{P1_tangent_vector}
\ee
we can write
\be
    \pi_{* (\lambda^{(0)}, \bsone)} (\delta \lambda \oplus \delta k)
    = \left \{ 
        \frac{\dd}{\dd t} 
        \pi(\lambda^{(0)} + t \delta \lambda, e^{t \delta k})
        \right \}_{t = 0}
    = \bsLambda(\delta \lambda) - [ \bsLambda^{(0)}, \delta k ].
\label{P1_pi_der}
\ee
Utilizing the linear operator (\ref{wad}), it is clear that
\be
    \pi_{* (\lambda^{(0)}, \bsone)} (\delta \lambda \oplus \delta k)
    = \bsLambda(\delta \lambda) 
        - \wad_{\bsLambda^{(0)}}((\delta k)_{\mfm^\perp}),
\label{P1_pi_der_OK}
\ee
from where we conclude that
\be
    \ker(\pi_{* (\lambda^{(0)}, \bsone)}) = \{ 0 \} \oplus \mfm.
\label{P1_pi_der_kernel}
\ee
Since the subspace $\bR^n \oplus \mfm^\perp$ is a complementary subspace of 
$\ker(\pi_{* (\lambda^{(0)}, \bsone)})$ in the tangent space
$T_{(\lambda^{(0)}, \bsone)} (\mfc \times K)$, it is evident that there 
exists a local section $(e, \sigma)$ (\ref{e_and_sigma}) satisfying the 
conditions imposed in (\ref{e_and_sigma_cond}). Moreover, by differentiating 
the equation
\be
    \pi \circ (e, \sigma) = \Id_{\check{\mfp}_\reg}
\label{P1_section_rel}
\ee
at the point $\bsLambda^{(0)}$, the relationship (\ref{P1_pi_der_OK}) entails 
that for all $\delta Y \in \mfp \cong T_{\bsLambda^{(0)}} \check{\mfp}_\reg$ 
we can write
\be
    \bsLambda(e_{* \bsLambda^{(0)}} (\delta Y))
        - \wad_{\bsLambda^{(0)}}(\sigma_{* \bsLambda^{(0)}}(\delta Y))
    = \delta Y.
\label{P1_section_der}
\ee
Projecting this equation onto the subspaces $\mfa$ and $\mfa^\perp$, 
respectively, the formulae for the derivatives displayed in 
(\ref{e_derivative}) and (\ref{sigma_derivative}) follow at once.
\end{proof}

To proceed further, we introduce the set of complex column vectors
\be
    S = \{ V \in \bC^N \, | \, \bsC V + V = 0 \text{ and } V^* V = N \},
\label{S}
\ee
that can be naturally identified with a sphere of real dimension $2 n - 1$.
At each point $V \in S$ the tangent space to $S$ can be identified with 
the real subspace of the complex column vectors
\be
    T_V S 
    = \{ \delta V \in \bC^N 
        \, | \, 
        \bsC \delta V + \delta V = 0 
        \text{ and }
        (\delta V)^* V + V^* \delta V = 0 
    \}, 
\label{TS}
\ee
that we endow with the inner product
\be
    \langle \delta V, \delta v \rangle_{T_V S} 
    = \Real((\delta V)^* \delta v)
    \qquad
    (\delta V, \delta v \in T_V S).
\label{TS_inner_prod}
\ee
Next, we introduce the distinguished column vector $E \in S$ with components
\be
    E_a = 1 \quad \text{and} \quad E_{n + a} = -1 \qquad (a \in \bN_n).
\label{E}
\ee
Also, with each vector $V \in S$ we associate the $N \times N$ matrix
\be
    \xi(V) = \ri \mu (V V^* - \bsone) + \ri (\mu - \nu) \bsC \in \mfk.
\label{xi_def}
\ee
Since the $K$-action on $S$ defined by the smooth map
\be
    K \times S \ni (k, V) \mapsto k V \in S
\label{K_action_on_S}
\ee
is transitive, and since $k \xi(V) k^{-1} = \xi(k V)$ for all $k \in K$ and 
$V \in S$, it is clear that the adjoint orbit of $K$ passing through the 
element $\xi(E) \in \mfk$ has the form
\be
    \cO 
    = \{ k \xi(E) k^{-1} \, | \, k \in K \} 
    = \{ \xi(V) \, | \, V \in S \}.
\label{cO}
\ee
As is known, the orbit $\cO$ can be seen as an embedded submanifold of $\mfk$, 
and for its tangent spaces we have the identifications
\be
    T_\rho \cO = \{ [X, \rho] \, | \, X \in \mfk \} \subseteq \mfk
    \qquad
    (\rho \in \cO).
\label{cO_tangent_space}
\ee
In our earlier papers \cite{Feher_Pusztai_NPB2006, Feher_Pusztai_LMP2007, 
Pusztai_NPB2011, Pusztai_NPB2012} we have seen many times that this 
non-trivial minimal adjoint orbit plays a distinguished role in the symplectic 
reduction derivation of both the CMS and the RSvD systems. In this paper,
throughout the construction of a dynamical $r$-matrix for the rational $C_n$ 
RSvD system, we will also exploit that with the free $U(1)$-action
\be
    U(1) \times S \ni (e^{\ri \psi}, V) \mapsto e^{\ri \psi} V \in S
\label{U(1)_action_on_S}
\ee
the map
\be
    \xi \colon S \twoheadrightarrow \cO,
    \quad
    V \mapsto \xi(V) 
\label{xi}
\ee
is a smooth principal $U(1)$-bundle, providing the identification
$\cO \cong S / U(1)$. Recalling (\ref{xi_def}), it is clear that the 
derivative of $\xi$ takes the form
\be
    \xi_{* V} (\delta V) 
    = \ri \mu ((\delta V) V^* + V (\delta V)^*) \in T_{\xi(V)} \cO
    \qquad
    (V \in S, \: \delta V \in T_V S),
\label{xi_der}
\ee
whence it follows that 
\be
    \ker(\xi_{* V}) = \bR \ri V
    \quad \text{and} \quad
    (\ker(\xi_{* V}))^\perp 
    = \{ \delta v \in T_V S 
        \, | \,
        (\delta v)^* V = V^* \delta v
    \}.
\label{xi_der_kernel}
\ee
Let us also note that for all $X \in \mfk$ and $V \in S$ we have
$X V \in T_V S$ and
\be
    \xi_{* V} (X V) = [X, \xi(V)] \in T_{\xi(V)} \cO.
\label{xi_der_comm_rel}
\ee
The last two equations entail that for each $\delta V \in T_V S$ one can find 
a Lie algebra element $X \in \mfk$ and a real number $t \in \bR$ such that
\be
    \delta V = X V + t \ri V.
\label{delta_V_and_X_and_t}
\ee
Having determined the derivative of $\xi$, now we shall work out the 
derivatives of certain local sections, that find applications it the latter 
developments.

\begin{PROPOSITION}
\label{PROPOSITION_W}
Let $\cV^{(0)} \in S$ be an arbitrary point and define
$\rho^{(0)} = \xi(\cV^{(0)}) \in \cO$. Take a smooth local section
\be
    W \colon \check{\cO} \rightarrow S,
    \quad
    \rho \mapsto W(\rho)
\label{W}
\ee
of $\xi$ (\ref{xi}), defined on some open subset 
$\check{\cO} \subseteq \cO$, satisfying the conditions
\be
    \rho^{(0)} \in \check{\cO},
    \quad
    W(\rho^{(0)}) = \cV^{(0)},
    \quad
    \ran(W_{* \rho^{(0)}}) \subseteq (\ker(\xi_{* \cV^{(0)}}))^\perp.
\label{W_cond}
\ee 
Then for the derivative of $W$ at the point $\rho^{(0)}$ we have
\be
    W_{* \rho^{(0)}}([X, \rho^{(0)}])
    = X \cV^{(0)} - \frac{(\cV^{(0)})^* X \cV^{(0)}}{N} \cV^{(0)}
    \qquad
    (X \in \mfk).
\label{W_der}
\ee
\end{PROPOSITION}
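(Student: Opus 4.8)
The plan is to exploit that $W$ is a local section of $\xi$, so that $\xi \circ W = \Id_{\check{\cO}}$ holds identically, and to differentiate this relation at $\rho^{(0)} = \xi(\cV^{(0)})$. Using the chain rule together with the normalization $W(\rho^{(0)}) = \cV^{(0)}$ from (\ref{W_cond}), this yields
\be
    \xi_{* \cV^{(0)}} \circ W_{* \rho^{(0)}} = \Id_{T_{\rho^{(0)}} \cO}.
\ee
Applying both sides to a tangent vector of the form $[X, \rho^{(0)}]$ with $X \in \mfk$, I read off that $\xi_{* \cV^{(0)}}(W_{* \rho^{(0)}}([X, \rho^{(0)}])) = [X, \rho^{(0)}]$. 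On the other hand, the commutation relation (\ref{xi_der_comm_rel}) gives $\xi_{* \cV^{(0)}}(X \cV^{(0)}) = [X, \xi(\cV^{(0)})] = [X, \rho^{(0)}]$ as well. Hence both $W_{* \rho^{(0)}}([X, \rho^{(0)}])$ and the vector $X \cV^{(0)}$ are preimages of the same tangent vector under $\xi_{* \cV^{(0)}}$, so they can differ only by an element of $\ker(\xi_{* \cV^{(0)}}) = \bR \ri \cV^{(0)}$, as recorded in (\ref{xi_der_kernel}).

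To isolate the correct preimage, I would write $V := \cV^{(0)}$ and consider the candidate vector
\be
    \delta v = X V - \frac{V^* X V}{N} V,
\ee
which is precisely the right-hand side of (\ref{W_der}). Because $X \in \mfk$ is anti-Hermitian, $X^* = -X$, the scalar $V^* X V$ is purely imaginary, and therefore the subtracted correction term lies in $\bR \ri V = \ker(\xi_{* V})$. In particular $\xi_{* V}(\delta v) = \xi_{* V}(X V) = [X, \rho^{(0)}]$, so $\delta v$ and $W_{* \rho^{(0)}}([X, \rho^{(0)}])$ likewise differ by a kernel element. It remains to see that both of them lie in the \emph{same} complement, namely $(\ker(\xi_{* V}))^\perp$, for then the two vectors coincide.

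The key step is thus to verify that $\delta v \in (\ker(\xi_{* V}))^\perp$ using the characterization in (\ref{xi_der_kernel}), i.e.\ that $(\delta v)^* V = V^* \delta v$. A short computation with $X^* = -X$, $V^* V = N$, and the purely imaginary value of $V^* X V$ shows that in fact $(\delta v)^* V = 0 = V^* \delta v$, so $\delta v$ indeed lies in the orthogonal complement; the analogous membership $W_{* \rho^{(0)}}([X, \rho^{(0)}]) \in (\ker(\xi_{* V}))^\perp$ is built into the hypothesis (\ref{W_cond}). Since $\xi_{* V}$ restricted to $(\ker(\xi_{* V}))^\perp$ is injective and both vectors map to $[X, \rho^{(0)}]$, they must agree, which is exactly (\ref{W_der}). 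The only genuine subtlety — the step I would check most carefully — is the anti-Hermitian bookkeeping, which does double duty: it places the correction term inside the kernel and simultaneously places $\delta v$ in the kernel's orthogonal complement. Everything else is a formal consequence of the principal $U(1)$-bundle structure of $\xi$ and the defining properties of the section $W$.
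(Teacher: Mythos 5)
Your argument is correct and is essentially the paper's own proof: differentiate $\xi \circ W = \Id_{\check{\cO}}$, observe that $W_{* \rho^{(0)}}([X,\rho^{(0)}])$ and $X\cV^{(0)}$ differ by an element of $\ker(\xi_{*\cV^{(0)}}) = \bR \ri \cV^{(0)}$, and use the hypothesis $\ran(W_{*\rho^{(0)}}) \subseteq (\ker(\xi_{*\cV^{(0)}}))^\perp$ to pin down the kernel component. The only cosmetic difference is that the paper solves for the real coefficient $x$ in $\delta W = X\cV^{(0)} + x\ri\cV^{(0)}$, whereas you verify directly that the candidate $X\cV^{(0)} - N^{-1}(\cV^{(0)})^*X\cV^{(0)}\,\cV^{(0)}$ lies in the orthogonal complement; these are the same computation.
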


\begin{proof}
It is evident that there is a smooth local section $W$ of the principal 
$U(1)$-bundle $\xi$ that satisfies the conditions displayed in 
(\ref{W_cond}). Take an arbitrary tangent vector 
$[X, \rho^{(0)}] \in T_{\rho^{(0)}} \cO$ generated by some $X \in \mfk$, 
and introduce the shorthand notation
\be
    \delta W 
    = W_{* \rho^{(0)}}([X, \rho^{(0)}]) 
        \in (\ker(\xi_{* \cV^{(0)}}))^\perp.
\label{delta_W}
\ee
By taking the derivative of the relationship 
$\xi \circ W = \Id_{\check{\cO}}$ at the point $\rho^{(0)}$, we find that
\be
    \xi_{* \cV^{(0)}}(\delta W)
    = \xi_{* \cV^{(0)}} \circ W_{* \rho^{(0)}}([X, \rho^{(0)}])
    = [X, \rho^{(0)}]
    = \xi_{* \cV^{(0)}} (X \cV^{(0)}),
\label{xi_der_delta_W}
\ee
therefore $\delta W - X \cV^{(0)} \in \ker(\xi_{* \cV^{(0)}})$. However, 
due to (\ref{xi_der_kernel}) we can write that
\be
    \delta W = X \cV^{(0)} + x \ri \cV^{(0)}
\label{delta_W_and_x}
\ee
with a unique real number $x$. Its value can determined by the fact 
that the tangent vector $\delta W$ belongs to subspace 
$(\ker(\xi_{* \cV^{(0)}}))^\perp$, leading to the formula (\ref{W_der}). 
\end{proof}

\subsection{The rational $C_n$ RSvD model from symplectic reduction}
\label{SUBSECTION_C_RSvD}
Based on our earlier results, in this subsection we review the symplectic
reduction derivation of the rational $C_n$ RSvD system. The surrounding ideas 
and the proofs can be found in \cite{Pusztai_NPB2011}. An important ingredient 
of the symplectic reduction derivation of the RSvD system of our interest is 
the cotangent bundle $T^* G$ of the Lie group $G$ (\ref{G}). For convenience, 
we trivialize $T^* G$ by the left translations. Moreover, by identifying the 
dual space $\mfg^*$ with the Lie algebra $\mfg$ (\ref{mfg}) via the bilinear 
form (\ref{bilinear_form}), it is clear that the product manifold
$\cP = G \times \mfg$ provides an appropriate model for $T^* G$. For the 
tangent spaces of the manifold $\cP$ we have the natural identifications
\be
    T_{(y, Y)} \cP 
    \cong T_y G \oplus T_Y \mfg \cong T_y G \oplus \mfg
    \qquad
    ((y, Y) \in \cP),
\label{cP_tangent_spaces}
\ee
and for the canonical symplectic form $\omega \in \Omega^2(\cP)$ 
we can write
\be
    \omega_{(y, Y)}(\Delta y \oplus \Delta Y, \delta y \oplus \delta Y)
    = \langle y^{-1} \Delta y, \delta Y \rangle
        - \langle y^{-1} \delta y, \Delta Y \rangle
        + \langle [y^{-1} \Delta y, y^{-1} \delta y], Y \rangle,
\label{omega}
\ee
where $(y, Y) \in \cP$ is an arbitrary point and 
$\Delta y \oplus \Delta Y, \delta y \oplus \delta Y \in T_y G \oplus \mfg$
are arbitrary tangent vectors. An equally important building block in the 
geometric picture underlying reduction derivation of the RSvD model is the 
adjoint orbit $\cO$ (\ref{cO}). Of course, it carries the 
Kirillov--Kostant--Souriau symplectic form $\omega^\cO \in \Omega^2(\cO)$, 
that can be written as
\be
    \omega^\cO_\rho( [X, \rho], [Z, \rho] ) 
    = \langle \rho, [X, Z] \rangle
    \qquad
    (\rho \in \cO, \: X, Z \in \mfk).
\label{omega_cO}
\ee
Making use of the bundle $\xi$ (\ref{xi}) and the equations 
(\ref{xi_der_comm_rel}) and (\ref{delta_V_and_X_and_t}), one can easily 
see that
\be
    \omega^\cO_{\xi(V)}(\xi_{* V}(\delta V), \xi_{* V}(\delta v))
    = 2 \mu \Imag((\delta V)^* \delta v)
    \qquad
    (V \in S, \: \delta V, \delta v \in T_V S).
\label{omega_cO_OK}
\ee

Now, by taking the symplectic product of the symplectic manifolds 
$(\cP, \omega)$ and $(\cO, \omega^\cO)$, we introduce the extended phase space
\be
    (\cP^\ext, \omega^\ext) = (\cP \times \cO, \omega + \omega^\cO).
\label{cP_ext}
\ee
To describe the Poisson bracket on this space, for each smooth function 
$F \in C^\infty(\cP^\ext)$, at each point $u = (y, Y, \rho) \in \cP^\ext$, 
we define the gradients
\be
    \nabla^G F(u) \in \mfg,
    \quad
    \nabla^\mfg F(u) \in \mfg,
    \quad
    \nabla^\cO F(u) \in T_\rho \cO
\label{F_gradients}
\ee
by the natural requirement
\be
    F_{* u}(\delta y \oplus \delta Y \oplus [X, \rho])
    = \langle \nabla^G F(u), y^{-1} \delta y \rangle
        + \langle \nabla^\mfg F(u), \delta Y \rangle
        + \langle \nabla^\cO F(u), X \rangle,
\label{F_gradients_def}
\ee
where $\delta y \in T_y G$, $\delta Y \in \mfg$ and $X \in \mfk$ are
arbitrary elements. Now, one can easily verify that the Poisson bracket on 
$\cP^\ext$ induced by the symplectic form $\omega^\ext$ can be cast into 
the form
\be
\begin{split}
	\{ F, H \}^\ext(u)
	& = \langle \nabla^G F(u), \nabla^\mfg H(u) \rangle
		- \langle \nabla^\mfg F(u), \nabla^G H(u) \rangle \\
    	& \quad - \langle [\nabla^\mfg F(u), \nabla^\mfg H(u)], Y \rangle
		+ \omega^\cO_\rho (\nabla^\cO F(u), \nabla^\cO H(u)),
\end{split}
\label{PB_ext}
\ee
for all $F, H \in C^\infty(\cP^\ext)$. To proceed further, let us note 
that the smooth map
\be
	\Phi^\ext \colon (K \times K) \times \cP^\ext \rightarrow \cP^\ext,
	\quad
	((k_L , k_R), (y, Y, \rho)) 
	\mapsto 
	(k_L y k_R^{-1}, k_R Y k_R^{-1}, k_L \rho k_L^{-1})  
	\label{Phi_ext}
\ee
is a symplectic left action of the product Lie group $K \times K$ on the 
extended phase space $\cP^\ext$, and it admits a $K \times K$-equivariant 
momentum map
\be
	J^\ext \colon \cP^\ext \rightarrow \mfk \oplus \mfk,
	\quad
	(y, Y, \rho)
	\mapsto
	( (y Y y^{-1})_+ + \rho ) \oplus (- Y_+).
	\label{J_ext}
\ee
As we proved in \cite{Pusztai_NPB2011}, the rational $C_n$ RSvD model
can be derived by reducing the symplectic manifold $\cP^\ext$ at the 
zero value of the momentum map $J^\ext$.

Let us recall that the standard Marsden--Weinstein reduction consists of two 
major steps. At the outset, we need control over the level set
\be
    \mfL_0 
    = (J^\ext)^{-1}(\{ 0 \}) 
    = \{ u \in \cP^\ext \, | \, J^\ext(u) = 0 \},
\label{mfL_0}
\ee
that turns out to be an embedded submanifold of $\cP^\ext$ (\ref{cP_ext}). 
However, to get a finer picture, we still need some more background material. 
First, for each $a \in \bN_n$ we define the rational function
\be
    \mfc \ni \lambda
    \mapsto
    z_a(\lambda) 
        = - \left(1 + \frac{\ri \nu}{\lambda_a} \right)
            \prod_{\substack{d = 1 \\ (d \neq a)}}^n
            \left( 1 + \frac{2 \ri \mu}{\lambda_a - \lambda_d} \right)
            \left( 1 + \frac{2 \ri \mu}{\lambda_a + \lambda_d} \right) 
    \in \bC.
\label{z}
\ee
Also, we need the vector-valued function $\cF \colon \cP^R \rightarrow \bC^N$ 
with components
\be 
    \cF_a = e^{\theta_a} \vert z_a \vert^\half
    \quad \text{and} \quad
    \cF_{n + a} = e^{-\theta_a} \overline{z}_a \vert z_a \vert^{-\half}
    \qquad
    (a \in \bN_n),
\label{cF}
\ee
that allows us to introduce the function 
$\cA \colon \cP^R \rightarrow \exp(\mfp)$ with the matrix entries
\be
\begin{split}
    & \cA_{a, b} 
    = \frac{2 \ri \mu \cF_a \overline{\cF}_b}
        {2 \ri \mu + \lambda_a - \lambda_b}, 
    \quad
    \cA_{n + a, n + b} 
    = \frac{2 \ri \mu \cF_{n + a} \overline{\cF}_{n + b}}
        {2 \ri \mu - \lambda_a + \lambda_b}, \\
    & \cA_{a, n + b} 
    = \overline{\cA}_{n + b, a} 
    = \frac{2 \ri \mu \cF_a \overline{\cF}_{n + b}}
        {2 \ri \mu + \lambda_a + \lambda_b}
        + \frac{\ri(\mu - \nu)}{\ri \mu + \lambda_a} \delta_{a, b},
\label{cA}
\end{split}
\ee
where $a, b \in \bN_n$. As we have seen in \cite{Pusztai_NPB2011}, 
function $\cA$ provides a Lax matrix for the rational $C_n$ RSvD model with 
the two independent parameters $\mu$ and $\nu$. Next, let us consider the
smooth function $\cV \colon \cP^R \rightarrow S$ defined by the equation
\be
    \cV = \cA^{-\half} \cF,
\label{cV}
\ee
and also introduce the product manifold 
\be
    \cM^R = \cP^R \times (K \times K) / U(1)_*,
\label{cM_R}
\ee
where $U(1)_*$ stands for the diagonal embedding of $U(1)$ in the product 
group $K \times K$. Having equipped with the above objects, now we are 
in a position to provide a convenient parametrization of the level set 
$\mfL_0$ (\ref{mfL_0}). Indeed, in \cite{Pusztai_NPB2011} we proved that 
the map
\be
    \Upsilon^R \colon \cM^R \rightarrow \cP^\ext
\label{Upsilon_R_def}
\ee
defined by the assignment
\be
    (\lambda, \theta, (\eta_L, \eta_R) U(1)_*)
    \mapsto
    ( \eta_L \cA(\lambda, \theta)^\half \eta_R^{-1},
        \eta_R \bsLambda(\lambda) \eta_R^{-1},
        \eta_L \xi(\cV(\lambda, \theta)) \eta_L^{-1} )
\label{Upsilon_R}
\ee
is a smooth injective immersion with image $\Upsilon^R(\cM^R) = \mfL_0$. 
Moreover, in \cite{Pusztai_NPB2011} we also proved that $\Upsilon^R$ gives
rise to a diffeomorphism from $\cM^R$ onto the embedded submanifold
$\mfL_0$. In other words, the pair $(\cM^R, \Upsilon^R)$ provides a model 
for the level set $\mfL_0$ (\ref{mfL_0}).

To complete the Marsden--Weinstein reduction, notice that the (residual) 
$K \times K$-action on the model space $\cM^R$ (\ref{cM_R}) takes the form
\be
    (k_L, k_R) \acts (\lambda, \theta, (\eta_L, \eta_R) U(1)_*)
    = (\lambda, \theta, (k_L \eta_L, k_R \eta_R) U(1)_*),
\label{KK_on_cM_R}
\ee
thus the orbit space $\cM^R / (K \times K)$ can be naturally identified with 
the base manifold of the trivial principal $(K \times K) / U(1)_*$-bundle
\be
    \pi^R \colon \cM^R \twoheadrightarrow \cP^R,
    \quad
    (\lambda, \theta, (\eta_L, \eta_R)U(1)_*) 
    \mapsto 
    (\lambda, \theta).
\label{pi_R}
\ee
Now, the crux of the matter is the relationship
\be
    (\pi^R)^* \omega^R = (\Upsilon^R)^* \omega^\ext, 
\label{reduced_symplectic_form}
\ee
that we proved in \cite{Pusztai_NPB2011} by applying a chain of delicate 
arguments. Therefore, for the symplectic quotient in question we obtain 
the identification
\be
    (\cP^\ext \sq_0 (K \times K), \omega^\red) \cong (\cP^R, \omega^R).
\label{symplectic_quotient}
\ee
Finally, note that the $K \times K$-invariant function
\be
    f_1 \colon \cP^\ext \rightarrow \bR,
    \quad
    (y, Y, \rho) \mapsto \half \tr(y y^*)
\label{f_1}
\ee
survives the reduction, and by applying straightforward algebraic 
manipulations one can verify that the corresponding reduced function 
coincides with the Hamiltonian of the rational $C_n$ RSvD system with 
two independent coupling parameters $\mu$ and $\nu$, that can be obtained 
from the $BC_n$-type Hamiltonian (\ref{H_R}) by setting $\kappa = 0$.

\section{Dynamical $r$-matrix for the $C_n$-type model}
\label{SECTION_C_r_matrix}
\setcounter{equation}{0}
Building on the symplectic reduction picture outlined in the previous 
subsection, our goal is to construct a classical $r$-matrix structure for 
the $C_n$-type rational RSvD system with two independent coupling parameters. 
In the context of the CMS models, this geometric approach goes back to the 
work of Avan, Babelon, and Talon \cite{Avan_Babelon_Talon_AA1994}. Eventually, 
in our paper \cite{Pusztai_JMP2012}, we succeeded to construct a dynamical 
$r$-matrix for the most general hyperbolic $BC_n$ Sutherland model with three 
independent coupling constants, too. It is worth mentioning that the 
surrounding ideas proves to be fruitful in the broader context of integrable 
field theories as well. For a systematic review see e.g. 
\cite{Braden_et_al_2003}.

As we have seen in \cite{Pusztai_NPB2011}, the eigenvalues of the Lax matrix
$\cA$ (\ref{cA}) do commute, whence it follows from general principles that 
$\cA$ obeys an $r$-matrix Poisson bracket (for proof, see e.g. 
\cite{Babelon_Bernard_Talon, Babelon_Viallet}). However, we wish to make this 
$r$-matrix structure as explicit as possible. For this reason, Subsection 
\ref{SUBSECTION_local_extension} is devoted to the study of certain local 
extensions for the Lax matrix of the rational $C_n$ RSvD model. As it turns 
out, these local extensions are at the heart of the construction of the 
dynamical $r$-matrix structure for the RSvD system, that we elaborate in 
Subsection \ref{SUBSECTION_computing_r}.

\subsection{Local extensions of the Lax matrix $\cA$}
\label{SUBSECTION_local_extension}
The backbone of our reduction approach is the construction of the so-called 
local extensions of the Lax operator $\cA$ (\ref{cA}), that we wish to 
describe below. For this reason, take an arbitrary point 
\be
    (\lambda^{(0)}, \theta^{(0)}) \in \cP^R
\label{fixed_point_in_cP_R}
\ee
and keep it fixed. Clearly the point
\be
    s^{(0)} 
    = (\lambda^{(0)}, \theta^{(0)}, (\bsone, \bsone) U(1)_*)
        \in \cM^R
\label{s_0}
\ee
is one of the representatives of $(\lambda^{(0)}, \theta^{(0)})$ in $\cM^R$ 
(\ref{cM_R}), that is, $\pi^R(s^{(0)}) = (\lambda^{(0)}, \theta^{(0)})$. 
Moreover, let us introduce the shorthand notations
\be
    \cA^{(0)} = \cA(\lambda^{(0)}, \theta^{(0)}),
    \quad
    \cF^{(0)} = \cF(\lambda^{(0)}, \theta^{(0)}),
    \quad
    \cV^{(0)} = \cV(\lambda^{(0)}, \theta^{(0)}),
\label{cA_0_etc}
\ee
together with
\be
    y^{(0)} = (\cA^{(0)})^\half,
    \quad
    Y^{(0)} = \bsLambda^{(0)} = \bsLambda(\lambda^{(0)}),
    \quad
    \rho^{(0)} = \xi(\cV^{(0)}).
\label{y_0_etc}
\ee
Corresponding to $s^{(0)}$ (\ref{s_0}), in the extended phase space we 
also introduce the reference point
\be
    u^{(0)} 
    = \Upsilon^R (s^{(0)})
    = (y^{(0)}, Y^{(0)}, \rho^{(0)}) 
    \in \cP^\ext.
\label{u_0}
\ee 

Now, associated with the elements given in (\ref{cA_0_etc}-\ref{y_0_etc}), 
let us choose a local section $(e, \sigma)$ of $\pi$ (\ref{pi}), and also a 
local section $W$ of $\xi$ (\ref{xi}), as described in Propositions 
\ref{PROPOSITION_e_and_sigma} and \ref{PROPOSITION_W}, respectively. Upon 
defining the open subset
\be
    \check{\mfg} = \{ Y \in \mfg \, | \, Y_- \in \check{\mfp}_\reg \}
    \subseteq \mfg,
\label{mfg_check}
\ee
it is clear that
\be
    \Psi \colon G \times \check{\mfg} \times \check{\cO} \rightarrow \bC^N,
    \quad
    (y, Y, \rho) \mapsto \sigma(Y_-)^{-1} y^* W(\rho)
\label{Psi}
\ee
is a well-defined smooth function. Due to the conditions imposed in the 
equations (\ref{e_and_sigma_cond}) and (\ref{W_cond}), at the point $u^{(0)}$ 
(\ref{u_0}) for the first $n$ components of $\Psi$ we have 
\be
    \Psi_a(u^{(0)}) = \cF_a^{(0)} \qquad (a \in \bN_n).
\label{Psi_first_n_components}
\ee 
Since these components are strictly positive, there is an open 
subset $\check{\cP}^\ext \subseteq G \times \check{\mfg} \times \check{\cO}$
containing the distinguished point $u^{(0)}$, such that for all $a \in \bN_n$ 
the map
\be
    m_a \colon \check{\cP}^\ext \rightarrow U(1),
    \quad
    u \mapsto \frac{\Psi_a(u)}{\vert \Psi_a(u) \vert}
\label{m_a}
\ee
is well-defined and smooth. Let us keep in mind that by construction
$m_a(u^{(0)}) = 1$.

Now we are in a position to define those group-valued functions that play 
the most important role in the construction of a dynamical $r$-matrix for 
the rational $C_n$ RSvD system. First, making use of the functions $m_a$ 
(\ref{m_a}), we build up the $M$-valued function
\be
    m \colon \check{\cP}^\ext \rightarrow M,
    \quad
    u \mapsto \diag(m_1(u), \ldots, m_n(u), m_1(u), \ldots, m_n(u)),
\label{m_def}
\ee
which satisfies $m(u^{(0)}) = \bsone$. Next, we introduce the $K$-valued 
functions
\begin{align}
    & k \colon \check{\cP}^\ext \rightarrow K,
    \quad
    (y, Y, \rho) \mapsto \sigma(Y_-),
    \label{k_def}
    \\
    & \varphi \colon \check{\cP}^\ext \rightarrow K,
    \quad
    u \mapsto k(u) m(u).
    \label{varphi_def}
\end{align}
Clearly we have $k(u^{(0)}) = \varphi(u^{(0)}) = \bsone$. Finally, we 
define the functions
\begin{align}
    & \bsA^{(0)} \colon \check{\cP}^\ext \rightarrow G,
    \quad
    (y, Y, \rho) \mapsto y^* y,
    \label{bsA_0}
    \\
    & \bsA \colon \check{\cP}^\ext \rightarrow G,
    \quad
    u \mapsto \varphi(u)^{-1} \bsA^{(0)}(u) \varphi(u).
    \label{bsA}
\end{align}
Notice that at the point $u^{(0)}$ (\ref{u_0}) we have 
$\bsA^{(0)}(u^{(0)}) = \bsA(u^{(0)}) = \cA^{(0)}$. Having equipped with the 
above objects, now we can formulate the central result of this subsection.

\begin{LEMMA}
\label{LEMMA_cA_and_bsA}
The $G$-valued smooth function $\bsA$ (\ref{bsA}) is a local extension 
of the Lax matrix $\cA$ (\ref{cA}) around the point $u^{(0)}$ in the 
sense that $\bsA(u^{(0)}) = \cA(\lambda^{(0)}, \theta^{(0)})$ and
\be
    \bsA \circ \Upsilon^R \big|_{(\Upsilon^R)^{-1}(\check{\cP}^\ext)}
    = \cA \circ \pi^R \big|_{(\Upsilon^R)^{-1}(\check{\cP}^\ext)}.
\label{cA_and_bsA}
\ee
\end{LEMMA}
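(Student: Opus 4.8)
The plan is to verify the functional identity (\ref{cA_and_bsA}) by a direct computation at an arbitrary point of the level set lying in the chart $\check{\cP}^\ext$. Since $\Upsilon^R$ is a diffeomorphism onto $\mfL_0$, every such point has the form $u = \Upsilon^R(\lambda, \theta, (\eta_L, \eta_R) U(1)_*)$, so that by (\ref{Upsilon_R})
\[
    y = \eta_L \cA^\half \eta_R^{-1}, \quad
    Y = \eta_R \bsLambda(\lambda) \eta_R^{-1}, \quad
    \rho = \xi(\eta_L \cV),
\]
where I abbreviate $\cA = \cA(\lambda, \theta)$, $\cF = \cF(\lambda, \theta)$, $\cV = \cV(\lambda, \theta)$, and I have used $k \xi(V) k^{-1} = \xi(k V)$. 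The target value is $(\cA \circ \pi^R)(\lambda, \theta, \ldots) = \cA$, so it suffices to show $\bsA(u) = \cA$.

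First I would evaluate the constituents of $\bsA$. Since the elements of $K$ are unitary and $\cA^\half \in \exp(\mfp)$ is Hermitian, a short calculation gives $\bsA^{(0)}(u) = y^* y = \eta_R \cA \eta_R^{-1}$. Next, because $\bsLambda(\lambda) \in \mfa \subseteq \mfp$ and $\eta_R \in K$, we have $Y_- = Y = \eta_R \bsLambda(\lambda) \eta_R^{-1} \in \mfp_\reg$ and $Y_+ = 0$, whence $k(u) = \sigma(Y_-)$. Comparing the eigenvalues on both sides of the section relation $\sigma(Y_-) \bsLambda(e(Y_-)) \sigma(Y_-)^{-1} = \eta_R \bsLambda(\lambda) \eta_R^{-1}$ and using that both $e(Y_-)$ and $\lambda$ lie in the open chamber $\mfc$ (where the strict ordering pins down the representative), I obtain $e(Y_-) = \lambda$. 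Consequently $\eta_R^{-1} \sigma(Y_-)$ centralizes the regular element $\bsLambda(\lambda)$, and therefore lies in $M$; writing $\sigma(Y_-) = \eta_R m_0$ with $m_0 \in M$, it remains only to pin down the phase factor $m(u)$.

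The crucial computation is that of $\Psi(u) = \sigma(Y_-)^{-1} y^* W(\rho)$. Since $W$ is a section of the $U(1)$-bundle $\xi$ and $\rho = \xi(\eta_L \cV)$, the two points $W(\rho)$ and $\eta_L \cV$ lie in the same fibre, so $W(\rho) = e^{\ri \psi} \eta_L \cV$ for some real $\psi$. Substituting $\sigma(Y_-)^{-1} = m_0^{-1} \eta_R^{-1}$, $y^* = \eta_R \cA^\half \eta_L^{-1}$, and the defining relation $\cA^\half \cV = \cF$ coming from (\ref{cV}), the factors $\eta_R$ and $\eta_L$ cancel and I get $\Psi(u) = e^{\ri \psi} m_0^{-1} \cF$. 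Because $m_0 = \diag(e^{\ri \chi_1}, \ldots, e^{\ri \chi_n}, e^{\ri \chi_1}, \ldots, e^{\ri \chi_n})$ and each $\cF_a = e^{\theta_a} |z_a|^\half > 0$ for $a \in \bN_n$ (cf. (\ref{cF})), reading off the phases yields $m_a(u) = e^{\ri(\psi - \chi_a)}$, so that $m_0 \, m(u) = e^{\ri \psi} \bsone$.

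Assembling the pieces, $\eta_R^{-1} \varphi(u) = \eta_R^{-1} \sigma(Y_-) m(u) = m_0 \, m(u) = e^{\ri \psi} \bsone$ is a scalar matrix, hence central, and therefore
\[
    \bsA(u) = \varphi(u)^{-1} \bsA^{(0)}(u) \varphi(u)
    = (\eta_R^{-1} \varphi(u))^{-1} \, \cA \, (\eta_R^{-1} \varphi(u))
    = \cA,
\]
which is precisely (\ref{cA_and_bsA}); the value $\bsA(u^{(0)}) = \cA^{(0)}$ has already been recorded. The conceptual heart of the argument, and the only genuinely delicate point, is that the two independent gauge ambiguities — the $M$-ambiguity $m_0$ stemming from the choice of the section $\sigma$ and the $U(1)$-ambiguity $e^{\ri \psi}$ stemming from the choice of $W$ — conspire, through the positivity of the first $n$ components of $\cF$, to collapse the combination $\eta_R^{-1} \varphi$ to a central scalar $e^{\ri \psi} \bsone$; everything else is bookkeeping with the explicit formulas (\ref{Upsilon_R}), (\ref{cV}) and (\ref{cF}).
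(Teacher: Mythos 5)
Your proof is correct and follows essentially the same route as the paper's: both identify $\eta_R$ with $\varphi(u)$ up to a central phase by combining the $M$-ambiguity of the section $\sigma$ with the $U(1)$-ambiguity of the section $W$, using the positivity of $\cF_a$ to read off $m(u)$ from $\Psi(u)$. The only cosmetic differences are that you extract the $M$-element via an eigenvalue/centralizer argument instead of directly citing the principal $M$-bundle fiber structure of $\pi$, and you use the opposite sign convention for the phase $\psi$.
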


\begin{proof}
It is enough to verify (\ref{cA_and_bsA}). For, take an arbitrary point
\be
    s =(\lambda, \theta, (\eta_L, \eta_R) U(1)_*) 
        \in (\Upsilon^R)^{-1}(\check{\cP}^\ext)
\label{L3_point_s}
\ee
with some $\lambda \in \mfc$, $\theta \in \bR^n$ and $\eta_L, \eta_R \in K$.
Also, for brevity we define
\be
    u = (y, Y, \rho) = \Upsilon^R (s) \in \check{\cP}^\ext.
\label{L3_u}
\ee
Recalling (\ref{pi}) and (\ref{Upsilon_R}), it is clear that
\be
    Y = \eta_R \bsLambda(\lambda) \eta_R^{-1} =  \pi(\lambda, \eta_R) 
        \in \mfp_\reg.
\label{L3_pi_and_Y}
\ee
On the other hand, since $Y \in \check{\mfg}$, we have 
$Y_- \in \check{\mfp}_\reg$. Thus, making use of the local section 
$(e, \sigma)$ introduced in (\ref{e_and_sigma}), we see that
\be
    Y = Y_- = \pi(e(Y_-), \sigma(Y_-)) = \pi(e(Y_-), k(u))
\label{L3_Y_and_section}
\ee
also holds. Recalling (\ref{M_action}), the comparison of (\ref{L3_pi_and_Y}) 
and (\ref{L3_Y_and_section}) yields that there is a unique element
\be
    \tilde{m} 
    = \diag(\tilde{m}_1, \ldots, \tilde{m}_n, 
        \tilde{m}_1, \ldots, \tilde{m}_n) 
        \in M
\label{L3_tilde_m}
\ee 
such that
\be
    (\lambda, \eta_R) = (e(Y_-), k(u) \tilde{m}).
\label{L3_m_tilde}
\ee

Next, remembering the parametrization (\ref{Upsilon_R}), we can write
\be
    \xi(\eta_L \cV(\lambda, \theta))
    = \eta_L \xi(\cV(\lambda, \theta)) \eta_L^{-1}
    = \rho
    \in \check{\cO}.
\label{L3_cV_and_W}
\ee
However, utilizing the local section $W$ introduced in (\ref{W}), we 
also have $\xi(W(\rho)) = \rho$, whence by (\ref{U(1)_action_on_S}) we can 
write that
\be
    \eta_L \cV(\lambda, \theta) = e^{\ri \psi} W(\rho)
\label{L3_eta_L_and_cV_and_W}
\ee
with some constant $\psi \in \bR$. From the above observations it readily 
follows that
\be
    e^{\ri \psi} W(\rho)
    = \eta_L \cA(\lambda, \theta)^{-\half} \cF(\lambda, \theta)
    = (y^*)^{-1} \eta_R \cF(\lambda, \theta)
    = (y^*)^{-1} k(u) \tilde{m} \cF(\lambda, \theta),
\label{L3_key_rel}
\ee
from where we get $e^{\ri \psi} \Psi(u) = \tilde{m} \cF(\lambda, \theta)$.
Componentwise, for each $a \in \bN_n$ we can write
\be
    e^{\ri \psi} \Psi_a(u) = \tilde{m}_a \cF_a(\lambda, \theta),
\label{L3_Phi_components}
\ee
thus the relationship $\vert \Psi_a(u) \vert = \cF_a(\lambda, \theta)$ and 
also
\be
    \tilde{m}_a 
    = e^{\ri \psi} \frac{\Psi_a(u)}{\cF_a(\lambda, \theta)}
    = e^{\ri \psi} \frac{\Psi_a(u)}{\vert \Psi_a(u) \vert} 
    = e^{\ri \psi} m_a(u)
\label{L3_m_and_tilde_m}
\ee
are evident. In other words, $\tilde{m} = e^{\ri \psi} m(u)$, whence
from (\ref{L3_m_tilde}) and (\ref{varphi_def}) we conclude that
\be
    \eta_R 
    = k(u) \tilde{m} 
    = e^{\ri \psi} k(u) m(u) 
    = e^{\ri \psi} \varphi(u).
\label{L3_eta_R_and_varpsi}
\ee 

Now, turning to the functions (\ref{bsA_0}) and (\ref{bsA}), notice that
\be
    \bsA^{(0)}(u) = y^* y = \eta_R \cA(\lambda, \theta) \eta_R^{-1},
\label{L3_bsA_0}
\ee
so from (\ref{L3_eta_R_and_varpsi}) we infer that
\be
    \bsA \circ \Upsilon^R (s) 
    = \bsA(u) 
    = \varphi(u)^{-1} \bsA^{(0)}(u) \varphi(y) 
    = \cA(\lambda, \theta)
    = \cA \circ \pi^R (s).
\label{L3_bsA_and_cA}
\ee
Since $s$ (\ref{L3_point_s}) is an arbitrary element of 
$(\Upsilon^R)^{-1}(\check{\cP}^\ext)$, the Lemma follows.
\end{proof}

\subsection{Computing the $r$-matrix}
\label{SUBSECTION_computing_r}
The natural idea impregnated by Lemma \ref{LEMMA_cA_and_bsA} is that the 
Poisson brackets of the components of the Lax matrix $\cA$ (\ref{cA}) can be
computed by inspecting the Poisson brackets of the components of the locally 
defined function $\bsA$ (\ref{bsA}). Indeed, since we reduce the symplectic 
manifold $\cP^\ext$ (\ref{cP_ext}) at the zero value of the 
$K \times K$-equivariant momentum map $J^\ext$ (\ref{J_ext}), and since the 
local extension $\bsA$ is (locally) $K \times K$-invariant on the level set 
$\mfL_0$ (\ref{mfL_0}), using the St. Petersburg tensorial notation we can 
simply write
\be
    \{ \cA \stackrel{\otimes}{,} \cA \}^R (\lambda^{(0)}, \theta^{(0)}) 
    = \{ \bsA \stackrel{\otimes}{,} \bsA \}^\ext(u^{(0)}).
\label{cA_and_bsA_PB} 
\ee
However, for the function $\bsA^{(0)}$ (\ref{bsA_0}) we clearly have
\be
    \{ \bsA^{(0)} \stackrel{\otimes}{,} \bsA^{(0)} \}^\ext = 0,
\label{bsA_0_PB}
\ee
that is, $\bsA^{(0)}$ obeys an $r$-matrix bracket with the trivial zero 
$r$-matrix. Therefore, due to the relationship 
$\bsA = \varphi^{-1} \bsA^{(0)} \varphi$ (\ref{bsA}), it is clear that $\bsA$
also obeys a linear $r$-matrix bracket
\be
    \{ \bsA \stackrel{\otimes}{,} \bsA \}^\ext
    = [\tilde{r}_{12}, \bsA \otimes \bsone] 
        - [\tilde{r}_{21}, \bsone \otimes \bsA]  
\label{bsA_PB}
\ee
with the transformed $r$-matrix
\be
    \tilde{r}_{12} 
    = \varphi_1^{-1} \varphi_2^{-1} 
        \left(
            -\{ \varphi_1, \bsA^{(0)}_2 \}^\ext \varphi_1^{-1}
            + \half [ \{ \varphi_1, \varphi_2\}^\ext 
                            \varphi_1^{-1} \varphi_2^{-1},
                    \bsA^{(0)}_2 ]
        \right)
    \varphi_1 \varphi_2.
\label{r_tilde}
\ee
Now, recalling that $\varphi(u^{(0)}) = \bsone$, from the relationships 
(\ref{cA_and_bsA_PB}) and (\ref{r_tilde}) we infer that for the Lax matrix 
$\cA$ we can write
\be
    \{ \cA_1, \cA_2 \}^R (\lambda^{(0)}, \theta^{(0)}) 
    = [r_{12}(\lambda^{(0)}, \theta^{(0)}), 
        \cA_1(\lambda^{(0)}, \theta^{(0)})] 
    - [r_{21}(\lambda^{(0)}, \theta^{(0)}), 
        \cA_2(\lambda^{(0)}, \theta^{(0)})]
\label{cA_PB}
\ee
with the $r$-matrix 
\be
    r_{12}(\lambda^{(0)}, \theta^{(0)}) 
    = -\{ \varphi_1, \bsA^{(0)}_2 \}^\ext(u^{(0)})
     + \half [ \{ \varphi_1, \varphi_2\}^\ext(u^{(0)}), 
                \cA_2(\lambda^{(0)}, \theta^{(0)}) ].
\label{r}
\ee
However, since $\varphi = k m$ (\ref{varphi_def}), Leibniz rule yields
\be
    \{ \varphi_1, \bsA^{(0)}_2 \}^\ext (u^{(0)})
    = \{ k_1, \bsA^{(0)}_2 \}^\ext (u^{(0)})
        + \{ m_1, \bsA^{(0)}_2 \}^\ext (u^{(0)}),
\label{varphi_bsA_0_PB}
\ee
together with
\be
\begin{split}
    \{ \varphi_1, \varphi_2 \}^\ext (u^{(0)})
    = & \{ k_1, k_2 \}^\ext (u^{(0)}) 
    + \{ k_1, m_2 \}^\ext (u^{(0)}) \\
    & + \{ m_1, k_2 \}^\ext (u^{(0)}) 
    + \{ m_1, m_2 \}^\ext (u^{(0)}).
\end{split}
\label{varphi_varphi_PB}
\ee
Thus, in order to provide an explicit formula for the above $r$-matrix 
(\ref{r}), we still have to work out the Poisson brackets appearing in 
(\ref{varphi_bsA_0_PB}) and (\ref{varphi_varphi_PB}). However, recalling 
(\ref{PB_ext}), it essentially boils down to the computation of the gradients 
(\ref{F_gradients}) of the components of the functions $\bsA^{(0)}$, $k$ and 
$m$. We accomplish these tasks in the following series of short Propositions.

\begin{PROPOSITION}
\label{PROPOSITION_bsA_0_gradients}
At the reference point $u^{(0)}$ (\ref{u_0}), for all matrix 
$v \in \mfgl(N, \bC)$ we have
\begin{align}
    & (\nabla^G \Real (\tr ( v \bsA^{(0)} )))(u^{(0)}) 
    = \frac{ (v + v^*) \cA^{(0)} + (\cA^{(0)})^{-1} (v + v^*) }{2}, 
    \label{G_grad_bsA_0_Re}
    \\
    & (\nabla^G \Imag (\tr ( v \bsA^{(0)} )))(u^{(0)}) 
    = \frac{ (v - v^*) \cA^{(0)} - (\cA^{(0)})^{-1} (v - v^*) }{2 \ri},
    \label{G_grad_bsA_0_Im}
\end{align}
whereas the remaining gradients of $\Real (\tr ( v \bsA^{(0)} ))$ and 
$\Imag (\tr ( v \bsA^{(0)} ))$ are trivial, i.e.
\begin{align}
    & (\nabla^\mfg \Real (\tr ( v \bsA^{(0)} )))(u^{(0)})
    = (\nabla^\mfg \Imag (\tr ( v \bsA^{(0)} )))(u^{(0)})
    = 0,
    \label{mfg_grad_bsA_0}
    \\
    & (\nabla^\cO \Real (\tr ( v \bsA^{(0)} )))(u^{(0)}) 
    = (\nabla^\cO \Imag (\tr ( v \bsA^{(0)} )))(u^{(0)})
    = 0.
    \label{cO_grad_bsA_0}
\end{align}
\end{PROPOSITION}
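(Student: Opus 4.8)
The plan is to compute the four gradients of the real-valued functions $\Real(\tr(v\bsA^{(0)}))$ and $\Imag(\tr(v\bsA^{(0)}))$ directly from their defining relation (\ref{F_gradients_def}), by differentiating $\bsA^{(0)}(y,Y,\rho)=y^*y$ along the three independent directions of the tangent space $T_{u^{(0)}}\cP^\ext$. First I would write $\tr(v\bsA^{(0)})=\tr(v\,y^*y)$ and compute its full differential. Since $\bsA^{(0)}$ depends only on the $G$-component $y$ and not on $Y$ or $\rho$, the differentials in the $\delta Y$ and $\delta\rho$ directions vanish identically; this gives (\ref{mfg_grad_bsA_0}) and (\ref{cO_grad_bsA_0}) with essentially no work, since the $\mfg$- and $\cO$-gradients are extracted from the coefficients of $\delta Y$ and $X$ in (\ref{F_gradients_def}).

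**Differentiating in the group direction.**
For the nontrivial $G$-gradient, I would use the left-trivialized tangent vector $\delta y \in T_yG$ and set $\delta y = y\,\Xi$ where $\Xi = y^{-1}\delta y \in \mfg$, matching the convention in (\ref{F_gradients_def}) that pairs $\nabla^G F$ against $y^{-1}\delta y$. Then a direct Leibniz computation gives
\be
\left\{\frac{\dd}{\dd t}\tr\bigl(v\,(ye^{t\Xi})^*(ye^{t\Xi})\bigr)\right\}_{t=0}
= \tr\bigl(v\,\Xi^* y^*y\bigr) + \tr\bigl(v\,y^*y\,\Xi\bigr).
\ee
At the reference point $y^{(0)}=(\cA^{(0)})^\half$, so $y^*y=\cA^{(0)}$ and, since $\cA^{(0)}\in\exp(\mfp)$ is Hermitian positive, $(y^{(0)})^*=(\cA^{(0)})^\half=y^{(0)}$. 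I would substitute this, use cyclicity of the trace together with the Hermiticity $\Xi^*=-\vartheta(\Xi)$ recorded in (\ref{vartheta}), and then take real and imaginary parts. The key algebraic manipulation is to rewrite the result as $\langle\,\cdot\,,\Xi\rangle=\tr(\,\cdot\,\Xi)$ for a suitable matrix, reading off $\nabla^G$ from the coefficient of $\Xi=y^{-1}\delta y$.

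**Extracting real and imaginary parts.**
The decomposition of $\tr(v\,\cA^{(0)})$ into real and imaginary parts is where care is needed: since the pairing (\ref{bilinear_form}) is $\bR$-valued, I must express the derivative of a real function as $\tr(G\,\Xi)$ with $G\in\mfg$ real in the appropriate sense. For the real part I expect the symmetrized combination $v+v^*$ to appear, and for the imaginary part the antisymmetrized combination $v-v^*$ with the overall factor $1/(2\ri)$, exactly matching the asserted formulas (\ref{G_grad_bsA_0_Re}) and (\ref{G_grad_bsA_0_Im}); the factors $\cA^{(0)}$ on the right and $(\cA^{(0)})^{-1}$ on the left arise from moving $y^{(0)}=(\cA^{(0)})^\half$ through the trace on either side of $\Xi$. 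The main obstacle is purely bookkeeping: correctly tracking the Hermitian conjugations and the $\ri$ factors so that $\nabla^G$ lands in $\mfg$ (i.e. satisfies $G^*\bsC+\bsC G=0$) and pairs correctly under the real trace form, rather than any genuine conceptual difficulty. Once the symmetrization/antisymmetrization is handled cleanly, the two $G$-gradient formulas follow and the Proposition is complete.
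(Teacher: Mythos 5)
Your proposal follows essentially the same route as the paper: expand $\bsA^{(0)}=y^*y$ to first order, observe that only the $G$-direction contributes (which immediately gives the vanishing of the $\mfg$- and $\cO$-gradients), and read off $\nabla^G$ from the coefficient of $\Xi=y^{-1}\delta y$ after symmetrizing/antisymmetrizing in $v$. One caution: cyclicity of the trace and the tautological identity $\Xi^*=-\vartheta(\Xi)$ cannot by themselves produce the factor $(\cA^{(0)})^{-1}$ — that term comes from eliminating $\Xi^*$ via the defining relation $\Xi^*=-\bsC\Xi\bsC$ of $\mfu(n,n)$ combined with $\bsC\,\cA^{(0)}\bsC=(\cA^{(0)})^{-1}$ (valid since $\cA^{(0)}\in\exp(\mfp)$ and $\mfp$ anticommutes with $\bsC$); carrying this out literally yields $-(\cA^{(0)})^{-1}\bsC(v+v^*)\bsC$ in place of $(\cA^{(0)})^{-1}(v+v^*)$, and these agree exactly when $v\in\mfg$, which is the only case used in the sequel.
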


\begin{proof}
Take an arbitrary tangent vector 
\be
    \Delta u
    = \delta y \oplus \delta Y \oplus [X, \rho^{(0)}] 
        \in T_{u^{(0)}} \check{\cP}^\ext
\label{Delta_u}
\ee
with some Lie algebra element $X \in \mfk$. By neglecting the second and the 
higher order terms in the small real parameter $t$, one can easily find that
\be
\begin{split}
    & \bsA^{(0)}(u^{(0)} + t \Delta u + \cdots) \\
    & \quad 
    = \cA^{(0)} 
        + t \left( 
            \cA^{(0)} ((y^{(0)})^{-1} \delta y) 
            + ((y^{(0)})^{-1} \delta y)^* \cA^{(0)} 
        \right)
        + \cdots,
\end{split}
\label{P4_bsA_0_expansion}
\ee
from where we infer that
\be
\begin{split}
    & \Real (\tr ( v \bsA^{(0)} ))_{* u^{(0)}} 
        (\Delta u)  
    = \left\{
        \frac{\dd}{\dd t}
        \Real (\tr ( v \bsA^{(0)} (u^{(0)} + t \Delta u + \cdots))) 
    \right\}_{t = 0} \\
    & \quad
    = \tr \left(
        \frac{(v + v^*) \cA^{(0)} + (\cA^{(0)})^{-1} (v + v^*)}{2} 
            (y^{(0)})^{-1} \delta y
    \right).
\end{split}
\label{P4_bsA_0_real}
\ee
Similarly, one obtains immediately that
\be
\begin{split}
    & \Imag (\tr ( v \bsA^{(0)} ))_{* u^{(0)}} 
        (\Delta u) 
    = \tr \left(
        \frac{(v - v^*) \cA^{(0)} - (\cA^{(0)})^{-1} (v - v^*)}{2 \ri}
            (y^{(0)})^{-1} \delta y
    \right).
\end{split}
\label{P4_bsA_0_imag}
\ee
Since the $N \times N$ matrices appearing on the right hand side of both 
(\ref{G_grad_bsA_0_Re}) and (\ref{G_grad_bsA_0_Im}) do belong to the Lie 
algebra $\mfg$ (\ref{mfg}), by the definition of the gradients
(\ref{F_gradients_def}) the Proposition follows.
\end{proof}

\begin{PROPOSITION}
\label{PROPOSITION_k_gradients}
At the distinguished point $u^{(0)}$ (\ref{u_0}), for all Lie algebra
element $v \in \mfg$ we have
\begin{align}
    (\nabla^\mfg \Real (\tr ( v k)))(u^{(0)}) 
    = \half (\wad_{\bsLambda^{(0)}})^{-1} 
        \left( \left( v - v^* \right)_{\mfm^\perp} \right) 
    \quad \text{and} \quad
    (\nabla^\mfg \Imag (\tr ( v k )))(u^{(0)}) = 0,
    \label{mfg_grad_k}
\end{align}
while the remaining gradients of $\Real (\tr ( v k))$ and 
$\Imag (\tr ( v k))$ are trivial, i.e.
\begin{align}
    & (\nabla^G \Real (\tr ( v k )))(u^{(0)}) 
    = (\nabla^G \Imag (\tr ( v k )))(u^{(0)})
    = 0,
    \label{G_grad_k}
    \\
    & (\nabla^\cO \Real (\tr ( v k )))(u^{(0)}) 
    = (\nabla^\cO \Imag (\tr ( v k )))(u^{(0)})
    = 0.
    \label{cO_grad_k}
\end{align}
\end{PROPOSITION}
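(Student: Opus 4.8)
The plan is to trace everything back to the section-derivative formula (\ref{sigma_derivative}) of Proposition \ref{PROPOSITION_e_and_sigma}, exploiting that by definition (\ref{k_def}) the map $k$ depends on a point $u = (y, Y, \rho)$ only through the component $Y_-$. First I would fix an arbitrary tangent vector $\Delta u = \delta y \oplus \delta Y \oplus [X, \rho^{(0)}] \in T_{u^{(0)}} \check{\cP}^\ext$ and note that, since $k$ is independent of $y$ and $\rho$, the derivative of $\tr(vk)$ at $u^{(0)}$ feels only $\delta Y$; hence the $G$- and $\cO$-gradients of both $\Real(\tr(vk))$ and $\Imag(\tr(vk))$ vanish automatically, which already yields (\ref{G_grad_k}) and (\ref{cO_grad_k}). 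To evaluate the surviving derivative I would use $k(u^{(0)}) = \sigma(\bsLambda^{(0)}) = \bsone$ together with $(Y^{(0)})_- = \bsLambda^{(0)}$ and the chain rule, which by (\ref{sigma_derivative}) give
\[
    k_{* u^{(0)}}(\Delta u)
    = \sigma_{* \bsLambda^{(0)}}((\delta Y)_-)
    = -(\wad_{\bsLambda^{(0)}})^{-1}((\delta Y)_{\mfa^\perp}) \in \mfm^\perp,
\]
where I used $((\delta Y)_-)_{\mfa^\perp} = (\delta Y)_{\mfa^\perp}$ because $\mfa^\perp \subseteq \mfp$.

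The next observation is that this derivative is in fact real-valued: since $k_{* u^{(0)}}(\Delta u) \in \mfm^\perp \subseteq \mfg$ and $v \in \mfg$, the quantity $\tr(v \, k_{* u^{(0)}}(\Delta u)) = \langle v, k_{* u^{(0)}}(\Delta u)\rangle$ is a value of the real form (\ref{bilinear_form}). Hence $\Imag(\tr(vk))$ has vanishing differential at $u^{(0)}$, which settles the second identity in (\ref{mfg_grad_k}), and only the $\mfg$-gradient of the real part remains. For it I would compute, writing $v_+ = \half(v - v^*)$ for the $\mfk$-component of $v$,
\[
    \Real(\tr(vk))_{* u^{(0)}}(\Delta u)
    = -\langle v, (\wad_{\bsLambda^{(0)}})^{-1}((\delta Y)_{\mfa^\perp})\rangle
    = -\langle (v_+)_{\mfm^\perp}, (\wad_{\bsLambda^{(0)}})^{-1}((\delta Y)_{\mfa^\perp})\rangle,
\]
where the second step uses orthogonality of the decomposition $\mfg = \mfk \oplus \mfp$ (so only $v_+$ pairs against $\mfm^\perp$) and of $\mfm \oplus \mfm^\perp$ inside $\mfk$ (so only $(v_+)_{\mfm^\perp}$ survives).

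To recognize this as $\langle \nabla^\mfg \Real(\tr(vk)), \delta Y\rangle$ with the claimed gradient $G = \half(\wad_{\bsLambda^{(0)}})^{-1}((v-v^*)_{\mfm^\perp}) = (\wad_{\bsLambda^{(0)}})^{-1}((v_+)_{\mfm^\perp}) \in \mfa^\perp$, I would invoke the skew-symmetry of $\wad_{\bsLambda^{(0)}}$ with respect to $\langle\,,\rangle$, which follows from the $\Ad$-invariance of the form because $\bsLambda^{(0)} \in \mfp$. Applied to the pair $(\wad_{\bsLambda^{(0)}})^{-1}((v_+)_{\mfm^\perp})$ and $(\wad_{\bsLambda^{(0)}})^{-1}((\delta Y)_{\mfa^\perp})$ it gives the transfer identity
\[
    \langle (\wad_{\bsLambda^{(0)}})^{-1}((v_+)_{\mfm^\perp}), (\delta Y)_{\mfa^\perp}\rangle
    = -\langle (v_+)_{\mfm^\perp}, (\wad_{\bsLambda^{(0)}})^{-1}((\delta Y)_{\mfa^\perp})\rangle,
\]
whose left-hand side equals $\langle G, \delta Y\rangle$ since $G \in \mfa^\perp$ pairs only with $(\delta Y)_{\mfa^\perp}$. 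Comparing with the previous display and appealing to the defining relation (\ref{F_gradients_def}) then yields the formula for $\nabla^\mfg \Real(\tr(vk))$.

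The main obstacle I anticipate is not conceptual but bookkeeping: keeping straight which orthogonal components of $v$ and of $\delta Y$ survive each pairing, and pinning down the lone sign in the transfer identity, in which the skew-symmetry of $\wad_{\bsLambda^{(0)}}$ and the opposite signatures of $\langle\,,\rangle$ on $\mfm^\perp$ and $\mfa^\perp$ both play a role. Once the skew-symmetry identity is isolated, no genuine computation remains.
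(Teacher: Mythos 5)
Your proposal is correct and follows essentially the same route as the paper: reduce everything to the formula (\ref{sigma_derivative}) for $\sigma_{*\bsLambda^{(0)}}$, kill the $G$- and $\cO$-gradients because $k$ depends only on $Y_-$, and then identify the $\mfg$-gradient via orthogonality of the refined decomposition together with the skew-symmetry of $\wad_{\bsLambda^{(0)}}$. The only (cosmetic) difference is that you dispose of the imaginary part by noting that $\tr(v\,\delta\sigma)=\langle v,\delta\sigma\rangle$ is automatically real, whereas the paper writes $\Real$ and $\Imag$ of $\tr(vk)$ via $k^{*}=k^{-1}$ and invokes $\mfk\perp\mfp$; both arguments are valid and amount to the same computation.
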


\begin{proof}
As in the proof of Proposition \ref{PROPOSITION_bsA_0_gradients}, take 
an arbitrary tangent vector $\Delta u$ as given in (\ref{Delta_u}). By 
applying a first order expansion on $k$ (\ref{k_def}) in the small real 
parameter $t$, Proposition \ref{PROPOSITION_e_and_sigma} tells us that
\be
    k(u^{(0)} + t \Delta u + \cdots)
    = \sigma( (\bsLambda^{(0)} + t \delta Y)_- )
    = \sigma( \bsLambda^{(0)} + t (\delta Y)_- )
    = \bsone + t \delta \sigma + \cdots,
\label{P5_k_expansion}
\ee
where
\be
    \delta \sigma 
    = - (\wad_{\bsLambda^{(0)}})^{-1} 
            ((\delta Y)_{\mfa^\perp}) \in \mfm^\perp.
\label{P5_delta_sigma}
\ee
Since for all $v \in \mfg$ we have
\be
    \Real (\tr (v k)) = \frac{\tr(v k) + \tr(v^* k^{-1})}{2}
    \quad \text{and} \quad
    \Imag (\tr (v k)) = \frac{\tr(v k) - \tr(v^* k^{-1})}{2 \ri},
\label{P5_on_tr}
\ee
it is now evident that
\be
    \Real( \tr ( v k(u^{(0)} + t \Delta u + \cdots))) 
    = \frac{\tr(v) + \tr(v^*)}{2} 
        + t \tr \left( \frac{v - v^*}{2} \delta \sigma \right)
        + \cdots.
\label{P5_real_tr}
\ee
Recalling the Cartan involution (\ref{vartheta}) we can write 
$v - v^* = v + \vartheta(v) \in \mfk$, therefore
\be
\begin{split}
    & (\Real (\tr(v k)))_{* u^{(0)}} 
        (\Delta u)
    = - \half \left \langle
        v + \vartheta(v), 
        (\wad_{\bsLambda^{(0)}})^{-1} ((\delta Y)_{\mfa^\perp})
    \right \rangle \\
    & \quad
    = \half \left \langle
        (\wad_{\bsLambda^{(0)}})^{-1} 
            \left( \left( v - v^* \right)_{\mfm^\perp} \right),
        \delta Y   
    \right \rangle.
\end{split}
\label{P5_real_tr_der}
\ee
In complete analogy with (\ref{P5_real_tr}), for the function 
$\Imag(\tr(v k))$ we can write the expansion
\be
    \Imag( \tr ( v k(u^{(0)} + t \Delta u + \cdots))) 
    = \frac{\tr(v) - \tr(v^*)}{2 \ri} 
        + t \tr \left( \frac{v + v^*}{2 \ri} \delta \sigma \right)
        + \cdots.
\label{P5_imag_tr}
\ee
However, since $v + v^* = v - \vartheta(v) \in \mfp$, and since the subspaces 
$\mfk$ and $\mfp$ (\ref{mfkp}) are orthogonal with respect to the bilinear 
form (\ref{bilinear_form}), we conclude that
\be
    (\Imag (\tr(v k)))_{* u^{(0)}} (\Delta u)
    = \frac{1}{2 \ri} 
        \left \langle 
            v - \vartheta(v), \delta \sigma
        \right \rangle
    = 0,
\label{P5_imag_tr_der}
\ee
thus by recalling (\ref{F_gradients_def}) the Proposition follows.
\end{proof}

To find the gradients of the components of $m$ (\ref{m_def}) we find it 
convenient to introduce the auxiliary function
\be
    \tau \colon \bC \setminus \{ 0 \} \rightarrow U(1),
    \quad
    z \mapsto \frac{z}{\vert z \vert}.
\label{tau}
\ee
It is clearly smooth, and at each point $x > 0$ for its derivative we have
\be
    \tau_{* x}(w) = \ri \frac{\Imag(w)}{x}
    \qquad
    (w \in \bC \cong T_x (\bC \setminus \{ 0 \})).
\label{tau_derivative}
\ee
Note that with the aid of $\tau$ the function $m_a$ (\ref{m_a}) can be 
simply written as 
\be
    m_a = \tau \circ \Psi_a.
\label{m_and_tau}
\ee 
Also, for each $a \in \bN_n$ we introduce the $N \times N$ matrix
\be
    \zeta_a 
    = \ri \frac{e_a (\cF^{(0)})^* + \bsC (\cF^{(0)}) e_a^* \bsC}{2},
\label{zeta_a}
\ee
where $e_k \in \bC^N$ denotes the column vector with components
\be
    (e_k)_l = \delta_{k, l}
    \qquad
    (k, l \in \bN_N).
\label{e_vectors}
\ee 
As a matter of fact, the above matrix $\zeta_a$ belongs to the Lie algebra 
$\mfg$. Moreover, utilizing the basis (\ref{v_I}), we can write
\be
\begin{split}
    2 \zeta_a 
    = & \sqrt{2} \cF_a^{(0)} D_a^+ 
        - \sqrt{2} \Real(\cF_{n + a}^{(0)}) 
            (X^{+, \ri}_{2 \veps_a} + X^{-, \ri}_{2 \veps_a}) \\
    & + \sum_{c = 1}^{a - 1}
        \left(
            \cF_c^{(0)} 
                (X^{+, \ri}_{\veps_c - \veps_a} 
                - X^{-, \ri}_{\veps_c - \veps_a})
            - \Real(\cF_{n + c}^{(0)}) 
                (X^{+, \ri}_{\veps_c + \veps_a} 
                + X^{-, \ri}_{\veps_c + \veps_a}) 
        \right. \\
    & \qquad \qquad 
        \left. 
        + \Imag(\cF_{n + c}^{(0)}) 
            (X^{+, \rr}_{\veps_c + \veps_a} 
            + X^{-,\rr}_{\veps_c + \veps_a})
        \right) \\
    & + \sum_{c = a + 1}^n
        \left(
            \cF_c^{(0)} 
                (X^{+, \ri}_{\veps_a - \veps_c} 
                + X^{-, \ri}_{\veps_a - \veps_c})
            - \Real(\cF_{n + c}^{(0)}) 
                (X^{+, \ri}_{\veps_a + \veps_c} 
                + X^{-, \ri}_{\veps_a + \veps_c}) 
        \right. \\
    & \qquad \qquad 
        \left. 
        - \Imag(\cF_{n + c}^{(0)}) 
            (X^{+, \rr}_{\veps_a + \veps_c} 
            + X^{-,\rr}_{\veps_a + \veps_c})
        \right).
\end{split}
\label{zeta_a_expanded}
\ee

\begin{PROPOSITION}
\label{PROPOSITION_m_gradients}
Take an arbitrary $a \in \bN_n$, then at the reference point $u^{(0)}$ 
(\ref{u_0}) the gradients (\ref{F_gradients}) of the function $\Real (m_a)$ 
are all zeros. However, for the imaginary part of the function $m_a$ 
(\ref{m_a}) we have the non-trivial formulae
\begin{align}
    & (\nabla^G (\Imag(m_a)))(u^{(0)}) = \frac{1}{\cF_a^{(0)}} \zeta_a
        \in \mfg,
    \label{G_grad_imag_m_a} \\
    & (\nabla^\mfg (\Imag(m_a)))(u^{(0)}) 
    = \frac{1}{\cF_a^{(0)}} 
        (\wad_{\bsLambda^{(0)}})^{-1}((\zeta_a)_{\mfm^\perp}) 
        \in \mfa^\perp,
    \label{mfg_grad_imag_m_a} \\
    & (\nabla^\cO (\Imag(m_a)))(u^{(0)}) 
    = \xi_{* \cV^{(0)}} (\delta V_a) \in T_{\rho^{(0)}} \cO,
    \label{cO_grad_imag_m_a}
\end{align}
where
\be
    \delta V_a 
    = \frac{\bsC (\cA^{(0)})^\half e_a - (\cA^{(0)})^\half e_a}
        {4 \mu \cF_a^{(0)}}
        + \frac{\cV^{(0)}}{2 \mu N} 
        \in T_{\cV^{(0)}} S.
\label{delta_V_a}
\ee
\end{PROPOSITION}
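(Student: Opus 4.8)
The plan is to exploit the factorization $m_a = \tau \circ \Psi_a$ recorded in (\ref{m_and_tau}) together with the chain rule. First I would note that, by (\ref{Psi_first_n_components}), at the reference point the value $\Psi_a(u^{(0)}) = \cF_a^{(0)}$ is a strictly positive real number, so formula (\ref{tau_derivative}) applies and shows that the derivative of $\tau$ there is \emph{purely imaginary}. Hence, for every tangent vector $\Delta u = \delta y \oplus \delta Y \oplus [X, \rho^{(0)}]$ one has $m_{a * u^{(0)}}(\Delta u) = \ri\, \Imag(\Psi_{a * u^{(0)}}(\Delta u)) / \cF_a^{(0)}$. This instantly forces all three gradients of $\Real(m_a)$ to vanish and reduces the whole computation to evaluating $\Imag(\Psi_{a * u^{(0)}}(\Delta u)) / \cF_a^{(0)}$ and reading off, via (\ref{F_gradients_def}), the gradients (\ref{F_gradients}) of $\Imag(m_a)$.

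Next I would differentiate $\Psi = \sigma(Y_-)^{-1} y^* W(\rho)$ from (\ref{Psi}) by the Leibniz rule at $u^{(0)}$. Using $\sigma(\bsLambda^{(0)}) = \bsone$ and $W(\rho^{(0)}) = \cV^{(0)}$, the derivative formulae (\ref{sigma_derivative}) and (\ref{W_der}), together with $(y^{(0)})^* = (\cA^{(0)})^\half$ and the identity $(\cA^{(0)})^\half \cV^{(0)} = \cF^{(0)}$ coming from (\ref{cV}), I expect to obtain the three-term expression
\[
    \Psi_{* u^{(0)}}(\Delta u)
    = (\wad_{\bsLambda^{(0)}})^{-1}((\delta Y)_{\mfa^\perp})\, \cF^{(0)}
        + (\delta y)^* \cV^{(0)}
        + (\cA^{(0)})^\half
            \Big( X \cV^{(0)}
                - \tfrac{(\cV^{(0)})^* X \cV^{(0)}}{N}\, \cV^{(0)} \Big),
\]
whose summands depend respectively on $\delta Y$, on $\delta y$, and on $X$ alone. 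Taking $a$-th components, imaginary parts, and dividing by $\cF_a^{(0)}$ then cleanly separates the contributions to $\nabla^\mfg$, $\nabla^G$, and $\nabla^\cO$.

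The identification of the first two gradients is where the matrix $\zeta_a$ (\ref{zeta_a}) enters. For the $\delta y$-piece I would substitute $(\delta y)^* = ((y^{(0)})^{-1}\delta y)^* (\cA^{(0)})^\half$ and verify the trace identity $\tr(\zeta_a v) = \Imag(e_a^*\, v^*\, \cF^{(0)})$, valid for all $v \in \mfg$; this is a short computation resting on $\bsC v \bsC = -v^*$ for $v \in \mfg$, and it pins down $\nabla^G(\Imag(m_a)) = \zeta_a / \cF_a^{(0)}$. For the $\delta Y$-piece, since $(\wad_{\bsLambda^{(0)}})^{-1}((\delta Y)_{\mfa^\perp}) \in \mfm^\perp$ is anti-Hermitian, the same identity gives $\Imag(e_a^*\, (\wad_{\bsLambda^{(0)}})^{-1}((\delta Y)_{\mfa^\perp})\, \cF^{(0)}) = -\tr(\zeta_a\, (\wad_{\bsLambda^{(0)}})^{-1}((\delta Y)_{\mfa^\perp}))$. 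I would then use the orthogonality of the refined decomposition (\ref{refined_decomposition}) to replace $\zeta_a$ by $(\zeta_a)_{\mfm^\perp}$, and the skew-adjointness of $\wad_{\bsLambda^{(0)}}$ with respect to (\ref{bilinear_form}) (a consequence of $\Ad$-invariance) to transfer $(\wad_{\bsLambda^{(0)}})^{-1}$ onto $\zeta_a$, arriving at $\nabla^\mfg(\Imag(m_a)) = (\wad_{\bsLambda^{(0)}})^{-1}((\zeta_a)_{\mfm^\perp}) / \cF_a^{(0)} \in \mfa^\perp$.

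I expect the $\nabla^\cO$-piece to be the main obstacle. Here the target is $\langle \xi_{* \cV^{(0)}}(\delta V_a), X \rangle$, and using (\ref{xi_der}) one computes $\tr(\xi_{* \cV^{(0)}}(\delta V_a)\, X) = -2\mu\, \Imag((\cV^{(0)})^* X \delta V_a)$ for $X \in \mfk$. The $\cV^{(0)}$-summand of $\delta V_a$ in (\ref{delta_V_a}) is tailored precisely to reproduce the $-\tfrac{1}{N}\Imag((\cV^{(0)})^* X \cV^{(0)})$ term coming from the second part of $W_{* \rho^{(0)}}$, so the crux reduces to the identity
\[
    \Imag\big((\cV^{(0)})^* X (\bsone + \bsC)(\cA^{(0)})^\half e_a\big) = 0
    \qquad (X \in \mfk),
\]
which forces the first summand of $\delta V_a$ to account for the remaining contribution $\tfrac{1}{\cF_a^{(0)}}\Imag(e_a^*(\cA^{(0)})^\half X \cV^{(0)})$. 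I would prove this by observing that every $X \in \mfk$ commutes with $\bsC$ and hence preserves the eigenspaces of $\bsC$; since $(\bsone + \bsC)(\cA^{(0)})^\half e_a$ lies in the $(+1)$-eigenspace whereas $\cV^{(0)}$ lies in the $(-1)$-eigenspace (because $\bsC \cV^{(0)} = -\cV^{(0)}$), and the two eigenspaces of the Hermitian matrix $\bsC$ are orthogonal, the inner product $(\cV^{(0)})^* X (\bsone + \bsC)(\cA^{(0)})^\half e_a$ vanishes outright. Along the way I would verify, using $\bsC \cV^{(0)} = -\cV^{(0)}$ and $(\cV^{(0)})^*(\cA^{(0)})^\half = (\cF^{(0)})^*$, that $\delta V_a$ satisfies the defining relations (\ref{TS}) of $T_{\cV^{(0)}} S$ — in fact $(\cV^{(0)})^* \delta V_a = 0$, so $\delta V_a \in (\ker(\xi_{* \cV^{(0)}}))^\perp$ — which guarantees that $\xi_{* \cV^{(0)}}(\delta V_a)$ is a bona fide element of $T_{\rho^{(0)}} \cO$. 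Assembling the three pieces and invoking (\ref{F_gradients_def}) then yields the stated formulae.
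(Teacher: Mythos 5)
Your proposal is correct and follows essentially the same route as the paper: the chain rule through $\tau \circ \Psi_a$ at the positive real value $\cF_a^{(0)}$ kills the real parts, the Leibniz differentiation of $\Psi$ via Propositions \ref{PROPOSITION_e_and_sigma} and \ref{PROPOSITION_W} separates the three gradients, and the identifications via $\zeta_a$ and $\delta V_a$ match the paper's computations term by term. Your eigenspace argument for $\Imag((\cV^{(0)})^* X (\bsone + \bsC)(\cA^{(0)})^\half e_a) = 0$ (in fact the whole expression vanishes, since $X \in \mfk$ commutes with $\bsC$) is a slightly cleaner way of verifying what the paper checks directly in (\ref{P6_xi_enters}), but it is the same underlying step.
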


\begin{proof}
First, take an arbitrary tangent vector $\delta y \in T_{y^{(0)}} G$. 
Recalling (\ref{Psi}), and the conditions (\ref{e_and_sigma_cond}), 
(\ref{W_cond}), for small values of the real parameter $t$ we can write 
the first order expansion
\be
\begin{split}
    & \Psi(y^{(0)} + t \delta y + \cdots, Y^{(0)}, \rho^{(0)}) \\
    & \quad
    = \sigma((Y^{(0)})_-)^{-1} (y^{(0)} + t \delta y + \cdots)^* W(\rho^{(0)})
    = \cF^{(0)} - t \bsC (y^{(0)})^{-1} (\delta y) \bsC \cF^{(0)} + \cdots,
\end{split}
\label{P6_Psi_expansion_G}
\ee
thus for each $a \in \bN_n$ we have
\be
    \Psi_a (y^{(0)} + t \delta y + \cdots, Y^{(0)}, \rho^{(0)})
    = \cF_a^{(0)} + t w_a + \cdots   
\label{P6_Psi_a_expansion_G}
\ee
with 
$w_a = - \tr(\bsC (y^{(0)})^{-1} (\delta y) \bsC \cF^{(0)} e_a^*) \in \bC$.
Recalling (\ref{tau_derivative}), we can write
\be
    m_a (y^{(0)} + t \delta y + \cdots, Y^{(0)}, \rho^{(0)})
    = \tau (\cF_a^{(0)} + t w_a + \cdots)
    = 1 + t \ri \frac{\Imag(w_a)}{\cF_a^{(0)}} + \cdots,
\label{P6_m_a_expansion_G}
\ee
thus clearly $(\nabla^G (\Real(m_a)))(u^{(0)}) = 0$. Moreover, by inspecting 
$w_a$ and (\ref{zeta_a}), we obtain
\be
    (\Imag(m_a))_{* u^{(0)}} (\delta y \oplus 0 \oplus 0) 
    = \frac{\Imag(w_a)}{\cF_a^{(0)}} 
    = \frac{\tr( \zeta_a (y^{(0)})^{-1} \delta y)}{\cF_a^{(0)}},
\label{P6_m_a_der_G}
\ee
so (\ref{G_grad_imag_m_a}) also follows immediately.

Second, take an arbitrary tangent vector 
$\delta Y \in \mfg \cong T_{Y^{(0)}} \mfg$. According to Proposition 
\ref{PROPOSITION_e_and_sigma}, for small values of $t \in \bR$ we have 
the first order expansion
\be
    \sigma((Y^{(0)} + t \delta Y)_-)
    = \sigma(\bsLambda^{(0)} + t (\delta Y)_-)
    = \bsone + t \delta \sigma + \cdots,
\label{P6_k_expansion_mfg}
\ee
with the Lie algebra element $\delta \sigma \in \mfm^\perp$ displayed in 
(\ref{P5_delta_sigma}). Therefore,
\be
    \Psi(y^{(0)}, Y^{(0)} + t \delta Y, \rho^{(0)})
    = \sigma((Y^{(0)} + t \delta Y)_-)^{-1} (y^{(0)})^* W(\rho^{(0)})
    = \cF^{(0)} - t (\delta \sigma) \cF^{(0)} + \cdots,
\label{P6_Psi_expansion_mfg}
\ee
and so for each $a \in \bN_n$ we can write
\be
    \Psi_a (y^{(0)}, Y^{(0)} + t \delta Y, \rho^{(0)})
    = \cF_a^{(0)} + t w'_a + \cdots   
\label{P6_Psi_a_expansion_mfg}
\ee
with $w'_a = - \tr((\delta \sigma) \cF^{(0)} e_a^*) \in \bC$. Utilizing 
$\tau$ (\ref{tau}) and its derivative (\ref{tau_derivative}), we obtain
\be
    m_a (y^{(0)}, Y^{(0)} + t \delta Y, \rho^{(0)})
    = \tau (\cF_a^{(0)} + t w'_a + \cdots)
    = 1 + t \ri \frac{\Imag(w'_a)}{\cF_a^{(0)}} + \cdots,
\label{P6_m_a_expansion_mfg}
\ee
therefore $(\nabla^\mfg (\Real(m_a)))(u^{(0)}) = 0$ is immediate. 
Remembering (\ref{zeta_a}) it is also clear that
\be
    \Imag(w'_a) 
    = \langle \zeta_a, \delta \sigma \rangle
    = - \left \langle 
            (\zeta_a)_{\mfm^\perp}, 
            (\wad_{\bsLambda^{(0)}})^{-1} ((\delta Y)_{\mfa^\perp})
    \right \rangle
    = \left \langle
        (\wad_{\bsLambda^{(0)}})^{-1} ((\zeta_a)_{\mfm^\perp}), \delta Y
    \right \rangle.  
\label{P6_Im_w'_a_mfg}
\ee
Thus, by combining (\ref{P6_m_a_expansion_mfg}) and (\ref{P6_Im_w'_a_mfg}), 
we end up with the formula
\be
    (\Imag(m_a))_{* u^{(0)}} (0 \oplus \delta Y \oplus 0) 
    = \frac{\Imag(w'_a)}{\cF_a^{(0)}} 
    = \left \langle
        \frac{1}{\cF_a^{(0)}}(\wad_{\bsLambda^{(0)}})^{-1} 
            ((\zeta_a)_{\mfm^\perp}), 
        \delta Y
    \right \rangle,
\label{P6_Imag_m_a_der_G}
\ee
that readily implies (\ref{mfg_grad_imag_m_a}).

Third, take an arbitrary $X \in \mfk$. Remembering (\ref{e_and_sigma_cond}), 
(\ref{W_cond}), and (\ref{Psi}), notice that
\be
\begin{split}
    & \Psi(y^{(0)}, Y^{(0)}, \rho^{(0)} + t [X, \rho^{(0)}] + \cdots) \\
    & \quad
    = \cF^{(0)} 
        + t \left(
            (\cA^{(0)})^\half  X \cV^{(0)} 
                - \frac{(\cV^{(0)})^* X \cV^{(0)}}{N} \cF^{(0)}
        \right)
        + \cdots, 
\end{split}
\label{P6_Psi_expansion_cO}
\ee
whence for all $a \in \bN_n$ we can write
\be
    \Psi_a (y^{(0)}, Y^{(0)}, \rho^{(0)} + t [X, \rho^{(0)}] + \cdots)
    = \cF^{(0)}_a + t w''_a + \cdots
\label{P6_Psa_a_expansion}
\ee
with the complex number
\be
    w''_a 
    = \tr((\cA^{(0)})^\half  X \cV^{(0)} e_a^*) 
        - \frac{\cF^{(0)}_a}{N} \tr(X \cV^{(0)} (\cV^{(0)})^*).
\label{P6_w''_a}
\ee
It readily follows that
\be
    m_a (y^{(0)}, Y^{(0)}, \rho^{(0)} + t [X, \rho^{(0)}] + \cdots)
    = \tau(\cF^{(0)}_a + t w''_a + \cdots)
    = 1 + t \ri \frac{\Imag(w''_a)}{\cF^{(0)}_a} + \cdots,
\label{P6_m_a_expansion}
\ee
from where we get at once that $(\nabla^\cO (\Real(m_a)))(u^{(0)}) = 0$ and
\be
    (\Imag(m_a))_{* u^{(0)}} (0 \oplus 0 \oplus [X, \rho^{(0)}]) 
    = \frac{\Imag(w''_a)}{\cF^{(0)}_a}.
\label{P6_Imag_m_a_grad_cO}
\ee
At this point notice that $\ri \cV^{(0)} (\cV^{(0)})^* \in \mfk$. Therefore, 
recalling (\ref{zeta_a}), we can write
\be
\begin{split}
    & \Imag(w''_a) 
    = - \tr((\cA^{(0)})^\half \zeta_a (\cA^{(0)})^{-\half} X)
        + \frac{\cF^{(0)}_a}{N} \tr( \ri \cV^{(0)} (\cV^{(0)})^* X) \\
    & \quad
    = \cF^{(0)}_a 
        \left \langle
            - \frac{1}{\cF^{(0)}_a} 
                \left(
                    (\cA^{(0)})^\half \zeta_a (\cA^{(0)})^{-\half}
                \right)_{\mfk}
            + \frac{\ri}{N} \cV^{(0)} (\cV^{(0)})^*, X
        \right \rangle.
\end{split}
\label{P6_Imag_w''_a}
\ee
Now, one can verify that the column vector $\delta V_a$ displayed in 
(\ref{delta_V_a}) does belong to the tangent space $T_{\cV^{(0)}} S$ 
(\ref{TS}). Furthermore, recalling (\ref{xi_der}) we find that
\be
    - \frac{1}{\cF^{(0)}_a} 
        \left(
            (\cA^{(0)})^\half \zeta_a (\cA^{(0)})^{-\half}
        \right)_{\mfk}
    + \frac{\ri}{N} \cV^{(0)} (\cV^{(0)})^*
    = \xi_{* \cV^{(0)}} (\delta V_a),
\label{P6_xi_enters}
\ee
thus the relationship (\ref{cO_grad_imag_m_a}) also follows.
\end{proof}

Having the necessary gradients at our disposal, now we are ready to work
out the tensorial Poisson brackets appearing in (\ref{varphi_bsA_0_PB}) 
and (\ref{varphi_varphi_PB}).

\begin{LEMMA}
\label{LEMMA_PBs_with_bsA_0}
At the point $u^{(0)}$ (\ref{u_0}) we can write
\begin{align}
    & \{ k \stackrel{\otimes}{,} \bsA^{(0)} \}^\ext (u^{(0)})
    = \cA^{(0)}_2
    \left(
        \sum_{\alpha, \eps}
            \frac{X^{+, \eps}_\alpha 
                    \otimes X^{-, \eps}_\alpha}{\alpha(\lambda^{(0)})}
    \right)
    + \left(
        \sum_{\alpha, \eps}
            \frac{X^{+, \eps}_\alpha 
                    \otimes X^{-, \eps}_\alpha}{\alpha(\lambda^{(0)})}
        \right) \cA^{(0)}_2,
    \label{PB_k_and_bsA_0} \\
    & \{ m \stackrel{\otimes}{,} \bsA^{(0)} \}^\ext (u^{(0)})
    = - \cA^{(0)}_2 \left( \sum_{a = 1}^n D^+_a \otimes S^{(0)}_a \right) 
        - \left( \sum_{a = 1}^n D^+_a \otimes S^{(0)}_a \right) \cA^{(0)}_2,
    \label{PB_m_and_bsA_0} 
\end{align}
where for each $a \in \bN_n$ we have
\be
\begin{split}
    S^{(0)}_a
    = \frac{1}{\sqrt{2} \cF^{(0)}_a}
        & \left\{
            - \frac{\sqrt{2} \Real(\cF^{(0)}_{n + a}) 
                    X^{-, \ri}_{2 \veps_a}}
                {2 \lambda^{(0)}_a}
        \right. \\
    & \quad 
    + \sum_{c = 1}^{a - 1} 
        \left(
            \frac{\cF^{(0)}_c X^{-, \ri}_{\veps_c - \veps_a}}
                {\lambda^{(0)}_c - \lambda^{(0)}_a} 
            - \frac{\Real(\cF^{(0)}_{n + c}) 
                    X^{-, \ri}_{\veps_c + \veps_a}}
                {\lambda^{(0)}_c + \lambda^{(0)}_a}
            + \frac{\Imag(\cF^{(0)}_{n + c}) 
                    X^{-, \rr}_{\veps_c + \veps_a}}
                {\lambda^{(0)}_c + \lambda^{(0)}_a}
        \right) \\
    & \left. \quad 
    + \sum_{c = a + 1}^n 
        \left(
            \frac{\cF^{(0)}_c X^{-, \ri}_{\veps_a - \veps_c}}
                {\lambda^{(0)}_a - \lambda^{(0)}_c} 
            - \frac{\Real(\cF^{(0)}_{n + c}) 
                    X^{-, \ri}_{\veps_a + \veps_c}}
                {\lambda^{(0)}_a + \lambda^{(0)}_c}
            - \frac{\Imag(\cF^{(0)}_{n + c}) 
                    X^{-, \rr}_{\veps_a + \veps_c} }
                {\lambda^{(0)}_a + \lambda^{(0)}_c}
        \right)
    \right\}.
\end{split}
\label{S_a}
\ee
\end{LEMMA}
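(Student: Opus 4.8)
The plan is to read off both tensorial brackets from the master Poisson bracket (\ref{PB_ext}) by pairing each leg against test matrices. Concretely, for the matrix-valued functions $\bsA^{(0)}$ and $k$ (respectively the diagonal entries of $m$), I would compute the scalar brackets $\{ \tr(v_1 k), \tr(v_2 \bsA^{(0)}) \}^\ext(u^{(0)})$ and then recover the tensorial object from the St.~Petersburg identity $\{ \tr(v_1 A), \tr(v_2 B) \}^\ext = \sum_\mu \tr(v_1 P_\mu) \tr(v_2 Q_\mu)$ valid whenever $\{ A \stackrel{\otimes}{,} B \}^\ext = \sum_\mu P_\mu \otimes Q_\mu$. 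Splitting each complex trace into its real and imaginary parts lets me feed in the gradients already computed in Propositions \ref{PROPOSITION_bsA_0_gradients}, \ref{PROPOSITION_k_gradients} and \ref{PROPOSITION_m_gradients}.

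The decisive simplification is that in (\ref{PB_ext}) only one term survives for each bracket. By Proposition \ref{PROPOSITION_bsA_0_gradients} the function $\bsA^{(0)}$ has a purely ``$G$-type'' gradient, its $\mfg$- and $\cO$-gradients vanishing at $u^{(0)}$. By Proposition \ref{PROPOSITION_k_gradients} the function $k$ has a purely ``$\mfg$-type'' gradient, and by Proposition \ref{PROPOSITION_m_gradients} the only gradient of $m_a$ that can pair with $\bsA^{(0)}$ is again the $\mfg$-type one. Hence, for both brackets, the terms $\langle \nabla^G \cdot, \nabla^\mfg \cdot \rangle$, $\langle [\nabla^\mfg \cdot, \nabla^\mfg \cdot], Y \rangle$ and $\omega^\cO(\nabla^\cO \cdot, \nabla^\cO \cdot)$ all drop out, leaving only $- \langle \nabla^\mfg(\cdot), \nabla^G(\tr(v_2 \bsA^{(0)})) \rangle$. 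After combining the real and imaginary pieces of (\ref{G_grad_bsA_0_Re})--(\ref{G_grad_bsA_0_Im}), the surviving $G$-gradient assembles into $\nabla^G \tr(v_2 \bsA^{(0)})(u^{(0)}) = v_2 \cA^{(0)} + (\cA^{(0)})^{-1} v_2^*$.

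It then remains to expand the $\mfg$-gradient and reassemble. Using (\ref{mfg_grad_k}) together with the orthonormality relations (\ref{X_scalar_product}) I would write $(v_1 - v_1^*)_{\mfm^\perp} = - \sum_{\alpha, \eps} \langle X^{+, \eps}_\alpha, v_1 - v_1^* \rangle X^{+, \eps}_\alpha$, and then apply $(\wad_{\bsLambda^{(0)}})^{-1}$ via (\ref{commut_rel}), which turns each $X^{+, \eps}_\alpha$ into $\alpha(\lambda^{(0)})^{-1} X^{-, \eps}_\alpha$. Pairing the resulting vector against $v_2 \cA^{(0)} + (\cA^{(0)})^{-1} v_2^*$ and using cyclicity of the trace converts the two summands into the left and right multiplications by $\cA^{(0)}$ on the second tensor leg, yielding $\cA^{(0)}_2 R + R \cA^{(0)}_2$ with $R = \sum_{\alpha, \eps} \alpha(\lambda^{(0)})^{-1} X^{+, \eps}_\alpha \otimes X^{-, \eps}_\alpha$. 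For the second bracket the same mechanism applies once one notes, from (\ref{mfg_grad_imag_m_a}) and the explicit expansion (\ref{zeta_a_expanded}), that $\nabla^\mfg \Imag(m_a)(u^{(0)}) = \tfrac{1}{\sqrt{2}} S^{(0)}_a$ with $S^{(0)}_a$ as in (\ref{S_a}); the factor $\sqrt{2}$ is precisely the one relating the diagonal generator $D^+_a$ to the entries $m_a$ of the $M$-valued $m$, and it produces the first legs $D^+_a$ in (\ref{PB_m_and_bsA_0}).

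The main obstacle I anticipate is the bookkeeping in this reassembly step: reconciling the $\cA^{(0)}$ and $(\cA^{(0)})^{-1}$ contributions coming from the two halves of $\nabla^G \tr(v_2 \bsA^{(0)})$, and the real/imaginary splittings, into the single symmetric quadratic expressions $\cA^{(0)}_2(\cdots) + (\cdots) \cA^{(0)}_2$. This hinges on the Hermiticity $(\cA^{(0)})^* = \cA^{(0)}$, on $X^{+, \eps}_\alpha$ lying in $\mfk$ (anti-Hermitian) and $X^{-, \eps}_\alpha$ in $\mfp$ (Hermitian), and on the fact that $k$ and $m$ take values in the compact subgroups $K$ and $M$, so that their tangent directions at $u^{(0)}$ are confined to $\mfm^\perp$ and $\mfm$ respectively. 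This confinement is exactly what legitimizes the clean representatives in (\ref{PB_k_and_bsA_0}) and (\ref{PB_m_and_bsA_0}) despite the gauge freedom inherent in presenting a tensorial bracket whose first leg is $K$- or $M$-valued.
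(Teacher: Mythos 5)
Your proposal is correct and follows essentially the same route as the paper's proof: expand both tensor legs against the basis $\{v_I\}$ (equivalently, test matrices), observe that in (\ref{PB_ext}) only the cross term $-\langle \nabla^\mfg(\cdot), \nabla^G(\cdot)\rangle$ survives because $\bsA^{(0)}$ has only a $G$-gradient while $k$ and $\Imag(m_a)$ contribute through their $\mfg$-gradients, and then reassemble using (\ref{X_scalar_product}), (\ref{commut_rel}) and the identification $\nabla^\mfg(\Imag(m_a))(u^{(0)}) = S^{(0)}_a/\sqrt{2}$. The bookkeeping you flag (the $v_2^*$ term versus the trace pairing) is resolved exactly as you suggest, via $v^* = -\bsC v \bsC$ for $v \in \mfg$ and $(\cA^{(0)})^{-1}\bsC = \bsC\cA^{(0)}$, which is how the paper arrives at the symmetric combination $\cA^{(0)}X^{-,\eps}_\alpha + X^{-,\eps}_\alpha\cA^{(0)}$.
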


\begin{proof}
To prove (\ref{PB_k_and_bsA_0}), we utilize the family of $N \times N$ 
matrices $\{ v_I \}$ defined in (\ref{v_I}), that forms a basis in the 
complex linear space $\mfgl(N, \bC)$. Recalling Proposition 
\ref{PROPOSITION_k_gradients}, we see that among the members of the
basis $\{ v_I \}$ only the vectors $X^{+,\eps}_\alpha$ 
$(\alpha \in \cR_+, \eps \in \{ \pm \})$ generate non-trivial gradients 
of the form
\be
    \nabla^\mfg (\Real(\tr(X^{+,\eps}_\alpha k)))(u^{(0)})
    = \frac{1}{\alpha(\lambda^{(0)})} X^{-, \eps}_\alpha.
\label{L7_k_nontrivial_gradients}
\ee
Let $\{ v^I \} \subseteq \mfgl(N, \bC)$ denote the dual basis of 
$\{ v_I \}$ provided by the trace-pairing of $\mfgl(N, \bC)$. Recalling 
Proposition \ref{PROPOSITION_bsA_0_gradients} and the explicit expression
of the gradients (\ref{L7_k_nontrivial_gradients}), the Poisson bracket 
formula (\ref{PB_ext}) allows us to write
\be
\begin{split}
    & \{ k \stackrel{\otimes}{,} \bsA^{(0)} \}^\ext (u^{(0)}) 
    = \sum_{I, J} \{ \tr(v_I k), \tr(v_J \bsA^{(0)}) \}^\ext (u^{(0)})
        v^I \otimes v^J \\
    & \quad
    = \sum_{\alpha, \eps} \sum_J
        \frac{\tr(v_J \cA^{(0)} X^{-, \eps}_\alpha)
                - \tr(v_J \bsC X^{-, \eps}_\alpha \bsC \cA^{(0)})}
        {\alpha(\lambda^{(0)})}
        X^{+, \eps}_\alpha \otimes v^J \\
    & \quad
    = \sum_{\alpha, \eps} 
        \frac{1}{\alpha(\lambda^{(0)})}
        X^{+, \eps}_\alpha 
            \otimes 
            (\cA^{(0)} X^{-, \eps}_\alpha 
                + X^{-, \eps}_\alpha \cA^{(0)}),
\end{split}
\label{L7_PB_k_and_bsA_0}
\ee
so (\ref{PB_k_and_bsA_0}) follows immediately. 

Making use of Propositions 
\ref{PROPOSITION_bsA_0_gradients} and \ref{PROPOSITION_m_gradients}, let 
us notice that the Poisson bracket formula (\ref{PB_ext}) yields
\be
\begin{split}
    & \{ m \stackrel{\otimes}{,} \bsA^{(0)} \}^\ext (u^{(0)}) 
    = - \sum_{a = 1}^n \sum_{k, l = 1}^N
        \{ \bsA^{(0)}_{k, l}, m_a \}^\ext (u^{(0)}) 
        (e_{a, a} + e_{n + a, n + a}) \otimes e_{k, l} \\
    & \quad
    = - \sum_{a = 1}^n 
            D^+_a 
            \otimes 
            \sqrt{2} 
                \left( 
                    \cA^{(0)} (\nabla^\mfg (\Imag(m_a)))(u^{(0)})
                    +(\nabla^\mfg (\Imag(m_a)))(u^{(0)}) \cA^{(0)} 
                \right).
\end{split}
\label{L7_PB_m_and_bsA_0}
\ee
Therefore, by projecting the Lie algebra element $\zeta_a$ 
(\ref{zeta_a_expanded}) onto the subspace $\mfm^\perp$, the application of 
(\ref{mfg_grad_imag_m_a}) immediately leads to (\ref{PB_m_and_bsA_0}). 
\end{proof}

\begin{LEMMA}
\label{LEMMA_PB_m_and_k}
At $u^{(0)}$ (\ref{u_0}) we have the trivial Poisson bracket 
$\{ k \stackrel{\otimes}{,} k \}^\ext (u^{(0)}) = 0$, whereas
\be
    \{ m \stackrel{\otimes}{,} k \}^\ext (u^{(0)}) 
    = \sum_{a = 1}^n D^+_a \otimes T^{(0)}_a,
\label{PB_m_and_k}
\ee
where for each $a \in \bN_n$ we have
\be
\begin{split}
    T^{(0)}_a
    = \frac{1}{\sqrt{2} \cF^{(0)}_a}
        & \left\{
            \frac{\sqrt{2} \Real(\cF^{(0)}_{n + a}) 
                    X^{+, \ri}_{2 \veps_a}}
                {2 \lambda^{(0)}_a}
        \right. \\
    & \quad 
    + \sum_{c = 1}^{a - 1} 
        \left(
            \frac{\cF^{(0)}_c 
                    X^{+, \ri}_{\veps_c - \veps_a}}
                {\lambda^{(0)}_c - \lambda^{(0)}_a} 
            + \frac{\Real(\cF^{(0)}_{n + c}) 
                    X^{+, \ri}_{\veps_c + \veps_a}}
                {\lambda^{(0)}_c + \lambda^{(0)}_a}
            - \frac{\Imag(\cF^{(0)}_{n + c}) 
                    X^{+, \rr}_{\veps_c + \veps_a}}
                {\lambda^{(0)}_c + \lambda^{(0)}_a}
        \right) \\
    & \left. \quad 
    + \sum_{c = a + 1}^n 
        \left(
            - \frac{\cF^{(0)}_c X^{+, \ri}_{\veps_a - \veps_c}}
                {\lambda^{(0)}_a - \lambda^{(0)}_c} 
            + \frac{\Real(\cF^{(0)}_{n + c}) 
                    X^{+, \ri}_{\veps_a + \veps_c}}
                {\lambda^{(0)}_a + \lambda^{(0)}_c}
            + \frac{\Imag(\cF^{(0)}_{n + c}) 
                    X^{+, \rr}_{\veps_a + \veps_c} }
                {\lambda^{(0)}_a + \lambda^{(0)}_c}
        \right)
    \right\}.
\end{split}
\label{T_a}
\ee
\end{LEMMA}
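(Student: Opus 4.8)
The plan is to compute both tensorial brackets by expanding $k$ and $m$ in the basis $\{v_I\}$ of (\ref{v_I}) and feeding the gradients recorded in Propositions \ref{PROPOSITION_k_gradients} and \ref{PROPOSITION_m_gradients} into the Poisson bracket formula (\ref{PB_ext}). Two facts I would isolate at the outset are: first, that $k$-functions carry only the $\nabla^\mfg$-gradient (by (\ref{G_grad_k}) and (\ref{cO_grad_k})), which is nonzero solely on the vectors $X^{+,\eps}_\alpha$, where Proposition \ref{PROPOSITION_k_gradients} gives $\nabla^\mfg\tr(X^{+,\eps}_\alpha k)(u^{(0)}) = \alpha(\lambda^{(0)})^{-1}X^{-,\eps}_\alpha \in \mfa^\perp$; and second, that since $\Real(m_a)$ has vanishing gradients at $u^{(0)}$, one has $\nabla m_a = \ri\,\nabla\Imag(m_a)$ there, so the $m_a$-gradients are just $\ri$ times those in Proposition \ref{PROPOSITION_m_gradients}.

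For $\{k \stackrel{\otimes}{,} k\}$, writing $k = \sum_I \tr(v_I k)\, v^I$ with $\{v^I\}$ the trace-dual basis, the bracket becomes $\sum_{I,J}\{\tr(v_I k),\tr(v_J k)\}^\ext(u^{(0)})\,v^I\otimes v^J$. In (\ref{PB_ext}) the terms containing $\nabla^G$ or $\nabla^\cO$ drop because both vanish for $k$-functions, leaving only $-\langle[\nabla^\mfg,\nabla^\mfg],Y^{(0)}\rangle$. Since both $\nabla^\mfg$-gradients lie in $\mfa^\perp$ and $Y^{(0)}=\bsLambda^{(0)}\in\mfa$, the $\Ad$-invariance of $\langle\,,\rangle$ turns this into the pairing of an element of $\mfa^\perp$ with $\wad_{\bsLambda^{(0)}}(\mfa^\perp)\subseteq\mfm^\perp$; as $\mfm^\perp\perp\mfa^\perp$ (one sits in $\mfk$, the other in $\mfp$, which are orthogonal), the pairing vanishes identically, so $\{k\stackrel{\otimes}{,}k\}^\ext(u^{(0)})=0$.

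For $\{m\stackrel{\otimes}{,}k\}$, since $m=\sum_a m_a(e_{a,a}+e_{n+a,n+a})$ is diagonal, I would write the bracket as $\sum_{a=1}^n(e_{a,a}+e_{n+a,n+a})\otimes\{m_a,k\}$ and expand $\{m_a,k\}=\sum_J\{m_a,\tr(v_J k)\}\,v^J = \ri\sum_J\{\Imag(m_a),\tr(v_J k)\}\,v^J$. Evaluating $\{\Imag(m_a),\tr(v_J k)\}$ with (\ref{PB_ext}), the $\nabla^G$- and $\nabla^\cO$-terms of $\tr(v_J k)$ vanish, and the $\langle[\nabla^\mfg,\nabla^\mfg],Y^{(0)}\rangle$-term again dies because both $\nabla^\mfg$-gradients sit in $\mfa^\perp$, so only $\langle\nabla^G\Imag(m_a),\nabla^\mfg\tr(v_J k)\rangle$ survives. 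This is nonzero precisely for $v_J=X^{+,\eps}_\alpha$, where using $\nabla^G\Imag(m_a)=(\cF^{(0)}_a)^{-1}\zeta_a$ it equals $(\cF^{(0)}_a\,\alpha(\lambda^{(0)}))^{-1}\langle\zeta_a,X^{-,\eps}_\alpha\rangle$. With the trace-dual $v^{X^{+,\eps}_\alpha}=-X^{+,\eps}_\alpha$, forced by the sign in (\ref{X_scalar_product}), this assembles to $\{m_a,k\}=-\ri(\cF^{(0)}_a)^{-1}\sum_{\alpha,\eps}\alpha(\lambda^{(0)})^{-1}\langle\zeta_a,X^{-,\eps}_\alpha\rangle\,X^{+,\eps}_\alpha$.

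Finally, rewriting $e_{a,a}+e_{n+a,n+a}=-\sqrt{2}\,\ri\,D^+_a$ via (\ref{D_+_c}) collapses the prefactors to $-\sqrt{2}/\cF^{(0)}_a$, giving $\{m\stackrel{\otimes}{,}k\}^\ext(u^{(0)})=\sum_a D^+_a\otimes T^{(0)}_a$ with $T^{(0)}_a = -\sqrt{2}(\cF^{(0)}_a)^{-1}\sum_{\alpha,\eps}\alpha(\lambda^{(0)})^{-1}\langle\zeta_a,X^{-,\eps}_\alpha\rangle\,X^{+,\eps}_\alpha$. By (\ref{X_scalar_product}) the pairings $\langle\zeta_a,X^{-,\eps}_\alpha\rangle$ are simply the $\mfa^\perp$-coefficients of $\zeta_a$, which I read off directly from the explicit expansion (\ref{zeta_a_expanded}); substituting these together with the values of $\alpha(\lambda^{(0)})$ (the corresponding sums and differences of the $\lambda^{(0)}_c$) reproduces (\ref{T_a}) term by term. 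I expect the genuine mathematics here, namely the vanishing of the $\nabla^\mfg$-$\nabla^\mfg$ contribution through the $\mfk$-$\mfp$ orthogonality, to be very short; the main labor, and the only place errors are likely, is the sign and normalization bookkeeping arising from the factor $\ri$ in $D^+_a$, the minus sign in the trace-dual basis, and the overall $\half$ in front of (\ref{zeta_a_expanded}).
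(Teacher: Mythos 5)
Your proposal is correct and follows essentially the same route as the paper: expand $k$ and $m$ in the trace-dual basis of $\{v_I\}$, feed the gradients of Propositions \ref{PROPOSITION_k_gradients} and \ref{PROPOSITION_m_gradients} into (\ref{PB_ext}), kill the $\langle[\nabla^\mfg\cdot,\nabla^\mfg\cdot],\bsLambda^{(0)}\rangle$ terms by orthogonality of the Cartan decomposition, and read off the $\mfa^\perp$-coefficients of $\zeta_a$ from (\ref{zeta_a_expanded}). Your intermediate formula $T^{(0)}_a=-\sqrt{2}(\cF^{(0)}_a)^{-1}\sum_{\alpha,\eps}\alpha(\lambda^{(0)})^{-1}\langle\zeta_a,X^{-,\eps}_\alpha\rangle X^{+,\eps}_\alpha$ is exactly the paper's $-\sqrt{2}(\cF^{(0)}_a)^{-1}(\wad_{\bsLambda^{(0)}})^{-1}((\zeta_a)_{\mfa^\perp})$, with all signs and normalizations in agreement.
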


\begin{proof}
Working with the basis $\{ v_I \}$ (\ref{v_I}) of $\mfgl(N, \bC)$, from 
Proposition \ref{PROPOSITION_k_gradients} we see that
\be
    \nabla^\mfg (\Real(\tr(v_I k)))(u^{(0)}) 
        \in \mfa^\perp \subseteq \mfp.
\label{L8_bsA_gradient_mfg}
\ee
Keeping in mind the orthogonal $\bZ_2$-gradation (\ref{gradation}), the 
Poisson bracket formula (\ref{PB_ext}) gives rise to the relationship
\be
\begin{split}
    & \{ \tr(v_I k), \tr(v_J k) \}^\ext (u^{(0)}) \\
    & \quad
    = \left \langle
        [ \nabla^\mfg (\Real(\tr(v_I k)))(u^{(0)}),
        \nabla^\mfg (\Real(\tr(v_J k)))(u^{(0)}) ], \bsLambda^{(0)}
    \right \rangle
    = 0,
\end{split}
\label{L8_PB_k_and_k}
\ee
thus the equation $\{ k \stackrel{\otimes}{,} k \}^\ext (u^{(0)}) = 0$ 
follows immediately. 

As we have already observed at the beginning of the proof of Lemma 
\ref{LEMMA_PBs_with_bsA_0}, among the members of the basis $\{ v_I \}$ 
only the vectors $v = X^{+, \eps}_\alpha$ generate non-trivial
gradients for the component functions $\tr(v k)$ at the point $u^{(0)}$. 
Utilizing these gradients (\ref{L7_k_nontrivial_gradients}), Propositions 
\ref{PROPOSITION_k_gradients} and \ref{PROPOSITION_m_gradients} allow us 
to write
\be
\begin{split}
    & \{ m \stackrel{\otimes}{,} k \}^\ext(u^{(0)}) 
    = \sum_{a = 1}^n \sum_{\alpha, \eps}
        \{ m_a, \tr(X^{+, \eps}_\alpha k) \}^\ext(u^{(0)})
            (e_{a, a} + e_{n + a, n + a}) \otimes (-X^{+, \eps}_\alpha) \\
    & \quad
    = -\sum_{a = 1}^n \sum_{\alpha, \eps}
        \left \langle
            \nabla^G (\Imag(m_a))(u^{(0)}), 
            \nabla^\mfg (\Real(\tr(X^{+, \eps}_\alpha k)))(u^{(0)})
        \right \rangle
            \sqrt{2} D^+_a \otimes X^{+, \eps}_\alpha \\
    & \quad
    = - \sum_{a = 1}^n \frac{\sqrt{2}}{\cF^{(0)}_a} D^+_a 
        \otimes
        \sum_{\alpha, \eps}
            \langle 
                \zeta_a, 
                (\wad_{\bsLambda^{(0)}})^{-1} (X^{+, \eps}_\alpha) 
            \rangle 
            X^{+, \eps}_\alpha \\
    & \quad
    = - \sum_{a = 1}^n D^+_a 
        \otimes 
        \frac{\sqrt{2}}{\cF^{(0)}_a} 
            (\wad_{\bsLambda^{(0)}})^{-1}((\zeta_a)_{\mfa^\perp}).
\end{split}
\label{L8_PB_m_and_k}
\ee
Remembering the explicit formula of $\zeta_a$ (\ref{zeta_a_expanded})
and the commutation relations (\ref{commut_rel}), the Poisson bracket 
(\ref{PB_m_and_k}) also follows.
\end{proof}

\begin{LEMMA}
\label{LEMMA_PB_m_and_m}
At the distinguished point $u^{(0)}$ (\ref{u_0}) we have the Poisson bracket
\be
    \{ m \stackrel{\otimes}{,} m \}^\ext (u^{(0)}) 
    = \sum_{1 \leq a < b \leq n} \varPsi^{(0)}_{a, b}
        (D^+_a \otimes D^+_b - D^+_b \otimes D^+_a)
\label{PB_m_and_m}
\ee
with the coefficients
\be
    \varPsi^{(0)}_{a, b} 
    = \frac{1}{\lambda^{(0)}_a - \lambda^{(0)}_b}
        + \frac{\lambda^{(0)}_a - \lambda^{(0)}_b}
            {(\lambda^{(0)}_a - \lambda^{(0)}_b)^2 + 4 \mu^2}.
\label{varPsi}
\ee
\end{LEMMA}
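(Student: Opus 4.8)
The plan is to reduce the matrix bracket to the scalar brackets $\{m_a, m_b\}^\ext(u^{(0)})$ and to evaluate the latter with the master formula (\ref{PB_ext}), fed by the gradients of $\Imag(m_a)$ from Proposition \ref{PROPOSITION_m_gradients}. First I would use that $m$ is diagonal with $m(u^{(0)}) = \bsone$: the entries $m_{k, l}$ with $k \neq l$ vanish identically, so only the diagonal survives, and since $e_{a, a} + e_{n + a, n + a} = -\ri \sqrt{2}\, D^+_a$ the tensorial bracket becomes $-2 \sum_{a, b} \{m_a, m_b\}^\ext(u^{(0)})\, D^+_a \otimes D^+_b$. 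Because all three gradients of $\Real(m_a)$ vanish at $u^{(0)}$ (Proposition \ref{PROPOSITION_m_gradients}), every bracket involving a real part drops out, whence $\{m_a, m_b\}^\ext = -\{\Imag(m_a), \Imag(m_b)\}^\ext$. After antisymmetrizing in $a, b$, the assertion is therefore equivalent to the single scalar identity $2\{\Imag(m_a), \Imag(m_b)\}^\ext(u^{(0)}) = \varPsi^{(0)}_{a, b}$ with $\varPsi^{(0)}_{a, b}$ as in (\ref{varPsi}); I would fix $a < b$ throughout.

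Next I would substitute the gradients (\ref{G_grad_imag_m_a})--(\ref{cO_grad_imag_m_a}) into (\ref{PB_ext}). The $\mfg$--$\mfg$ commutator term vanishes for parity reasons: both $\nabla^\mfg \Imag(m_a)$ and $\nabla^\mfg \Imag(m_b)$ lie in $\mfa^\perp \subseteq \mfp$, so their commutator sits in $\mfk$ and is orthogonal to $Y = \bsLambda^{(0)} \in \mfp$. The two mixed $G$--$\mfg$ terms I would compute with the orthogonality relations (\ref{X_scalar_product}): as $\nabla^\mfg \Imag(m_b) = (\cF^{(0)}_b)^{-1}(\wad_{\bsLambda^{(0)}})^{-1}((\zeta_b)_{\mfm^\perp})$ lies in $\mfa^\perp$, only the $\mfa^\perp$-part of $\nabla^G \Imag(m_a) = (\cF^{(0)}_a)^{-1}\zeta_a$ pairs with it, and by the explicit expansion (\ref{zeta_a_expanded}) only the roots common to $\zeta_a$ and $\zeta_b$, namely $\veps_a \pm \veps_b$, can contribute. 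The crucial point is the relative sign between the $X^{+, \ri}_{\veps_c - \veps_a}$ and $X^{-, \ri}_{\veps_c - \veps_a}$ contributions in the $c < a$ sum of (\ref{zeta_a_expanded}): it makes the $\veps_a - \veps_b$ contributions of the two mixed terms opposite, while their $\veps_a + \veps_b$ contributions are equal and cancel in the difference. The net outcome is $\langle \nabla^G \Imag(m_a), \nabla^\mfg \Imag(m_b)\rangle - \langle \nabla^\mfg \Imag(m_a), \nabla^G \Imag(m_b)\rangle = 1/(2(\lambda^{(0)}_a - \lambda^{(0)}_b))$, the first summand of $\varPsi^{(0)}_{a, b}$.

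The main obstacle is the orbit term, which by (\ref{omega_cO_OK}) and (\ref{cO_grad_imag_m_a}) equals $2\mu\, \Imag((\delta V_a)^* \delta V_b)$, because $\delta V_a$ in (\ref{delta_V_a}) involves the square root $g = (\cA^{(0)})^\half$, unavailable in closed form. I would sidestep the square root by invoking only the identities it satisfies: $g = g^*$, $g^2 = \cA^{(0)}$, and $g \bsC g = \bsC$ (as $g \in \exp(\mfp) \subseteq G$), together with $g \cV^{(0)} = \cF^{(0)}$ from (\ref{cV}) and $\bsC \cV^{(0)} = -\cV^{(0)}$ from the definition (\ref{S}) of $S$. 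Expanding $(\delta V_a)^* \delta V_b$ and applying these relations to the columns $g e_a$, one finds $(g e_a)^* \bsC (g e_b) = \bsC_{a, b} = 0$ and $(g e_a)^* (g e_b) = \cA^{(0)}_{a, b}$, while every term containing $\cV^{(0)}$ collapses to a real number and is annihilated by $\Imag$. Hence only $\cA^{(0)}_{a, b}/(8\mu^2 \cF^{(0)}_a \cF^{(0)}_b)$ survives; substituting $\cA^{(0)}_{a, b} = 2\ri\mu \cF^{(0)}_a \cF^{(0)}_b/(2\ri\mu + \lambda^{(0)}_a - \lambda^{(0)}_b)$ from (\ref{cA}) and taking $2\mu\,\Imag(\cdot)$ produces exactly $\half (\lambda^{(0)}_a - \lambda^{(0)}_b)/((\lambda^{(0)}_a - \lambda^{(0)}_b)^2 + 4\mu^2)$, the second summand of $\varPsi^{(0)}_{a, b}$.

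Finally I would add the mixed and orbit contributions to get $\{\Imag(m_a), \Imag(m_b)\}^\ext(u^{(0)}) = \half \varPsi^{(0)}_{a, b}$, and feed this back through the reduction of the first paragraph to conclude (\ref{PB_m_and_m})--(\ref{varPsi}). I expect the only genuinely delicate points to be the careful bookkeeping of the $\zeta_a$-coefficients in the mixed terms, where the root-by-root cancellation at $\veps_a + \veps_b$ must be checked, and the recognition that the unknown square-root matrix enters the orbit term solely through the group and sphere identities above; the remaining manipulations are routine.
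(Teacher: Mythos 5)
Your proposal is correct and follows essentially the same route as the paper's proof: reduce the tensorial bracket to the scalar brackets $\{m_a,m_b\}^\ext(u^{(0)})$ via $e_{a,a}+e_{n+a,n+a}=-\ri\sqrt{2}\,D^+_a$ and antisymmetry, then evaluate the two mixed $G$--$\mfg$ terms with the gradients of Proposition \ref{PROPOSITION_m_gradients} (yielding the $1/(\lambda^{(0)}_a-\lambda^{(0)}_b)$ piece, with the $\veps_a+\veps_b$ contributions cancelling exactly as you describe) and the orbit term via $2\mu\,\Imag((\delta V_a)^*\delta V_b)=2\mu\,\Imag\bigl(\cA^{(0)}_{a,b}/(8\mu^2\cF^{(0)}_a\cF^{(0)}_b)\bigr)$. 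Your explicit justification of the orbit-term reduction through $g\bsC g=\bsC$, $g\cV^{(0)}=\cF^{(0)}$ and $\bsC\cV^{(0)}=-\cV^{(0)}$ merely supplies detail the paper leaves implicit, and all signs and coefficients check out.
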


\begin{proof}
Using the antisymmetry of the Poisson bracket, we find that
\be
\begin{split}
    & \{ m \stackrel{\otimes}{,} m \}^\ext(u^{(0)})
    = \sum_{a, b = 1}^n \{ m_a, m_b \}^\ext(u^{(0)})
        (e_{a, a} + e_{n + a, n + a}) 
        \otimes
        (e_{b, b} + e_{n + b, n + b}) \\
    & \quad
    = - 2 \sum_{1 \leq a < b \leq n} \{ m_a, m_b \}^\ext(u^{(0)})
        (D^+_a \otimes D^+_b - D^+_b \otimes D^+_a).
\end{split}
\label{L9_m_m_PB}
\ee
To proceed further, let us choose arbitrary $a, b \in \bN_n$ satisfying 
$a < b$. Notice that the Poisson bracket formula (\ref{PB_ext}) naturally 
leads to the expression
\be
\begin{split}
    \{ m_a, m_b \}^\ext(u^{(0)})
    = & - \langle 
        (\nabla^G (\Imag(m_a)))(u^{(0)}), 
        (\nabla^\mfg (\Imag(m_b)))(u^{(0)})
    \rangle \\
    & + \langle 
        (\nabla^\mfg (\Imag(m_a)))(u^{(0)}), 
        (\nabla^G (\Imag(m_b)))(u^{(0)})
    \rangle \\
    & - \omega^\cO_{\rho^{(0)}}
        ((\nabla^\cO (\Imag(m_a)))(u^{(0)}), 
        (\nabla^\cO (\Imag(m_b)))(u^{(0)})).
\end{split}
\label{L9_m_a_m_b_PB}
\ee
However, by utilizing Proposition \ref{PROPOSITION_m_gradients}, each term on 
the right hand side of the above equation can be cast into a fairly explicit 
form. Starting with the first term, the application of (\ref{zeta_a_expanded}) 
gives rise to the relationship
\be
\begin{split}
    & \langle 
        (\nabla^G (\Imag(m_a)))(u^{(0)}), 
        (\nabla^\mfg (\Imag(m_b)))(u^{(0)})
    \rangle 
    = \frac{1}{\cF^{(0)}_a \cF^{(0)}_b}
    \langle 
        \zeta_a, 
        (\wad_{\bsLambda^{(0)}})^{-1}((\zeta_b)_{\mfm^\perp})
    \rangle \\
    & \quad
    = \frac{1}{4 \cF^{(0)}_a \cF^{(0)}_b}
    \left(
        \frac{\cF^{(0)}_a \cF^{(0)}_b}
            {\lambda^{(0)}_a - \lambda^{(0)}_b}
        + \frac{\Real(\cF^{(0)}_{n + a}) \Real(\cF^{(0)}_{n + b})}
            {\lambda^{(0)}_a + \lambda^{(0)}_b}
        - \frac{\Imag(\cF^{(0)}_{n + a}) \Imag(\cF^{(0)}_{n + b})}
            {\lambda^{(0)}_a + \lambda^{(0)}_b}
    \right).
\end{split}
\label{L9_first_term}
\ee
Keeping in mind that $a < b$, a similar argument provides
\be
\begin{split}
    & \langle 
        (\nabla^\mfg (\Imag(m_a)))(u^{(0)}), 
        (\nabla^G (\Imag(m_b)))(u^{(0)})
    \rangle \\
    & \quad
    = \frac{1}{4 \cF^{(0)}_a \cF^{(0)}_b}
    \left(
        - \frac{\cF^{(0)}_a \cF^{(0)}_b}
            {\lambda^{(0)}_a - \lambda^{(0)}_b}
        + \frac{\Real(\cF^{(0)}_{n + a}) \Real(\cF^{(0)}_{n + b})}
            {\lambda^{(0)}_a + \lambda^{(0)}_b}
        - \frac{\Imag(\cF^{(0)}_{n + a}) \Imag(\cF^{(0)}_{n + b})}
            {\lambda^{(0)}_a + \lambda^{(0)}_b}
    \right).
\end{split}
\label{L9_second_term}
\ee
Now, let us turn to the third appearing in (\ref{L9_m_a_m_b_PB}). Utilizing 
the concise formula (\ref{omega_cO_OK}) for the symplectic form 
$\omega^\cO$ (\ref{omega_cO}), the application of the equations 
(\ref{cO_grad_imag_m_a}), (\ref{delta_V_a}) and (\ref{cA}) yields that
\be
\begin{split}
    & \omega^\cO_{\rho^{(0)}}
        ((\nabla^\cO (\Imag(m_a)))(u^{(0)}), 
        (\nabla^\cO (\Imag(m_b)))(u^{(0)})) 
    = \omega^\cO_{\xi(\cV^{(0)})}
        (\xi_{* \cV^{(0)}} (\delta V_a), \xi_{* \cV^{(0)}} (\delta V_b)) \\
    & \quad
    = 2 \mu \Imag((\delta V_a)^* \delta V_b)
    = 2 \mu \Imag 
    \left(
        \frac{\cA^{(0)}_{a, b}}{8 \mu^2 \cF^{(0)}_a \cF^{(0)}_b}
    \right) 
    = \half \frac{\lambda^{(0)}_a - \lambda^{(0)}_b}
                {(\lambda^{(0)}_a - \lambda^{(0)}_b)^2 + 4 \mu^2}.
\end{split}
\label{L9_third_term}
\ee
Now, by simply putting together the above equations, the Lemma follows 
at once.
\end{proof}

At this point we are in a position to provide an explicit formula for the 
$r$-matrix (\ref{r}). Remembering (\ref{varphi_bsA_0_PB}), let us notice
that Lemma \ref{LEMMA_PBs_with_bsA_0} itself implies that $r$ is in fact 
linear in $\cA$, having the form
\be
    r_{12}(\lambda^{(0)}, \theta^{(0)})
    = (p^+_{12})^{(0)} \cA^{(0)}_2 + \cA^{(0)}_2 (p^-_{12})^{(0)}
\label{r_linear_in_cA}
\ee
with the $\mfg \otimes \mfg$-valued functions
\be
\begin{split}
    (p^\pm_{12})^{(0)} 
    = & - \sum_{\alpha, \eps}
        \frac{X^{+, \eps}_\alpha \otimes X^{-, \eps}_\alpha}
            {\alpha(\lambda^{(0)})}
    + \sum_{a = 1}^n D^+_a \otimes S^{(0)}_a
    \pm \half \{ \varphi \stackrel{\otimes}{,} \varphi \}^\ext (u^{(0)}).
\end{split}
\label{p_pm}
\ee
Recalling (\ref{varphi_varphi_PB}), the above expressions can be further 
expanded. Indeed, by simply plugging the formulae displayed in Lemmas 
\ref{LEMMA_PB_m_and_k} and \ref{LEMMA_PB_m_and_m} into (\ref{p_pm}), we may
obtain explicit expressions for both $p^\pm_{12}$ and $r$. However, since 
$r$ is linear in $\cA$ as dictated by (\ref{r_linear_in_cA}), the linear 
$r$-matrix Poisson bracket (\ref{cA_PB}) can be cast into a \emph{quadratic} 
form. Also, since the point $(\lambda^{(0)}, \theta^{(0)})$ 
(\ref{fixed_point_in_cP_R}) we fixed at the beginning of Subsection
\ref{SUBSECTION_local_extension} was an arbitrary element of $\cP^R$, the
zero superscripts become superfluous and can be safely omitted. With the 
usual conventions for the symmetric and the antisymmetric tensor products,
\be
    X \vee Y = X \otimes Y + Y \otimes X
    \quad \text{and} \quad        
    X \wedge Y = X \otimes Y - Y \otimes X,
\label{symm_antisymm_convention}
\ee
we end up with the following result.

\begin{THEOREM}
\label{THEOREM_C_n_r_matrix}
The Lax matrix $\cA$ (\ref{cA}) of the rational $C_n$ RSvD model with
two independent coupling parameters obeys the quadratic Poisson bracket
\be
    \{ \cA \stackrel{\otimes}{,} \cA \}^R
    = \bsa_{12} \cA_1 \cA_2 + \cA_1 \bsb_{12} \cA_2 
        - \cA_2 \bsc_{12} \cA_1 - \cA_1 \cA_2 \bsd_{12}
\label{C_n_quadratic_PB}
\ee
with the $\mfg \otimes \mfg$-valued dynamical structure coefficients
\begin{align}
    & \bsa_{12} 
    = \sum_{\alpha, \eps} 
        \frac{X^{-, \eps}_\alpha \wedge X^{+, \eps}_\alpha}
            {\alpha(\lambda)}
    + \sum_{a = 1}^n D^+_a \wedge (S_a + T_a)
    + \sum_{1 \leq a < b \leq n} \varPsi_{a, b} D^+_a \wedge D^+_b, 
    \label{bsa_12} \\
    & \bsb_{12} 
    = \sum_{\alpha, \eps} 
        \frac{X^{-, \eps}_\alpha \vee X^{+, \eps}_\alpha}
            {\alpha(\lambda)}
    - \sum_{a = 1}^n (D^+_a \vee S_a + D^+_a \wedge T_a)
    - \sum_{1 \leq a < b \leq n} \varPsi_{a, b} D^+_a \wedge D^+_b, 
    \label{bsb_12} \\    
    & \bsc_{12} 
    = \sum_{\alpha, \eps} 
        \frac{X^{-, \eps}_\alpha \vee X^{+, \eps}_\alpha}
            {\alpha(\lambda)}
    - \sum_{a = 1}^n (D^+_a \vee S_a - D^+_a \wedge T_a)
    + \sum_{1 \leq a < b \leq n} \varPsi_{a, b} D^+_a \wedge D^+_b, 
    \label{bsc_12} \\
    & \bsd_{12} 
    = \sum_{\alpha, \eps} 
        \frac{X^{-, \eps}_\alpha \wedge X^{+, \eps}_\alpha}
            {\alpha(\lambda)}
    + \sum_{a = 1}^n D^+_a \wedge (S_a - T_a)
    - \sum_{1 \leq a < b \leq n} \varPsi_{a, b} D^+_a \wedge D^+_b,
    \label{bsd_12} 
\end{align}
where the constituent objects are defined in Lemmas 
\ref{LEMMA_PBs_with_bsA_0}, \ref{LEMMA_PB_m_and_k} and 
\ref{LEMMA_PB_m_and_m}.
\end{THEOREM}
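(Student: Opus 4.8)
The plan is to convert the already-established \emph{linear} $r$-matrix relation (\ref{cA_PB}) into the quadratic form (\ref{C_n_quadratic_PB}) by a purely algebraic rearrangement, exploiting the crucial fact (\ref{r_linear_in_cA}) that the $r$-matrix is itself linear in $\cA$. First I would record the flipped object $r_{21}$: starting from $r_{12} = p^+_{12}\cA_2 + \cA_2 p^-_{12}$ with $p^\pm_{12} \in \mfg\otimes\mfg$ (see (\ref{p_pm}), with the zero superscripts dropped), interchanging the two tensor factors gives $r_{21} = p^+_{21}\cA_1 + \cA_1 p^-_{21}$, where $p^\pm_{21}$ denotes the transpose of $p^\pm_{12}$ in $\mfg\otimes\mfg$.

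Next I would substitute these into the right-hand side of (\ref{cA_PB}) and expand the two commutators $[r_{12}, \cA_1]$ and $[r_{21}, \cA_2]$. The only structural input needed is that $\cA_1 = \cA\otimes\bsone$ and $\cA_2 = \bsone\otimes\cA$ commute with one another, while neither commutes with $p^\pm$. Collecting the resulting monomials according to the positions of the two copies of $\cA$ produces exactly the four patterns $(\cdot)\cA_1\cA_2$, $\cA_1(\cdot)\cA_2$, $\cA_2(\cdot)\cA_1$ and $\cA_1\cA_2(\cdot)$ of (\ref{C_n_quadratic_PB}), with coefficients
\be
\bsa_{12} = p^+_{12} - p^+_{21}, \quad
\bsb_{12} = - p^+_{12} - p^-_{21}, \quad
\bsc_{12} = - p^-_{12} - p^+_{21}, \quad
\bsd_{12} = p^-_{12} - p^-_{21}.
\ee

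To render these explicit I would write $p^\pm_{12} = q_{12} \pm \half \Phi_{12}$, where $q_{12} = -\sum_{\alpha,\eps}\frac{X^{+,\eps}_\alpha\otimes X^{-,\eps}_\alpha}{\alpha(\lambda)} + \sum_a D^+_a\otimes S_a$ collects the symmetric contribution and $\Phi_{12} = \{\varphi\stackrel{\otimes}{,}\varphi\}^\ext$. The antisymmetry of the Poisson bracket gives $\Phi_{21} = -\Phi_{12}$, so each of $\bsa_{12}, \ldots, \bsd_{12}$ reduces to a combination of $q_{12}\pm q_{21}$ and $\Phi_{12}$. Forming the antisymmetric combination $q_{12} - q_{21}$ and the symmetric combination $q_{12} + q_{21}$ simply converts the bare tensor products into the antisymmetric and symmetric products $X^{-,\eps}_\alpha\wedge X^{+,\eps}_\alpha$, $D^+_a\wedge S_a$ and their $\vee$-analogues, matching the first two sums in each of (\ref{bsa_12})-(\ref{bsd_12}).

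The final ingredient is the explicit value of $\Phi_{12}$, which I would assemble from the expansion (\ref{varphi_varphi_PB}) together with Lemmas \ref{LEMMA_PB_m_and_k} and \ref{LEMMA_PB_m_and_m}. Using $\{k\stackrel{\otimes}{,}k\}^\ext = 0$, $\{m\stackrel{\otimes}{,}k\}^\ext = \sum_a D^+_a\otimes T_a$, the antisymmetry-derived identity $\{k\stackrel{\otimes}{,}m\}^\ext = -\sum_a T_a\otimes D^+_a$, and $\{m\stackrel{\otimes}{,}m\}^\ext = \sum_{a<b}\varPsi_{a,b} D^+_a\wedge D^+_b$, I obtain $\Phi_{12} = \sum_a D^+_a\wedge T_a + \sum_{a<b}\varPsi_{a,b} D^+_a\wedge D^+_b$. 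Substituting this into the four coefficients then yields precisely (\ref{bsa_12})-(\ref{bsd_12}). The step demanding the most care is the sign bookkeeping under the flip---in particular reading off $\{k\stackrel{\otimes}{,}m\}^\ext$ from $\{m\stackrel{\otimes}{,}k\}^\ext$ via antisymmetry, and keeping track of the noncommutativity of $p^\pm$ with $\cA_1$ and $\cA_2$ while expanding the commutators; once these are handled, the remainder is a routine rearrangement.
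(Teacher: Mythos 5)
Your proposal is correct and follows essentially the same route as the paper: the paper's proof likewise substitutes the $\cA$-linear $r$-matrix (\ref{r_linear_in_cA}) into (\ref{cA_PB}), reads off exactly the coefficients $\bsa_{12} = p^+_{12} - p^+_{21}$, $\bsb_{12} = -p^+_{12} - p^-_{21}$, $\bsc_{12} = -p^+_{21} - p^-_{12}$, $\bsd_{12} = p^-_{12} - p^-_{21}$, and then invokes (\ref{p_pm}), (\ref{varphi_varphi_PB}) and Lemmas \ref{LEMMA_PB_m_and_k} and \ref{LEMMA_PB_m_and_m}. Your explicit evaluation of $\{\varphi\stackrel{\otimes}{,}\varphi\}^\ext$ and the $\vee$/$\wedge$ bookkeeping simply spell out the details the paper leaves to the reader, and they check out.
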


\begin{proof}
Due to (\ref{r_linear_in_cA}), the Poisson bracket (\ref{cA_PB}) takes
the quadratic form (\ref{C_n_quadratic_PB}) with
\be
    \bsa_{12} = p^+_{12} - p^+_{21},
    \quad
    \bsb_{12} = - p^+_{12} - p^-_{21},
    \quad
    \bsc_{12} = - p^+_{21} - p^-_{12},
    \quad
    \bsd_{12} = p^-_{12} - p^-_{21}.
\label{T10_p_pm_and_bsabcd}
\ee
Remembering (\ref{p_pm}), (\ref{varphi_varphi_PB}), and the explicit formulae 
displayed in Lemmas \ref{LEMMA_PB_m_and_k} and \ref{LEMMA_PB_m_and_m}, the 
Theorem follows.
\end{proof}

We conclude this section with an important remark. Since the quadratic 
structure matrices (\ref{bsa_12}-\ref{bsd_12}) are derived from an $r$-matrix 
linear in $\cA$ as described in (\ref{r_linear_in_cA}), from the relationships 
(\ref{T10_p_pm_and_bsabcd}) it follows immediately that they satisfy the 
consistency conditions
\be
    \bsa_{21} = - \bsa_{12},
    \quad
    \bsd_{21} = - \bsd_{12},
    \quad
    \bsb_{21} = \bsc_{12},
    \quad
    \bsa_{12} + \bsb_{12} = \bsc_{12} + \bsd_{12}.
\label{consistency_conds}
\ee
The above observation can be paraphrased as follows. If a Lax matrix $\cA$ 
obeys a tensorial Poisson bracket (\ref{cA_PB}), and if the governing 
$r$-matrix is itself linear in $\cA$ as in (\ref{r_linear_in_cA}), then the
tensorial Poisson bracket can be rewritten as a quadratic bracket 
(\ref{C_n_quadratic_PB}) with quadratic structure matrices obeying the 
consistency conditions (\ref{consistency_conds}) automatically. It is a nice, 
but essentially trivial algebraic fact that the converse of this statement is 
also true. Indeed, suppose that a Lax matrix $\cA$ obeys a quadratic Poisson 
bracket (\ref{C_n_quadratic_PB}) with coefficients satisfying 
(\ref{consistency_conds}). Under these assumptions the quadratic bracket can 
be cast into a linear form (\ref{cA_PB}). More precisely, the governing 
$r$-matrix can be written in the form of (\ref{r_linear_in_cA}) with
\be
    p^+_{12} = \frac{\bsa_{12} + \bsu_{12}}{2}
    \quad \text{and} \quad
    p^-_{12} = \frac{\bsd_{12} - \bsb_{12} - \bsc_{12} - \bsu_{12}}{2},
\label{recovering_p_pm}
\ee
where $\bsu_{12}$ is an arbitrary $\mfg \vee \mfg$-valued function on the 
phase space, i.e. it obeys the symmetry condition $\bsu_{21} = \bsu_{12}$. 
This observation plays a crucial role in the developments of the next 
section.

\section{Classical $r$-matrix structure of the $BC_n$-type model}
\label{SECTION_BC_r_matrix}
\setcounter{equation}{0}
Utilizing a symplectic reduction framework, so far we have studied the
classical $r$-matrix structure for the rational $C_n$ RSvD model with two
independent coupling parameters $\mu$ and $\nu$. However, to handle the 
$BC_n$-type model as well, in this section we slightly change our point of 
view. Switching to a purely algebraic approach, we shall generalize Theorem 
\ref{THEOREM_C_n_r_matrix} to cover the most general rational $BC_n$ RSvD 
model with three independent coupling constants. As an added bonus, 
at the end of this section we will provide a Lax representation of the 
dynamics, too.

To describe the Lax matrix of the rational $BC_n$ RSvD system with the 
additional third real parameter $\kappa$, we need the functions
\be
    \bsalpha(x) = \frac{\sqrt{x + \sqrt{x^2 + \kappa^2}}}{\sqrt{2 x}}
    \quad \text{and} \quad
    \bsbeta(x) = \ri \kappa \frac{1}{\sqrt{2 x}}
                \frac{1}{\sqrt{x + \sqrt{x^2 + \kappa^2}}},
\label{bsalpha_and_bsbeta}
\ee
where $x \in (0, \infty)$. Also, with each 
$\lambda = (\lambda_1, \ldots, \lambda_n) \in \mfc$ we associate the 
group element
\be
    h(\lambda) 
        = \begin{bmatrix}
        \diag(\bsalpha(\lambda_1), \ldots, \bsalpha(\lambda_n)) 
            & \diag(\bsbeta(\lambda_1), \ldots, \bsbeta(\lambda_n)) 
        \\
        -\diag(\bsbeta(\lambda_1), \ldots, \bsbeta(\lambda_n)) 
            & \diag(\bsalpha(\lambda_1), \ldots, \bsalpha(\lambda_n))
        \end{bmatrix}
    \in G.
\label{h}
\ee
In \cite{Pusztai_NPB2012} we proved that the smooth function
$\tilde{\cA} \colon \cP^R \rightarrow G$ defined by the formula
\be
    \tcA (\lambda, \theta)
    = h(\lambda)^{-1} \cA(\lambda, \theta) h(\lambda)^{-1}
    \qquad
    ((\lambda, \theta) \in \cP^R)
\label{tcA}
\ee
provides a Lax matrix for the rational $BC_n$ RSvD model (\ref{H_R}) with 
the independent coupling parameters $\mu$, $\nu$ and $\kappa$. Our first 
goal in this section to construct a quadratic algebra relation for the 
Lax matrix $\tcA$ with structure coefficients satisfying the consistency 
conditions analogous to (\ref{consistency_conds}).

Recalling (\ref{D_-_c}) and (\ref{cA}), we start with the observation
\be
    \PD{\cA}{\theta_c}
    = \sqrt{2}(D^-_c \cA + \cA D^-_c)
    \qquad
    (c \in \bN_n).
\label{cA_theta_der}
\ee
Therefore, upon introducing the $\mfg \otimes \mfg$-valued function
\be
    \Gamma_{12} 
    = \frac{1}{\sqrt{2}} 
        \sum_{c = 1}^n D^-_c \otimes \PD{(h^{-1})}{\lambda_c},
\label{Gamma_12}
\ee
we can write the tensorial Poisson bracket 
\be
    \{ \cA_1, h^{-1}_2 \}^R = \Gamma_{12} \cA_1 + \cA_1 \Gamma_{12}.
\label{A_h_inv_PB}
\ee
Now, by simply applying the Leibniz rule, from (\ref{C_n_quadratic_PB}) we 
get that
\be
    \{ \tcA \stackrel{\otimes}{,} \tcA \}^R
    = \tbsa_{12} \tcA_1 \tcA_2 + \tcA_1 \tbsb_{12} \tcA_2 
        - \tcA_2 \tbsc_{12} \tcA_1 - \tcA_1 \tcA_2 \tbsd_{12}
\label{tcA_tcA_PB}
\ee
with the dynamical coefficients
\begin{align}
    & \tbsa_{12} 
    = h^{-1}_1 h^{-1}_2 \bsa_{12} h_1 h_2
        + h^{-1}_1 \Gamma_{12} h_1 h_2
        - h^{-1}_2 \Gamma_{21} h_1 h_2, 
    \label{tbsa_12} \\
    & \tbsb_{12}
    = h_1 h^{-1}_2 \bsb_{12} h^{-1}_1 h_2
        + h_1 \Gamma_{12} h^{-1}_1 h_2
        - h_1 h^{-1}_2 \Gamma_{21} h_2,
    \label{tbsb_12} \\
    & \tbsc_{12}
    = h^{-1}_1 h_2 \bsc_{12} h_1 h^{-1}_2
        - h^{-1}_1 h_2 \Gamma_{12} h_1
        + h_2 \Gamma_{21} h_1 h^{-1}_2,
    \label{tbsc_12} \\
    & \tbsd_{12}
    = h_1 h_2 \bsd_{12} h^{-1}_1 h^{-1}_2
        - h_1 h_2 \Gamma_{12} h^{-1}_1
        + h_1 h_2 \Gamma_{21} h^{-1}_2.
    \label{tbsd_12}
\end{align}
Since the decorations coming from $h$ are `equally distributed' among these
new functions, we expect that likewise they satisfy the consistency conditions 
analogous to (\ref{consistency_conds}). Somewhat surprisingly, this naive idea 
is fully confirmed by the following result.

\begin{THEOREM}
\label{THEOREM_BC_n_quadratic_PB}
The functions (\ref{tbsa_12}-\ref{tbsd_12}) appearing in the tensorial 
Poisson bracket (\ref{tcA_tcA_PB}) obey the consistency conditions
\be
    \tbsa_{21} = - \tbsa_{12},
    \quad
    \tbsd_{21} = - \tbsd_{12},
    \quad
    \tbsb_{21} = \tbsc_{12},
    \quad
    \tbsa_{12} + \tbsb_{12} = \tbsc_{12} + \tbsd_{12}.
\label{tconsistency_conds}
\ee
In other words, the Lax matrix $\tcA$ (\ref{tcA}) of the rational $BC_n$
RSvD system satisfies a quadratic Poisson bracket (\ref{tcA_tcA_PB}) 
characterized by the consistent dynamical structure coefficients 
(\ref{tbsa_12}-\ref{tbsd_12}). 
\end{THEOREM}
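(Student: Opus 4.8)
The plan is to handle the three antisymmetry-type relations in \eqref{tconsistency_conds} and the sum rule by completely different means. For $\tbsa_{21} = -\tbsa_{12}$, $\tbsd_{21} = -\tbsd_{12}$ and $\tbsb_{21} = \tbsc_{12}$ I would simply feed the flip $1 \leftrightarrow 2$ into the defining formulae \eqref{tbsa_12}--\eqref{tbsd_12}. Exchanging the tensorial labels sends $h_1 \leftrightarrow h_2$, $\Gamma_{12} \leftrightarrow \Gamma_{21}$ and $\bsa_{12} \mapsto \bsa_{21}$, and so on; since factors carrying distinct labels commute we have $h_1 h_2 = h_2 h_1$, so every conjugating prefactor is left unchanged and the two surviving $\Gamma$-terms in each expression merely trade places. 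Inserting the $C_n$ relations $\bsa_{21} = -\bsa_{12}$, $\bsd_{21} = -\bsd_{12}$, $\bsb_{21} = \bsc_{12}$ from \eqref{consistency_conds} then collapses the flipped expressions termwise onto $-\tbsa_{12}$, $-\tbsd_{12}$ and $\tbsc_{12}$. These three relations in fact express nothing but the antisymmetry of $\{ \tcA \stackrel{\otimes}{,} \tcA \}^R$.

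The sum rule $\tbsa_{12} + \tbsb_{12} = \tbsc_{12} + \tbsd_{12}$ is the substantive claim, and I expect it to be the main obstacle: it is independent of antisymmetry and genuinely uses the detailed form of the ingredients. Writing $\Xi = \tbsa_{12} + \tbsb_{12} - \tbsc_{12} - \tbsd_{12}$, I would split it as $\Xi = \Xi_{\mathrm{s}} + \Xi_\Gamma$ according to whether a term descends from the conjugated structure coefficients or from $\Gamma$. The structure part reads
\be
    \Xi_{\mathrm{s}}
    = h^{-1}_1 h^{-1}_2 \bsa_{12} h_1 h_2
        + h_1 h^{-1}_2 \bsb_{12} h^{-1}_1 h_2
        - h^{-1}_1 h_2 \bsc_{12} h_1 h^{-1}_2
        - h_1 h_2 \bsd_{12} h^{-1}_1 h^{-1}_2 ,
\ee
and here the $C_n$ sum rule $\bsa_{12} + \bsb_{12} = \bsc_{12} + \bsd_{12}$ cannot be applied directly, precisely because the four blocks are dressed by four \emph{distinct} conjugations $\Ad(h^{-1}_1 h^{-1}_2)$, $\Ad(h_1 h^{-1}_2)$, $\Ad(h^{-1}_1 h_2)$ and $\Ad(h_1 h_2)$.

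The redeeming feature, which makes the statement plausible, is that the $\Gamma$-part collapses completely. Using $\Gamma_{12} = \tfrac{1}{\sqrt{2}} \sum_c D^-_c \otimes \partial_{\lambda_c}(h^{-1})$ from \eqref{Gamma_12}, the four $\Gamma_{12}$-terms and the four $\Gamma_{21}$-terms in $\Xi_\Gamma$ reassemble into the single antisymmetric combination
\be
    \Xi_\Gamma
    = \frac{1}{\sqrt{2}} \sum_{c = 1}^n M_c \wedge N_c,
    \qquad
    M_c = h^{-1} D^-_c h + h D^-_c h^{-1},
    \quad
    N_c = \partial_{\lambda_c}(h^{-1})\, h + h\, \partial_{\lambda_c}(h^{-1}),
\ee
where I have used that $h$ is Hermitian (indeed $h^* = h$ for the $h$ of \eqref{h}), so that $M_c$ and $N_c$ are Hermitian and $N_c = -\bigl( h^{-1} \partial_{\lambda_c} h + (\partial_{\lambda_c} h) h^{-1} \bigr)$. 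As a useful consistency check, at $\kappa = 0$ one has $h = \bsone$, whence $N_c = 0$ and $\Xi_\Gamma = 0$, while $\Xi_{\mathrm{s}}$ reduces to the $C_n$ sum rule; thus the whole of the new $\kappa$-dependence is carried by $M_c$ and $N_c$.

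It remains to prove $\Xi_{\mathrm{s}} = -\Xi_\Gamma$, and this is the hardest and most technical step. Concretely, I would eliminate $\bsd_{12}$ from $\Xi_{\mathrm{s}}$ through the $C_n$ sum rule, rewrite the resulting differences of conjugations acting on $\bsa_{12}$, $\bsb_{12}$, $\bsc_{12}$, and then substitute the explicit building blocks $D^\pm_c$, $X^{\pm,\eps}_\alpha$, $S_a$, $T_a$ supplied by Lemmas \ref{LEMMA_PBs_with_bsA_0}--\ref{LEMMA_PB_m_and_m}, tracking how the specific matrix $h$ of \eqref{h} conjugates each of them. The expectation is that the residual structure-constant terms organize themselves exactly into $-\tfrac{1}{\sqrt{2}} \sum_c M_c \wedge N_c$; verifying this matching, which is where the precise ``equal distribution'' of the $h$-factors among \eqref{tbsa_12}--\eqref{tbsd_12} is exploited, is the crux of the argument. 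Once $\Xi = 0$ is established all four relations \eqref{tconsistency_conds} hold; in particular, the validity of \eqref{tconsistency_conds} is precisely what permits the bracket \eqref{tcA_tcA_PB} to be recast, via \eqref{recovering_p_pm}, with an $r$-matrix that is linear in $\tcA$, as needed in the sequel.
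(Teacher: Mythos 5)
Your treatment of the first three relations is fine and matches the paper's: flipping the tensor labels leaves the conjugating prefactors invariant (since factors in different slots commute), swaps the two $\Gamma$-terms, and the $C_n$ relations (\ref{consistency_conds}) then give $\tbsa_{21}=-\tbsa_{12}$, $\tbsd_{21}=-\tbsd_{12}$, $\tbsb_{21}=\tbsc_{12}$ termwise. Your computation of $\Xi_\Gamma=\tfrac{1}{\sqrt{2}}\sum_c M_c\wedge N_c$ is also correct, and is essentially the paper's reorganization of the eight $\Gamma$-terms into the objects $\Omega_{12}$, $\Omega_{21}$ of (\ref{T11_Omega_12}). However, the proof stops exactly where the theorem begins: you state that ``the expectation is that the residual structure-constant terms organize themselves exactly into $-\tfrac{1}{\sqrt{2}}\sum_c M_c\wedge N_c$'' and that ``verifying this matching \dots\ is the crux of the argument,'' but you never verify it. Since $\Xi_{\mathrm{s}}$ does \emph{not} vanish on its own (as you correctly note, the four blocks carry four distinct conjugations, so the $C_n$ sum rule cannot be applied directly), the identity $\Xi_{\mathrm{s}}=-\Xi_\Gamma$ is the entire content of the fourth consistency condition, and leaving it as an expectation is a genuine gap, not a deferral of routine algebra.

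To close it one needs the ingredients the paper supplies. After conjugating $\Xi$ by $h_1h_2$, everything is expressed through $\cH=h^2$ and the four dressed copies of $\bsa_{12},\dots,\bsd_{12}$ combine, \emph{because the $D^+_a$-blocks common to all four structure matrices commute with $\cH$} (relation (\ref{CR_D_+})), into the single sum (\ref{T11_part_1_step_1}) over root vectors. One must then compute $\cH^{-1}X^{\pm,\eps}_\alpha\cH$ for every basis vector (relations (\ref{CR_D_-})--(\ref{CR_XmiP})) and invoke the functional identities (\ref{T11_funct_eq_1})--(\ref{T11_funct_eq_3}) satisfied by $\cP(x)=\bsalpha(x)^2-\bsbeta(x)^2$ and $\cQ(x)=2\bsalpha(x)\bsbeta(x)$ to see that all cross-terms with $a\neq b$ cancel, leaving only the $2\veps_c$-contributions of (\ref{T11_part_1_step_2}); these are then seen to cancel against your $\Xi_\Gamma$ written out via (\ref{T11_cH_der}). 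None of the cancellation mechanisms --- the $\Ad(\cH)$-invariance of the $D^+$-part, the specific conjugation rules, or the $\cP$--$\cQ$ identities --- appears in your proposal, so the argument as written does not establish $\tbsa_{12}+\tbsb_{12}=\tbsc_{12}+\tbsd_{12}$.
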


\begin{proof}
A moment of reflection reveals that $\tbsa_{21} = - \tbsa_{12}$,
$\tbsd_{21} = - \tbsd_{12}$, and $\tbsb_{21} = \tbsc_{12}$, whence it is 
enough to prove that $\tbsa_{12} + \tbsb_{12} = \tbsc_{12} + \tbsd_{12}$.
Since the verification of this last equation is basically an involved 
algebraic computation, in the following we wish to highlight only the key 
steps. First, we introduce the functions
\be
    \cP(x) = \sqrt{1 + \frac{\kappa^2}{x^2}} 
    \quad \text{and} \quad
    \cQ(x) = \frac{\ri \kappa}{x}
    \qquad
    (x \in (0, \infty)).
\label{T11_cP_and_cQ}
\ee
Remembering (\ref{bsalpha_and_bsbeta}), we see that 
\be
    \cP(x) = \bsalpha(x)^2 - \bsbeta(x)^2 
    \quad \text{and} \quad
    \cQ(x) = 2 \bsalpha(x) \bsbeta(x). 
\label{T11_cP_cQ_bsalpha_bsbeta}
\ee    
To make the presentation a slightly simpler, we also introduce the 
$G$-valued function 
\be
    \cH = h^2. 
\label{T11_cH_def}
\ee    
Now, recalling (\ref{bsalpha_and_bsbeta}), (\ref{h}) and 
(\ref{T11_cP_cQ_bsalpha_bsbeta}), with the notations
\be
    \cP_a = \cP(\lambda_a)
    \quad \text{and} \quad
    \cQ_a = \cQ(\lambda_a)
    \qquad
    (a \in \bN_n)
\label{T11_cP_a_and_cQ_a_def}
\ee
we can clearly write
\be
    \cH
    = \begin{bmatrix}
        \diag(\cP_1, \ldots, \cP_n) 
            & \diag(\cQ_1, \ldots, \cQ_n) 
        \\
        -\diag(\cQ_1, \ldots, \cQ_n) 
            & \diag(\cP_1, \ldots, \cP_n)
    \end{bmatrix}.
\label{T11_cH}
\ee
To proceed further, we also define the $\mfg \otimes \mfg$-valued function
\be
    \Omega_{12} = -(h^{-1}_2 \Gamma_{12} + \Gamma_{12} h^{-1}_2) \cH_2.
\label{T11_Omega_12}
\ee
Remembering the form of $\Gamma_{12}$ (\ref{Gamma_12}), Leibniz rule yields
\be
    \Omega_{12} 
    = \frac{1}{\sqrt{2}} 
        \sum_{c = 1}^n 
            D^-_c 
            \otimes 
            \left( \cH^{-1} \PD{\cH}{\lambda_c} \right),
\label{T11_Omega_12_OK}
\ee
where for the derivatives we can easily find that
\be
    \cH^{-1} \PD{\cH}{\lambda_c}
    = \frac{\sqrt{2} \kappa}
        {\lambda_c \sqrt{\lambda_c^2 + \kappa^2}} X^{-, \ri}_{2 \veps_c}
    \qquad
    (c \in \bN_n).
\label{T11_cH_der}
\ee
Bearing in mind the above objects, from (\ref{tbsa_12}-\ref{tbsd_12}) 
one can derive that
\be
\begin{split}
    & h^{-1}_1 h^{-1}_2 
        (\tbsa_{12} + \tbsb_{12} - \tbsc_{12} - \tbsd_{12}) h_1 h_2 \\
    & \quad
    = \cH^{-1}_1 \cH^{-1}_2 \bsa_{12} \cH_1 \cH_2
        + \cH^{-1}_2 \bsb_{12} \cH_2
        - \cH^{-1}_1 \bsc_{12} \cH_1
        - \bsd_{12} \\
    & \qquad
        - \cH^{-1}_1 \Omega_{12} \cH_1
        + \cH^{-1}_2 \Omega_{21} \cH_2
        - \Omega_{12}
        + \Omega_{21}.
\end{split}
\label{T11_tbsabcd_and_cH}
\ee
To handle the right hand side of the above equation, we need the commutation
relations listed below. First, for each $c \in \bN_n$ we have
\begin{align}
    & \cH^{-1} D^+_c \cH = D^+_c,
    \label{CR_D_+} \\
    & \cH^{-1} D^-_c \cH 
    = (\cP^2_c - \cQ^2_c) D^-_c + 2 \ri \cP_c \cQ_c X^{+, \ri}_{2 \veps_c},
    \label{CR_D_-} \\
    & \cH^{-1} X^{+, \ri}_{2 \veps_c} \cH 
    = (\cP^2_c - \cQ^2_c) X^{+, \ri}_{2 \veps_c} 
        + 2 \ri \cP_c \cQ_c D^-_c,
    \label{CR_Xpi2} \\
    & \cH^{-1} X^{-, \ri}_{2 \veps_c} \cH = X^{-, \ri}_{2 \veps_c}.
    \label{CR_Xmi2} 
\end{align} 
Also, if $a, b \in \bN_n$ and $a < b$, then we can write
\begin{align}
    & \cH^{-1} X^{+, \rr}_{\veps_a - \veps_b} \cH
    = (\cP_a \cP_b + \cQ_a \cQ_b) X^{+, \rr}_{\veps_a - \veps_b}
        + \ri (\cP_a \cQ_b - \cP_b \cQ_a) X^{-, \ri}_{\veps_a + \veps_b},
    \label{CR_XprM} \\
    & \cH^{-1} X^{-, \rr}_{\veps_a - \veps_b} \cH
    = (\cP_a \cP_b - \cQ_a \cQ_b) X^{-, \rr}_{\veps_a - \veps_b}
        + \ri (\cP_a \cQ_b + \cP_b \cQ_a) X^{+, \ri}_{\veps_a + \veps_b},
    \label{CR_XmrM} \\
    & \cH^{-1} X^{+, \rr}_{\veps_a + \veps_b} \cH
    = (\cP_a \cP_b - \cQ_a \cQ_b) X^{+, \rr}_{\veps_a + \veps_b}
        - \ri (\cP_a \cQ_b + \cP_b \cQ_a) X^{-, \ri}_{\veps_a - \veps_b},
    \label{CR_XprP} \\
    & \cH^{-1} X^{-, \rr}_{\veps_a + \veps_b} \cH
    = (\cP_a \cP_b + \cQ_a \cQ_b) X^{-, \rr}_{\veps_a + \veps_b}
        - \ri (\cP_a \cQ_b - \cP_b \cQ_a) X^{+, \ri}_{\veps_a - \veps_b},
    \label{CR_XmrP}
\end{align}
together with the relations
\begin{align}
    & \cH^{-1} X^{+, \ri}_{\veps_a - \veps_b} \cH
    = (\cP_a \cP_b + \cQ_a \cQ_b) X^{+, \ri}_{\veps_a - \veps_b}
        - \ri (\cP_a \cQ_b - \cP_b \cQ_a) X^{-, \rr}_{\veps_a + \veps_b},
    \label{CR_XpiM} \\
    & \cH^{-1} X^{-, \ri}_{\veps_a - \veps_b} \cH
    = (\cP_a \cP_b - \cQ_a \cQ_b) X^{-, \ri}_{\veps_a - \veps_b}
        - \ri (\cP_a \cQ_b + \cP_b \cQ_a) X^{+, \rr}_{\veps_a + \veps_b},
    \label{CR_XmiM} \\
    & \cH^{-1} X^{+, \ri}_{\veps_a + \veps_b} \cH
    = (\cP_a \cP_b - \cQ_a \cQ_b) X^{+, \ri}_{\veps_a + \veps_b}
        + \ri (\cP_a \cQ_b + \cP_b \cQ_a) X^{-, \rr}_{\veps_a - \veps_b},
    \label{CR_XpiP} \\
    & \cH^{-1} X^{-, \ri}_{\veps_a + \veps_b} \cH
    = (\cP_a \cP_b + \cQ_a \cQ_b) X^{-, \ri}_{\veps_a + \veps_b}
        + \ri (\cP_a \cQ_b - \cP_b \cQ_a) X^{+, \rr}_{\veps_a - \veps_b}.
    \label{CR_XmiP}
\end{align}
Now, let us examine the first four terms appearing on the right hand side 
of (\ref{T11_tbsabcd_and_cH}). Recalling (\ref{bsa_12}-\ref{bsd_12}), the 
application of (\ref{CR_D_+}) itself yields the formula
\be
\begin{split}
    & \cH^{-1}_1 \cH^{-1}_2 \bsa_{12} \cH_1 \cH_2
        + \cH^{-1}_2 \bsb_{12} \cH_2
        - \cH^{-1}_1 \bsc_{12} \cH_1
        - \bsd_{12} \\
    & \quad
    = \sum_{\alpha, \eps}
        \frac{(\cH^{-1} X^{-, \eps}_\alpha \cH + X^{-, \eps}_\alpha) 
            \wedge
            (\cH^{-1} X^{+, \eps}_\alpha \cH - X^{+, \eps}_\alpha)}
        {\alpha(\lambda)}.
\end{split}
\label{T11_part_1_step_1}
\ee
However, in order to further simplify this expression, we still have to 
exploit some functional equations obeyed by $\cP$ and $\cQ$. By inspecting 
the definitions (\ref{T11_cP_and_cQ}), we see immediately that
\be
    \cP(x)^2 - \cQ(x)^2 = 1 + \frac{2 \kappa^2}{x^2}
    \qquad
    (x \in (0, \infty)).
\label{T11_funct_eq_1}
\ee
A slightly longer calculation also reveals that
\begin{align}
    & \frac{\cP(x)^2 \cP(y)^2 - (\cQ(x) \cQ(y) - 1)^2}{x - y}
    + \frac{\cP(x)^2 \cQ(y)^2 - \cP(y)^2 \cQ(x)^2}{x + y}
    = 0,
    \label{T11_funct_eq_2} \\
    & \frac{\cP(x)^2 \cQ(y)^2 - \cP(y)^2 \cQ(x)^2}{x - y}
    + \frac{\cP(x)^2 \cP(y)^2 - (\cQ(x) \cQ(y) + 1)^2}{x + y}
    = 0,
    \label{T11_funct_eq_3}
\end{align}
where $x, y \in (0, \infty)$ and $x \neq y$. Having equipped with the 
relations (\ref{T11_funct_eq_1}-\ref{T11_funct_eq_3}), let us note that the
application of the commutation relations (\ref{CR_D_+}-\ref{CR_XmiP}) does 
give rise to an even greater simplification in (\ref{T11_part_1_step_1}). 
Indeed, we find that
\be
\begin{split}
    & \sum_{\alpha, \eps}
        \frac{(\cH^{-1} X^{-, \eps}_\alpha \cH + X^{-, \eps}_\alpha) 
            \wedge
            (\cH^{-1} X^{+, \eps}_\alpha \cH - X^{+, \eps}_\alpha)}
        {\alpha(\lambda)} \\
    & \quad
    = \sum_{c = 1}^n 
        \left(
            \frac{2 \kappa^2}{\lambda^3_c} 
            X^{-, \ri}_{2 \veps_c} \wedge X^{+, \ri}_{2 \veps_c}
            + \frac{2 \kappa}{\lambda^2_c} 
                \sqrt{1 + \frac{\kappa^2}{\lambda^2_c}}
                D^-_c \wedge X^{-, \ri}_{2 \veps_c}
        \right).   
\end{split}
\label{T11_part_1_step_2}
\ee
Now, let us turn to the last four terms appearing on the right hand side of 
(\ref{T11_tbsabcd_and_cH}). Recalling (\ref{T11_Omega_12_OK}), 
(\ref{T11_cH_der}), and (\ref{T11_funct_eq_1}), we can write that
\be
    \cH^{-1}_1 \Omega_{12} \cH_1 + \Omega_{12}
    = \sum_{c = 1}^n
        \left(
            - \frac{2 \kappa^2}{\lambda_c^3}
                X^{+, \ri}_{2 \veps_c} \otimes X^{-, \ri}_{2 \veps_c}
            + \frac{2 \kappa}{\lambda^2_c} 
                \sqrt{1 + \frac{\kappa^2}{\lambda^2_c}}
                D^-_c \otimes X^{-, \ri}_{2 \veps_c}
        \right).
\label{T11_part_2_key}
\ee
Now, by plugging (\ref{T11_part_1_step_1}), (\ref{T11_part_1_step_2}) and 
(\ref{T11_part_2_key}) into (\ref{T11_tbsabcd_and_cH}), we obtain at once that
\be
    h^{-1}_1 h^{-1}_2 
        (\tbsa_{12} + \tbsb_{12} - \tbsc_{12} - \tbsd_{12}) 
        h_1 h_2
    = 0,
\label{T11_done}
\ee
whence the proof is complete.
\end{proof}

Having completed the proof, now we offer a few remarks on the result. 
First, since the Lax matrix $\tcA$ obeys the quadratic bracket 
(\ref{tcA_tcA_PB}) with the dynamical objects (\ref{tbsa_12}-\ref{tbsd_12}) 
satisfying the consistency conditions (\ref{tconsistency_conds}), the 
quadratic bracket (\ref{tcA_tcA_PB}) can be rewritten as
\be
    \{ \tcA_1, \tcA_2 \}^R
    = [\tilde{r}_{12}, \tcA_1] - [\tilde{r}_{21}, \tcA_2].
\label{tilde_r_bracket}
\ee
Indeed, recalling our discussion at the end of the previous section, an
appropriate $r$-matrix is provided by the formula
\be
    \tilde{r}_{12} = \tilde{p}^+_{12} \tcA_2 + \tcA_2 \tilde{p}^-_{12},
\label{tilde_r}
\ee
where
\be
    \tilde{p}^+_{12} = \frac{\tbsa_{12} + \tbsu_{12}}{2}
    \quad \text{and} \quad
    \tilde{p}^-_{12} 
        = \frac{\tbsd_{12} - \tbsb_{12} - \tbsc_{12} - \tbsu_{12}}{2},
\label{tilde_p_pm}
\ee
with an arbitrary $\mfg \vee \mfg$-valued dynamical object $\tbsu_{12}$.

Second, one may raise the objection that the formulae 
(\ref{tbsa_12}-\ref{tbsd_12}) for the quadratic structure matrices in the 
$BC_n$ case are `less explicit' than the analogous objects 
(\ref{bsa_12}-\ref{bsd_12}) in the $C_n$ case. The trouble is mainly caused 
by the derivatives of $h^{-1}$ appearing in the definition of $\Gamma_{12}$ 
(\ref{Gamma_12}). Though these derivatives can be worked out rather easily, 
we propose an alternative approach to cure the problem. Namely, let us apply 
the gauge transformation
\be
    \hcA(\lambda, \theta) 
    = h(\lambda) \tcA(\lambda, \theta) h(\lambda)^{-1}
    \qquad
    ((\lambda, \theta) \in \cP^R). 
\label{hcA}
\ee
on the Lax matrix $\tcA$ (\ref{tcA}). By applying the corresponding
transformation on $\tilde{r}$ (\ref{tilde_r}), it turns out that the 
transformed $r$-matrix takes the form
\be
    \hat{r}_{12} = \hat{p}^+_{12} \hcA_2 + \hcA_2 \hat{p}^-_{12},
\label{hat_r}
\ee
where
\be
    \hat{p}^+_{12} 
    = h_1 h_2 \tilde{p}^+_{12} h^{-1}_1 h^{-1}_2 + h_1 \Gamma_{21}
    \quad \text{and} \quad
    \hat{p}^-_{12} 
    = h_1 h_2 \tilde{p}^-_{12} h^{-1}_1 h^{-1}_2 
        + h_1 \cH_2 \Gamma_{21} \cH^{-1}_2.
\label{hat_p_pm}
\ee
Since $\hat{r}_{12}$ is linear in $\hcA$, the tensorial Poisson bracket for 
$\hcA$ can be cast into a quadratic form with structure matrices obeying 
the consistency conditions analogous to (\ref{tconsistency_conds}). To save 
time on the algebraic details, we present only the resulting formulae.

\begin{THEOREM}
\label{THEOREM_BC_n_transformed_r_matrices}
For the transformed Lax matrix $\hcA = \cA \cH^{-1}$ of the rational $BC_n$ 
RSvD model we have
\be
    \{ \hcA_1, \hcA_2 \}^R
    = \hbsa_{12} \hcA_1 \hcA_2 + \hcA_1 \hbsb_{12} \hcA_2 
        - \hcA_2 \hbsc_{12} \hcA_1 - \hcA_1 \hcA_2 \hbsd_{12},
\label{hcA_hcA_PB}
\ee
where the dynamical objects
\begin{align}
    & \hbsa_{12} = \bsa_{12}, 
    \label{hbsa_12} \\
    & \hbsb_{12} 
    = \cH_1 
    \left(
        \bsb_{12} 
        + \frac{1}{\sqrt{2}} 
            \sum_{c = 1}^n 
            \left( \cH^{-1} \PD{\cH}{\lambda_c} \right)
            \otimes
            D^-_c
    \right)
    \cH^{-1}_1,
    \label{hbsb_12} \\
    & \hbsc_{12} 
    = \cH_2 
    \left(
        \bsc_{12} 
        + \frac{1}{\sqrt{2}} 
            \sum_{c = 1}^n 
            D^-_c
            \otimes
            \left( \cH^{-1} \PD{\cH}{\lambda_c} \right)
        \right)
    \cH^{-1}_2,
    \label{hbsc_12} \\
    & \hbsd_{12} 
    = \cH_1 \cH_2 
    \left(
        \bsd_{12} 
        + \frac{1}{\sqrt{2}} 
            \sum_{c = 1}^n
            D^-_c
            \wedge 
            \left( \cH^{-1} \PD{\cH}{\lambda_c} \right)
    \right)
    \cH^{-1}_1 \cH^{-1}_2
    \label{hbsd_12}
\end{align}
are built up from the explicitly given functions (\ref{bsa_12}-\ref{bsd_12}), 
(\ref{T11_cH}) and (\ref{T11_cH_der}). Furthermore, by construction, they 
satisfy the consistency conditions
\be
    \hbsa_{21} = - \hbsa_{12},
    \quad
    \hbsd_{21} = - \hbsd_{12},
    \quad
    \hbsb_{21} = \hbsc_{12},
    \quad
    \hbsa_{12} + \hbsb_{12} = \hbsc_{12} + \hbsd_{12}.
\label{hconsistency_conds}
\ee
\end{THEOREM}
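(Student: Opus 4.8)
The plan is to build on Theorem \ref{THEOREM_BC_n_quadratic_PB} together with the converse algebraic fact recorded in (\ref{recovering_p_pm}): since the coefficients $\tbsa_{12}, \ldots, \tbsd_{12}$ obey the consistency conditions, the quadratic bracket for $\tcA$ is equivalent to the linear $r$-matrix bracket (\ref{tilde_r_bracket}), with an $r$-matrix (\ref{tilde_r}) that is itself linear in $\tcA$. As $\hcA = h \tcA h^{-1} = \cA \cH^{-1}$ is a gauge transform of $\tcA$ by the dynamical factor $h = h(\lambda)$, the induced bracket for $\hcA$ is again of linear $r$-matrix type, and the key structural point is that the transformed $r$-matrix $\hat r$ stays \emph{linear} in $\hcA$, of the form (\ref{hat_r}) with $\hat{p}^{\pm}_{12}$ as in (\ref{hat_p_pm}). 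Granting this, the converse fact of Section \ref{SECTION_C_r_matrix}, now applied to $\hcA$, immediately recasts the bracket in the quadratic shape (\ref{hcA_hcA_PB}) with coefficients $\hbsa_{12} = \hat{p}^{+}_{12} - \hat{p}^{+}_{21}$, $\hbsb_{12} = -\hat{p}^{+}_{12} - \hat{p}^{-}_{21}$, $\hbsc_{12} = -\hat{p}^{+}_{21} - \hat{p}^{-}_{12}$, $\hbsd_{12} = \hat{p}^{-}_{12} - \hat{p}^{-}_{21}$, exactly as in (\ref{T10_p_pm_and_bsabcd}); the consistency conditions (\ref{hconsistency_conds}) are then inherited for free, with no analogue of the delicate functional-equation computation needed in the proof of Theorem \ref{THEOREM_BC_n_quadratic_PB}.

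To read off the explicit coefficients I would, however, bypass $\hat r$ and compute the bracket directly. Substituting $\cA = \hcA \cH$ into the $C_n$ quadratic bracket (\ref{C_n_quadratic_PB}) and expanding $\{ \hcA_1, \hcA_2 \}^R = \{ \cA_1 \cH^{-1}_1, \cA_2 \cH^{-1}_2 \}^R$ by the Leibniz rule, the term carrying $\{ \cH^{-1}_1, \cH^{-1}_2 \}^R$ drops out because $\cH = h^2$ depends on $\lambda$ alone. What survives is the pure contribution $\{ \cA_1, \cA_2 \}^R \cH^{-1}_1 \cH^{-1}_2$ together with the two cross contributions built from $\{ \cA_1, \cH^{-1}_2 \}^R$ and $\{ \cH^{-1}_1, \cA_2 \}^R$. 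In the pure contribution the four monomials of (\ref{C_n_quadratic_PB}) reassemble after commuting the slot-one and slot-two factors past one another; the cancellation $\cH_1 \cH_2 \cH^{-1}_1 \cH^{-1}_2 = \bsone \otimes \bsone$ collapses the $\bsa$-monomial so that $\hbsa_{12} = \bsa_{12}$ with no correction, while the $\bsb$-, $\bsc$- and $\bsd$-monomials acquire the conjugations by $\cH_1$, $\cH_2$ and $\cH_1 \cH_2$ displayed in (\ref{hbsb_12}-\ref{hbsd_12}).

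The two cross contributions are precisely where the $\kappa$-dependent corrections enter. They are evaluated using the basic relation (\ref{A_h_inv_PB}) — equivalently the fact, encoded in (\ref{cA_theta_der}) and (\ref{Gamma_12}), that the $\theta$-derivative of $\cA$ is $\sqrt{2}(D^-_c \cA + \cA D^-_c)$ — together with the $\cH$-level identity $\PD{(\cH^{-1})}{\lambda_c} = - \cH^{-1} \left( \PD{\cH}{\lambda_c} \right) \cH^{-1}$. After substituting $\cA = \hcA \cH$ and absorbing the surrounding $\cH^{\pm 1}$ factors, these terms deliver exactly the $\frac{1}{\sqrt{2}} \sum_c \cH^{-1} \PD{\cH}{\lambda_c}$-pieces appended to $\bsb_{12}$, $\bsc_{12}$ and $\bsd_{12}$ inside (\ref{hbsb_12}-\ref{hbsd_12}): the cross term feeding the $\hcA_1 (\cdots) \hcA_2$ structure lands in slot one of $\hbsb_{12}$, the one feeding $\hcA_2 (\cdots) \hcA_1$ lands in slot two of $\hbsc_{12}$, and both feed the $\hcA_1 \hcA_2 (\cdots)$ structure, combining into the antisymmetrized $D^-_c \wedge \cH^{-1} \PD{\cH}{\lambda_c}$ of $\hbsd_{12}$. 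The explicit form of $\cH^{-1} \PD{\cH}{\lambda_c}$ from (\ref{T11_cH_der}) then makes all coefficients fully explicit.

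I expect the only real difficulty to be bookkeeping: tracking which of the two tensor legs each factor of $\cH^{\pm 1}$, $D^-_c$ and $\cH^{-1} \PD{\cH}{\lambda_c}$ occupies as the matrix products are reordered, and verifying that the two cross terms assemble into exactly the symmetric combinations in $\hbsb_{12}$, $\hbsc_{12}$ and the antisymmetric combination in $\hbsd_{12}$, with no stray residue. Conceptually nothing new is required beyond the gauge-covariance of the linear-$r$ structure, which is what guarantees in advance that $\hat r$ is linear in $\hcA$ and hence that the result is a genuine quadratic bracket with consistent coefficients.
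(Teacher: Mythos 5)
Your proposal is correct, and its structural half coincides with the paper's own (very terse) argument: the paper derives the consistency conditions (\ref{hconsistency_conds}) exactly as in your first paragraph, by noting that the gauge-transformed $r$-matrix $\hat{r}_{12}$ of (\ref{hat_r}) is still linear in $\hcA$ and invoking the converse fact (\ref{recovering_p_pm})/(\ref{T10_p_pm_and_bsabcd}), and then it simply states the coefficient formulae, explicitly omitting the algebraic details. Where you genuinely differ is in how you obtain those explicit coefficients: instead of transporting $\tilde{p}^{\pm}_{12} \mapsto \hat{p}^{\pm}_{12}$ via (\ref{hat_p_pm}) and reassembling, you substitute $\cA = \hcA \cH$ into the $C_n$ bracket (\ref{C_n_quadratic_PB}) and Leibniz-expand, using $\{ \cA_1, \cH^{-1}_2 \}^R = \Omega'_{12} \cA_1 + \cA_1 \Omega'_{12}$ with $\Omega'_{12} = \frac{1}{\sqrt{2}} \sum_{c} D^-_c \otimes \partial_{\lambda_c}(\cH^{-1}) = -\frac{1}{\sqrt{2}} \sum_{c} D^-_c \otimes \cH^{-1} \left( \PD{\cH}{\lambda_c} \right) \cH^{-1}$. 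I checked this expansion: the four monomials of (\ref{C_n_quadratic_PB}) produce $\bsa_{12}$ undecorated and the $\cH_1$-, $\cH_2$- and $\cH_1\cH_2$-conjugates of $\bsb_{12}$, $\bsc_{12}$, $\bsd_{12}$, while the two cross terms contribute $\cH_1 \bigl( \tfrac{1}{\sqrt{2}} \sum_c (\cH^{-1} \PD{\cH}{\lambda_c}) \otimes D^-_c \bigr) \cH^{-1}_1$ to $\hbsb_{12}$, the slot-swapped analogue to $\hbsc_{12}$, and their difference (the wedge) to $\hbsd_{12}$, which is precisely (\ref{hbsb_12})--(\ref{hbsd_12}). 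This route has the advantage of bypassing $\tcA$ and $\Gamma_{12}$ entirely and making the $\kappa$-corrections appear in one step. One caution: the direct Leibniz computation by itself yields the quadratic bracket and lets you read off the first three conditions in (\ref{hconsistency_conds}) from (\ref{consistency_conds}) and the (anti)symmetry of the correction terms, but the fourth condition $\hbsa_{12} + \hbsb_{12} = \hbsc_{12} + \hbsd_{12}$ is not visible from it — it would require redoing the functional-equation work of Theorem \ref{THEOREM_BC_n_quadratic_PB}. So the linear-$\hat{r}$ argument of your first paragraph (equivalently, transporting the already-proved identity $\tbsa_{12} + \tbsb_{12} = \tbsc_{12} + \tbsd_{12}$ through the gauge transformation) is not merely conceptual garnish; it is the part of your proof that actually delivers that last condition, and both halves of your argument are needed.
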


As an immediate consequence of the $r$-matrix formalism, now we can easily
construct a Lax pair for the rational $BC_n$ RSvD system. For this reason, 
let us recall the partial trace operation on the second factor of 
$\mfgl(N, \bC) \otimes \mfgl(N, \bC)$, which is uniquely determined by the 
condition
\be
    \tr_2 (X \otimes Y) = \tr(Y) X 
    \qquad
    (X, Y \in \mfgl(N, \bC)).
\label{tr_2}
\ee
As we proved in \cite{Pusztai_NPB2012}, for the Hamiltonian of the rational
$BC_n$ RSvD model we can write
\be
    H^R = \half \tr (\tcA) = \half \tr (\hcA).
\label{H_R_and_hcA}
\ee
Therefore, by expanding the Lax matrix $\hcA$ (\ref{hcA}) in an arbitrary 
basis of the complex linear space $\mfgl(N, \bC)$, say in the basis 
$\{ v_I \}$ (\ref{v_I}), we can write
\be
    H^R 
    = \half \tr \left( \sum_J \hcA^J v_J \right) 
    = \half \sum_J \hcA^J \tr(v_J). 
\label{H_R_and_expansion}
\ee
Making use of this expansion, for the action of the Hamiltonian vector field 
$\bsX_{H^R} \in \mfX(\cP^R)$ on the Lax operator $\hcA$ we obtain
\be
\begin{split}
    & \bsX_{H^R}[\hcA] 
    = \sum_I \bsX_{H^R}[\hcA^I] v_I 
    = \sum_I \{ \hcA^I, H^R \}^R v_I \\
    & \quad
    = \half \sum_{I, J} \{ \hcA^I, \hcA^J \}^R \tr(v_J) v_I
    = \half \tr_2 
        \left( 
            \sum_{I, J} \{ \hcA^I, \hcA^J \}^R v_I \otimes v_J 
        \right) \\
    & \quad = \half \tr_2 
        \left(
            [\hat{r}_{12}, \hcA_1] - [\hat{r}_{21}, \hcA_2]
        \right)
    = \half [ \tr_2(\hat{r}_{12}), \hcA ].
\end{split}
\label{mfX_H_R_on_hcA}
\ee
Since for $\hat{p}^\pm_{12}$ (\ref{hat_p_pm}) we can write
\be
    \hat{p}^+_{12} = \frac{\hbsa_{12} + \hbsu_{12}}{2}
    \quad \text{and} \quad
    \hat{p}^-_{12} 
        = \frac{\hbsd_{12} - \hbsb_{12} - \hbsc_{12} - \hbsu_{12}}{2},
\label{hat_p_pm_and_hbsabcdu}
\ee
where $\hbsu_{12}$ is a $\mfg \vee \mfg$-valued function, the application 
of (\ref{hat_r}) and (\ref{hconsistency_conds}) yields that
\be
    \tr_2(\hat{r}_{12}) = \tr_2((\hbsa_{12} - \hbsc_{12}) \hcA_2).
\label{tr_2_hat_r_12}
\ee
Upon introducing the alternative Lax matrix
\be
    \ccA = \cH^{-1} \hcA \cH = \cH^{-1} \cA,
\label{ccA}
\ee 
the combination of the formulae (\ref{hbsa_12}), (\ref{hbsc_12}), 
(\ref{mfX_H_R_on_hcA}) and (\ref{tr_2_hat_r_12}), together with 
the explicit expressions (\ref{bsa_12}) and (\ref{bsc_12}), leads 
to the following result.

\begin{THEOREM}
\label{THEOREM_Lax_pair}
With the aid of the $\mfg$-valued function
\be
\begin{split}
    \hcB 
    & = \half \sum_{\alpha, \eps} 
        \frac{\tr(X^{+, \eps}_\alpha (\hcA - \ccA)) X^{-, \eps}_\alpha
            - \tr(X^{-, \eps}_\alpha (\hcA + \ccA)) X^{+, \eps}_\alpha}
        {\alpha(\lambda)} \\
    & \quad
        + \half \sum_{c = 1}^n 
            \tr(S_c (\hcA + \ccA) + T_c (\hcA - \ccA) ) D^+_c
        - \frac{\kappa}{2} \sum_{c = 1}^n 
            \frac{\tr(X^{-, \ri}_{2 \veps_c} \ccA)}
            {\lambda_c \sqrt{\lambda^2_c + \kappa^2}} D^-_c
\end{split}
\label{cB}
\ee
the derivative of the Lax matrix $\hcA$ (\ref{hcA}) along the Hamiltonian
vector field $\bsX_{H^R}$ takes the Lax form 
$\bsX_{H^R} [\hcA] = [\hcB, \hcA]$. In other words, $\hcB$ provides a Lax 
pair for $\hcA$.
\end{THEOREM}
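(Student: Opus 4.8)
The plan is to establish the single identity $\half\tr_2(\hat{r}_{12})=\hcB$, with $\hcB$ the matrix displayed in (\ref{cB}); once this is in hand, the Lax form $\bsX_{H^R}[\hcA]=[\hcB,\hcA]$ follows immediately from the relation $\bsX_{H^R}[\hcA]=\half[\tr_2(\hat{r}_{12}),\hcA]$ already derived in (\ref{mfX_H_R_on_hcA}). By (\ref{tr_2_hat_r_12}) the task reduces to evaluating the partial trace $\tr_2\big((\hbsa_{12}-\hbsc_{12})\hcA_2\big)$, and here I would insert $\hbsa_{12}=\bsa_{12}$ from (\ref{hbsa_12}) together with the gauge-conjugated expression for $\hbsc_{12}$ from (\ref{hbsc_12}).

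The computational engine is a short dictionary for $\tr_2$. For an elementary tensor $P\otimes Q$ one has $\tr_2\big((P\otimes Q)\hcA_2\big)=\tr(Q\hcA)\,P$, so each summand of $\bsa_{12}$ contributes its first leg weighted by the trace of its second leg against $\hcA$. The decisive observation is that the conjugation $\cH_2(\,\cdot\,)\cH^{-1}_2$ occurring in $\hbsc_{12}$ acts on the second tensor slot only, whence $\tr_2\big(\cH_2(P\otimes Q)\cH^{-1}_2\hcA_2\big)=\tr(Q\,\cH^{-1}\hcA\cH)\,P=\tr(Q\ccA)\,P$ by cyclicity of the trace and the definition $\ccA=\cH^{-1}\hcA\cH=\cH^{-1}\cA$ from (\ref{ccA}). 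In other words, the $\hbsc_{12}$ half of the partial trace is computed by the same recipe as the $\bsa_{12}$ half, but with $\hcA$ replaced throughout by $\ccA$.

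Collecting the outcome according to the surviving first leg then reproduces (\ref{cB}) line by line. The off-diagonal legs $X^{-,\eps}_\alpha$ and $X^{+,\eps}_\alpha$, fed by the $\alpha(\lambda)^{-1}X^{-,\eps}_\alpha\wedge X^{+,\eps}_\alpha$ part of $\bsa_{12}$ and the $\alpha(\lambda)^{-1}X^{-,\eps}_\alpha\vee X^{+,\eps}_\alpha$ part of $\bsc_{12}$, combine into the $\tr(X^{+,\eps}_\alpha(\hcA-\ccA))$ and $\tr(X^{-,\eps}_\alpha(\hcA+\ccA))$ contributions of the first line. The diagonal leg $D^+_a$ collects the $S_a$ and $T_a$ pieces, which organize into $\tr(S_a(\hcA+\ccA))+\tr(T_a(\hcA-\ccA))$. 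The genuinely new leg $D^-_c$ originates solely from the extra summand $\frac{1}{\sqrt{2}}\sum_c D^-_c\otimes\big(\cH^{-1}\PD{\cH}{\lambda_c}\big)$ of (\ref{hbsc_12}); by (\ref{T11_cH_der}) this equals $\sum_c\kappa\,\big(\lambda_c\sqrt{\lambda_c^2+\kappa^2}\big)^{-1}D^-_c\otimes X^{-,\ri}_{2\veps_c}$ and hence delivers the final, $\ccA$-dependent term of (\ref{cB}).

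The step I expect to be the real crux --- and the reason $\hcB$ comes out as clean as (\ref{cB}) rather than carrying extra clutter --- is the vanishing of two families of potentially troublesome contributions: the $\varPsi_{a,b}D^+_a\wedge D^+_b$ pieces and the summands $(S_a+T_a)\otimes D^+_a$, whose first leg is not a single basis vector. I would eliminate both at once through the identity $\tr(D^+_a\hcA)=\tr(D^+_a\ccA)$, valid because $D^+_a$ commutes with $\cH$ by (\ref{CR_D_+}), so that $\tr(D^+_a\ccA)=\tr(D^+_a\cH^{-1}\hcA\cH)=\tr(\cH D^+_a\cH^{-1}\hcA)=\tr(D^+_a\hcA)$. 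Since these terms enter the $\bsa_{12}$ and $\bsc_{12}$ halves with equal trace weights but opposite overall signs, they cancel identically, leaving exactly the three families of terms assembled in (\ref{cB}). Everything past this point is routine bookkeeping in the bases $\{D^\pm_c\}$ and $\{X^{\pm,\eps}_\alpha\}$ of (\ref{v_I}).
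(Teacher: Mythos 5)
Your proposal is correct and follows exactly the route the paper takes (the paper merely asserts that the result ``follows from the combination of (\ref{hbsa_12}), (\ref{hbsc_12}), (\ref{mfX_H_R_on_hcA}) and (\ref{tr_2_hat_r_12}) together with (\ref{bsa_12}) and (\ref{bsc_12})'', and you have supplied the omitted bookkeeping). In particular, your key cancellation via $\tr(D^+_a \hcA)=\tr(D^+_a \ccA)$, which rests on (\ref{CR_D_+}), and the replacement $\hcA\mapsto\ccA$ in the second-slot traces of $\hbsc_{12}$ are precisely what make the stated formula for $\hcB$ come out.
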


\section{Discussion}
\label{SECTION_Discussion}
\setcounter{equation}{0}
One of the most important objects in the algebraic formulation of the 
theory of classical integrable systems is undoubtedly the $r$-matrix 
structure encoding the tensorial Poisson bracket of the Lax matrix. In 
the context of the $A_n$-type CMS and RSvD models the underlying dynamical 
$r$-matrix structure is under complete control, even in the elliptic case 
(see e.g. \cite{Avan_Talon_1993, Sklyanin_1994, Braden_Suzuki_1994, 
Nijhoff_et_al_1996}). In sharp contrast, for the models associated with 
the non-$A_n$-type root systems the theory is far less developed. By 
generalizing the ideas of Avan, Babelon and Talon 
\cite{Avan_Babelon_Talon_AA1994}, in our earlier paper \cite{Pusztai_JMP2012}
we constructed a dynamical $r$-matrix structure for the most general 
hyperbolic $BC_n$ Sutherland system with three independent coupling 
constants. However, for the elliptic case only partial results are available
\cite{Forger_Winterhalder_2002}. For the non-$A_n$-type RSvD systems
the situation is even more delicate. Prior to our present paper, the
$r$-matrix structure of the $BC_n$ RSvD systems was studied only in 
\cite{Avan_Rollet_2002}, based on the special one-parameter family of Lax 
matrices coming from $\bZ_2$-folding of the $A_{2 n}$ root system. 
Nevertheless, in the present paper we succeeded in constructing a 
quadratic $r$-matrix structure for the rational $BC_n$ RSvD systems with 
the maximal number of three coupling parameters, as formulated in Theorems 
\ref{THEOREM_BC_n_quadratic_PB} and \ref{THEOREM_BC_n_transformed_r_matrices}. 
It is also clear that by applying a standard analytic continuation argument 
on our formulae, one can easily derive a dynamical $r$-matrix structure for 
the rational RSvD system appearing in \cite{Feher_Gorbe_2014}. 

Regarding the hyperbolic, trigonometric and elliptic variants of the 
non-$A_n$-type RSvD systems we also face many interesting questions. Indeed, 
except from some very special cases \cite{Chen_et_al_JMP2000, 
Chen_et_al_JMP2001, Chen_Hou_2001}, even the construction of Lax matrices for
these models is a wide open problem. However, let us note that in the last 
couple of years many results for the $A_n$-type models have been reinterpreted 
in a more geometric context using advanced techniques from the theory of 
reductions (see e.g. \cite{Feher_Klimcik_CMP2011, Feher_Klimcik_NPB2012}). 
Relatedly, it would be of considerable interest to see whether the underlying
classical $r$-matrix structures can be explored from these geometric pictures 
along the line of our present paper. We also expect that the various reduction 
approaches eventually may lead to a progress in the rigorous geometric theory 
of the non-$A_n$-type trigonometric, hyperbolic and elliptic RSvD systems as 
well. As a starting point, it is worth mentioning the recent paper 
\cite{Marshall_CMP2015}, in which a Hamiltonian reduction approach based on 
the Heisenberg double of $SU(n, n)$ gives rise to a new integrable particle 
system, that in the cotangent bundle limit gives back the familiar hyperbolic 
$BC_n$ Sutherland model with three independent coupling parameters.

Turning back to our quadratic $r$-matrix algebra (\ref{hcA_hcA_PB}), let us 
observe that the structure matrices $\hbsa_{12}$, $\hbsb_{12}$, $\hbsc_{12}$ 
and $\hbsd_{12}$ are fully dynamical, i.e. they depend on all variables of 
the phase space $\cP^R$ (\ref{cP_R}) in an essential way. It is in contrast 
with the CMS models, where the naturally appearing dynamical $r$-matrices 
usually depend only on the configuration space variables. Moreover, in many
variants of the CMS models the $r$-matrices can be related to the dynamical 
Yang--Baxter equation, as first realized in \cite{Avan_Babelon_Billey_1996}. 
However, in the $A_n$ case Suris \cite{Suris_1997} observed that in some 
special choice of gauge the CMS and the RSvD models can be characterized by 
the same dynamical $r$-matrices. Working in this gauge, Nagy, Avan and Rollet 
proved that the quadratic structure matrices of the hyperbolic $A_n$ RSvD 
system do obey certain dynamical quadratic Yang--Baxter equations (see 
Proposition 1 in \cite{Nagy_Avan_Rollet}, and relatedly also 
\cite{Arutyunov_Chekov_Frolov}). As a natural next step, we find it an 
important question whether such claims can be made about the quadratic 
algebra relation (\ref{hcA_hcA_PB}) in an appropriate gauge. Also, it would 
be of considerable interest to investigate whether the non-$A_n$-type RSvD 
models can be characterized by numerical, i.e. non-dynamical $r$-matrices. 
In the $A_n$ case the answer is in the affirmative (see \cite{Hou_Wang_2000}), 
but in the $BC_n$ case the analogous tasks seem to be quite challenging even 
for the rational models. Nevertheless, we wish to come back to these problems 
in later publications.

\medskip
\noindent
\textbf{Acknowledgments.}
Our work was supported by the J\'anos Bolyai Research Scholarship of the 
Hungarian Academy of Sciences. This work was also supported by a Lend\"ulet 
Grant; we wish to thank Z.~Bajnok for hospitality in the MTA Lend\"ulet 
Holographic QFT Group.


\end{document}